\newcommand{\F}{\mathbb{F}}
\newcommand{\Z}{\mathbb{Z}}
\newcommand{\PP}{\mathbb{P}}
\newcommand{\sC}{\mathsf{C}}
\newcommand{\sR}{\mathsf{R}}
\newcommand{\sE}{\mathsf{E}}
\newcommand{\Ga}{\alpha}
\newcommand{\Gl}{\lambda}
\newcommand{\ep}{\epsilon}
\newcommand{\cR}{\mathcal{R}}
\def \H {{\mathcal H}}
\newcommand{\calO}{\mathcal{O}}
\newcommand{\calP}{\mathcal{P}}
\newcommand{\sP}{\mathscr{P}}
\newcommand{\cB}{\mathcal{B}}
\newcommand{\tcB}{\tilde{\mathcal{B}}}
\newcommand{\tO}{\tilde{O}}
\def\Gd{{\delta}}
\def\cP{{\mathcal P}}
\def\fP{{\mathfrak P}}
\def\fp{{\mathfrak p}}
\def\fb{{\mathfrak b}}
\def\fl{{\mathfrak l}}
\def\fg{{\mathfrak g}}
\newcommand{\cL}{\mathcal{L}}
\newcommand{\PGL}{\mathrm{PGL}}
\newcommand{\GS}{\mathrm{GS}}
\newcommand{\dist}{\mathrm{dist}}
\def\bv{{\bf v}}
\def\bu{{\bf u}}
\def\rRS {\mbox{Reed-Solomon\ }}
\def\rAG {\mbox{algebraic geometry\ }}
\DeclareMathOperator{\Gal}{\mathrm{Gal}}
\DeclareMathOperator{\Tr}{\mathrm{Tr}}
\DeclareMathOperator{\Aut}{\mathrm{Aut}}
\DeclareMathOperator{\Char}{\mathrm{Char}}
\DeclareMathOperator{\Ker}{\mathrm{ker}}
\DeclareMathOperator{\GFFT}{\mathrm{G}{\text{-}}\mathrm{FFT}}
\DeclareMathOperator{\GIFFT}{\mathrm{G}{\text{-}}\mathrm{IFFT}}
\DeclareMathOperator{\ev}{\mathrm{ev}}
\DeclareMathOperator{\RS}{\mathrm{RS}}
\DeclareMathOperator{\AG}{\mathrm{AG}}
\DeclareMathOperator{\IRS}{\mathrm{IRS}}
\DeclareMathOperator{\bx}{\mathbf{x}}
\DeclareMathOperator{\by}{\mathbf{y}}
\DeclareMathOperator{\be}{\mathbf{e}}
\DeclareMathOperator{\bi}{\mathbf{i}}
\DeclareMathOperator{\bj}{\mathbf{j}}
\DeclareMathOperator{\br}{\mathbf{r}}
\DeclareMathOperator{\bc}{\mathbf{c}}
\DeclareMathOperator{\bw}{\mathbf{w}}
\def\fC{\mathfrak{C}}
\def\x{\times}
\newtheorem{theorem}{Theorem}[section]
\newtheorem{lemma}[theorem]{Lemma}
\newtheorem{corollary}[theorem]{Corollary}
\newtheorem{Maintheorem}{\bf Main Theorem}
\theoremstyle{definition}
\newtheorem{definition}[theorem]{Definition}
\newtheorem{example}[theorem]{Example}
\newtheorem{remark}[theorem]{Remark}
\newtheorem{problem}[theorem]{Problem}
\numberwithin{equation}{subsection}
\begin{document}
\title[Quasi-linear time decoding]{Quasi-linear time decoding of RS and AG codes for burst errors up to the Singleton bound} 

\author{Songsong Li, Shu Liu, Liming Ma, Yunqi Wan, Chaoping Xing}
\address{}
\email{}

\subjclass[]{}
\keywords{}
		
\maketitle	
\begin{abstract}
Despite of tremendous research on decoding Reed-Solomon (RS) and algebraic geometry (AG) codes under the random and adversary substitution error models, few studies have explored these codes under the burst substitution error model. Burst errors are prevalent in many communication channels, such as wireless networks, magnetic recording systems, and flash memory. Compared to random and adversarial errors, burst errors often allow for the design of more efficient decoding algorithms. However, achieving both an optimal decoding radius and quasi-linear time complexity for burst error correction remains a significant challenge.. 
The goal of this paper is to design (both list and probabilistic unique) decoding algorithms for RS and AG codes that achieve the Singleton bound for decoding radius while maintaining quasi-linear time complexity. 

Our idea is to build a one-to-one correspondence between AG codes (including RS codes) and interleaved RS codes with shorter code lengths (or even constant lengths). By decoding the interleaved RS codes with burst errors, we derive efficient decoding algorithms for RS and AG codes. For decoding interleaved  RS codes with shorter code lengths, we can employ either the naive methods or existing algorithms. This one-to-one correspondence is constructed using the generalized fast Fourier transform (G-FFT) proposed by Li and Xing  (SODA 2024). The G-FFT generalizes the divide-and-conquer technique from polynomials to algebraic function fields. More precisely speaking, assume that our AG code is defined over a function field $E$ which has a sequence of subfields $\mathbb{F}_q(x)=E_r\subseteq E_{r-1}\subseteq \cdots\subset E_1\subseteq E_0=E$ such that $E_{i-1}/E_i$ are Galois extensions for $1\le i\le r$. Then the AG code based on $E$ can be transformed into an interleaved RS code over the rational function field $\mathbb{F}_q(x)$.
\end{abstract}

\section{Introduction}

There are three types of substitution errors in coding theory, namely, random  substitution errors, adversary substitution errors and burst substitution errors. This paper focuses on burst substitution errors. Let us simply call burst  errors instead of burst substitution errors.
 In some communication channels such as wireless, magnetic recording and  flash channels, errors caused by physical damage tend to be localized in a consecutive interval. 
 Generally, decoding of burst errors is easier than the decoding of other two types of errors. However, in terms of decodability, error-correcting codes have the same performance for both adversary  errors and burst  errors. More precisely speaking, for a fixed rate, the largest decoding (both unique and list decoding) radii of codes are equal for adversary  errors and burst  errors. For unique decoding, a code can correct errors up to half of the minimum distance for both adversary  errors and burst  errors~\cite{Xing04}. For list decoding,  a code can correct errors up to the Singleton bound for adversary errors as well as burst  errors~\cite{gur19essential}.

Reed-Solomon (RS) codes are a fundamental class of algebraic error-correcting codes that have been extensively studied in theoretical computer science. Owing to their elegant algebraic structures and the availability of efficient algorithms such as the fast Fourier transform (FFT), RS codes exhibit strong error-correction capabilities and support efficient decoding algorithms \cite{jus76,gao03,lin16novel}. Significant research has been devoted to understanding the decodability properties of RS codes \cite{sud97,GSlist} and their variants, including folded RS codes \cite{GW13,Kopparty23,Tamo24,Goy24} and interleaved RS codes \cite{GU08IRS}. Notably, adversarial errors in folded RS codes with a folding length \( m \) correspond to phased burst errors of length \( m \) in the unfolded RS codes, with single burst errors representing a special case.

 
Algebraic geometry (AG) codes are constructed from algebraic function fields over finite fields, with RS codes serving as a special case derived from rational function fields. By utilizing function fields with many rational points, AG codes enable the construction of linear codes over an alphabet whose size is smaller than the block length, which is a main constraint in RS codes. There also has been a lot of work concerned with decoding properties of AG codes \cite{sho99list,GSlist, GX22,beelen2022fast} or folded AG codes \cite{GX12foldedAG, GXY17} to reduce alphabet size in the realm of list decoding. However, decoding AG codes is much more complex and also expensive than RS codes due to the involvement with algebraic function fields.

In this work, we propose a novel approach to investigate the decoding properties of AG codes under the burst error model and develop fast decoding algorithms via fast Fourier transform, departing from the established line of research on list decoding of folded AG or Reed-Solomon (RS) codes. Since RS codes are a special case of AG codes, our findings are also applicable to RS codes.

\subsection{Related work}
Let us start by defining burst errors. A nonzero vector $\be\in\F_q^n$ is called a burst of length $\fl$ if the difference between the first nonzero position $i$ and the last nonzero position $j$ of $\be$ is less than $\fl$, i.e., $j-i<\fl$. 

For a code $C\subset \F_q^n$ of length $n$ defined over a finite field $\F_q$, let $d$ be its minimum distance and $\cR=\log_q |C|/n$ be the code rate. The Singleton bound \cite{joshi58} says that
\[d-1\leq n(1-\cR).\]

\subsubsection{List decoding of burst errors}
 Given a real number $0<\rho<1$ and a positive integer $L>0$,
a code $C\subset\F_q^n$ is said to be $(\rho, L)$-burst list-decodable if for any received word $\br\in\F_q^n$, there are at most $L$ codewords $\bc\in C$ such that $\br-\bc$ is a burst of length $\rho n$. Generally, without constraint on burst errors, if there are at most $L$ codewords within a Hamming radius $\rho n$ from $\br$ in the adversary error model, then $C$ is called $(\rho, L)$-list decodable. For an \rRS code of length $n$ and dimension $k$, we can design the following naive burst-list decoding algorithm. Assume that for a received word $\br$, there is a burst error of length $\fl$ less than or equal to $n-k$. We erase any $n-k$ consecutive positions of $\br$ and then interpolate a polynomial of degree at most $k-1$ from the left $k$ values of $\br$. As there are at most $k+1$ such consecutive positions, the list size  $L$ is $k+1$.  As the univariate interpolation costs $\tilde{O}(n)$\footnote{The notion $\tO(n)=O(n\log^{c} n)$ for some constant integer $c>0$, which denotes quasi linear complexity.} operations \cite{von13}, the total list decoding complexity is $\tilde{O}(n^2)$. 
Roth and Vontobel \cite{roth09} showed that the list decoding radius of an $[n,k]$-linear code for burst errors is bounded by the Reiger bound, i.e., the list decoding radius obeys
\[\fl\le \frac L{L+1}(n-k),\]
where $L$ is the list size. the Reiger bound is asymptotically equal to  the Singleton bound. Let us call it the Singleton bound as well. 
Furthermore, they also showed that \rRS codes can be burst list decoded up to Singleton bound in polynomial time. 
Later, Kolan and Roth \cite{kol13burst} proved that the interleaved \rRS codes (a variant of \rRS code) can be list-decoded with burst errors while attaining the Reiger bound. In \cite{Ding15}, Ding showed that a random code $C\subset \F_q^n$ is $(1-\cR-\ep, O(1/\ep))$-burst list-decodable and the random linear code is $(1-\cR-\ep, O(n))$-burst list-decodable. 

Under adversary error model, Sudan and Guruswami \cite{GSlist} showed that an rate $\cR$ $\RS$ code is $(1-\sqrt{\cR}-\ep,O(1/\ep))$-list-decodable with decoding complexity $O(n^3)$. To improve the decoding radius, Guruswami and Wang \cite{GW13} showed that rate $\cR$ folded RS codes are $(1-\cR-\epsilon,q^{1/\epsilon})$-list decodable. After a series of work \cite{Kopparty23,Tamo24,Goy24}, Chen and Zhang recently \cite{chen24} reduced the output list size to $O(1/\epsilon)$. These works are concerned with existential results or randomized algorithms.  
It is a great challenge to design deterministic list decoding algorithms for radius up to Singleton bound $1-{\cR}-\ep$ with optimal output list size $O(1/\ep))$ as well as optimal quasi-linear time $O_{\epsilon}(n\log n)$.

 As for general AG codes, there are also several works concerned with the list decoding of (folded) AG codes under adversarial model \cite{GSlist, GXY17,GX22,ding14erasure}. However, we are not aware of any work on list decoding for single-burst substitution errors. 
Thus we raise the following problem by asking both decoding radius and time complexity as well.

\begin{problem}\label{prob:p3}
{\it Are AG (including RS) codes  $(1-\cR-\ep, O(1/\ep))$-burst list decodable with deterministic quasi-linear time decoding complexity $O_{\epsilon}(n\log n)$?}
\end{problem}

\subsubsection{Unique decoding of burst errors}
It is well known that the unique decoding radius for adversary errors is half of the minimum distance. This is also true for uniquely decoding burst errors of Reed-Solomon codes \cite{Xing04}. In order to decode beyond half of the minimum distance, one has to seek probabilistic decoding algorithms, i.e., correcting errors with high probability and  miscorrecting errors with small probability. In the case of adversary errors, power decoding was introduced for Reed-Solomon \cite{CD09,10synd,17pd,PD18} and algebraic geometry codes \cite{nie15,puc19improved,PDAG21}. Unfortunately, the miscorrection probability analysis of power decoding is not an easy task. Most of the power decoding algorithms do not have failure probability analysis. Instead, these power decoding algorithms usually provide simulation results. We would like to emphasize that so far there are no power decoding algorithms of both \rRS and \rAG codes achieving the Singleton bound in quasi-linear time. 

Similar to decoding of adversary errors beyond half of the minimum distance, one has to design probabilistic decoding algorithms in order to correct burst errors beyond half of the minimum distance. Compared with power decoding for adversary errors, it is relatively easier to analyze the failure probability for  probabilistic algorithms. 
For  probabilistic unique decoding of \rRS codes with burst errors, a key step is to find the interval of the burst. Then one can use a fast erasure decoding algorithm with erased positions being the burst interval to find the right codeword. An upper bound of the burst error length for probabilistic unique decoding is the Singleton bound $\fl<n-k$. Several works~\cite{92burst,01burst,wu12} studied the problem of identifying the burst interval with miscorrection probability. In \cite{wu12}, given a received word $\br\in\F_q^n$, Wu proposed a novel algorithm to identify the shortest burst interval, hence deriving a probabilistic unique decoding algorithm with decoding radius achieving the Singleton bound and  complexity $O(n^3)$. Moreover, he gave a short and clear analysis of the miscorrection probability of this algorithm. However, Wu's algorithm works only for cyclic Reed-Solomon codes with  evaluation points $\{1, \alpha, \alpha^2,\dots,\alpha^{n-1}\}$ being a cyclic subgroup of $\F_q^*$, where $\alpha\in\F_q*$ has order $n$. One open problem for probabilistic unique decoding of burst errors for \rRS codes is the following.

The unique decodability of \rAG codes from burst errors is also roughly half of the minimum distance. More precisely,  we show in Lemma~\ref{lem:uniqburst} that an \rAG code can only deterministically correct burst errors of length up to $\frac12(d^*+\fg)$, where $d^*$ is the designed distance of \rAG code and $\fg$ is the genus of the underlying function field (note that usually we have $\fg=o(d^*)$). In~\cite{Ren04}, Ren proposed a deterministically burst error-correcting algorithm for \rAG codes based on the Hermitian curves. By exploiting the block structure in Hermitian codes, decoding Hermitian codes can be converted to decoding \rRS codes. Ren's algorithm has quasi-linear decoding complexity $\tO(n)$ using the best decoding algorithm for \rRS codes. This algorithm can only correct  adversary errors far less than half of the designed distance. However, it performs better for burst errors, but no explicit error-correcting radius was given. Later, Wang et al. \cite{Kan15} refined Ren's work to obtain a burst error-correcting algorithm that can correct more errors, but with higher time complexity $\tO(n^{2})$.

Again, we raise the following problem for \rAG codes to probabilistic uniquely correct burst errors.
\begin{problem}\label{prob:p5}
{\it Can we design a probabilistic algorithm to decode \rAG codes  for burst errors  up to the Singleton bound with  quasi-linear  complexity $\tilde{O}(n)$? }
\end{problem}

\subsection{Our results} The main goal of this paper is to provide affirmative answers to Problems 1.1 and 1.2 asked in Subsection 1.1. In summary, in this paper we show that {\it under the burst error model, both Reed-Solomon and algebraic geometry codes can be (probabilistic uniquely and list) decoded up to the Singleton bound with quasi-linear complexity. } The detailed results are summarized in the following three main theorems.

\begin{Maintheorem} 
For any real number $\epsilon >0$, the \rRS code $\RS[n,k]$ of length $n$ and dimension $k$ is $(1-\cR-\epsilon,O(1/\epsilon))$-burst list-decodable, where $\cR=k/n$ is the code rate. Furthermore, if $n$ is an $O(1)$-smooth \footnote{If $n$ has a factorization $n=\prod_{i=1}^rp_i$ such that each factor $p_i\leq B$, then we say that $n$ is $B$-smooth. Particularly, if $B=O(1)$ is a constant, we say that $n$ is $O(1)$-smooth} integer, then $\RS[n,k]$ can be list decoded in time $O(\frac 1{\epsilon} n\log n)=O_{\ep}(n\log n)$ by Algorithm 1 given in Subsection 4.1.
\end{Maintheorem}

In \cite{roth09}, it was also shown that \rRS codes are $(1-\cR-\epsilon,O(1/\epsilon))$-burst list-decodable in polynomial time. As we mentioned in the previous subsection, the naive algorithm gives complexity $\tO(n^2)$. Our Main Theorem 1 reduces complexity from quadratic time $\tO(n^2)$ to quasi-linear time $O(\frac 1{\epsilon} n\log n)=O_{\ep}(n\log n)$.

\begin{Maintheorem}
For any real number $\ep>0$ and any integer $e\geq 1$, $\RS[n,k]$ can uniquely correct burst errors of length less than $n(1-\cR-\ep)$ with miscorrection probability at most $\ep/q^{e-1}$. Moreover, if $n$ is an $O(1)$-smooth integer, then probabilistic unique decoding can be performed in time $O(\frac{e}{\ep}n\log n)$ by Algorithm 2  given in Subsection 4.2.
\end{Maintheorem}

To make Wu's algorithm work \cite{wu12}, one requires that the evaluation point set of the Reed-Solomon code forms a multiplicative cyclic group. Our result removes this requirement. In addition, we reduce the time complexity of Wu's algorithm from  $O(n^3)$ to quasi-linear time $O_\ep(n\log n)$.

\begin{Maintheorem}\label{thm:mainthm3}
Assume $E/\F_q(x)$ is an algebraic function field extension of degree $m$ satisfying the G-FFT assumptions in Section~\ref{sec:IRS}. Let $\fC(\lambda P^{(0)}_{\infty},\calP)$ be the \rAG code of length $N$ based on $E$ defined as in Equation~\eqref{eq:agc}. Then, for any real number $\ep>0$,
\begin{enumerate}
\item $\fC(\lambda P^{(0)}_{\infty},\calP)$ is $\left(1-\cR-\fg/N-\epsilon, O(1/\epsilon)\right)$-burst list-decodable, where $\cR$ is the rate of the code and $\fg$ is the genus of $E$ (note that usually we have $\fg=O(\epsilon N)$ or $\fg=o(N)$, thus the decoding radius achieves the Singleton bound). Furthermore, if both $N$ and $m$ are $O(1)$-smooth, then $\fC(\lambda P^{(0)}_{\infty},\calP)$ can be list  decoded in quasi linear time $O(\frac1{\ep}N\log N)$.
\item let $e$ be any positive integer satisfying $m\ep\leq q^{e-1}$, then $\fC(\lambda P^{(0)}_{\infty},\calP)$ can uniquely correct burst errors of length less than $N(1-\cR-\fg/N-\ep)$ with miscorrection probability at most $m\ep/q^{e-1}$.  Furthermore, if both $N$ and $m$ are $O(1)$-smooth, then $\fC(\lambda P^{(0)}_{\infty},\calP)$ can be decoded in quasi-linear time $O(\frac e{\ep}N\log N)=O(\frac 1{\ep}N\log^2 N)$ (note that in the case of Reed-Solomon code or Hermitian code, $e$ can be constant, thus the decoding complexity is $O(\frac 1{\ep}N\log N)$).
\end{enumerate}

\end{Maintheorem}

We want to emphasize that the G-FFT assumptions in Main Theorem \ref{thm:mainthm3} are usually easy to be satisfied as demonstrated by \cite{encAG24} and our examples. We present examples of \rAG codes in Section 5 to illustrate our Main Theorem \ref{thm:mainthm3}. Examples include the well-known Hermitian curves and the Hermitian tower. Besides, we also give an example of \rAG subcodes based on the third function field in the Garcia-Stichtenoth tower \cite{gs95tower}.

\subsection{Overview of techniques}
Our main technique is to make use of the fast Fourier Transform (FFT for short) to build a one-to-one correspondence between \rAG codes and interleaved $\RS$ code ($\IRS$ code for short) (refer to Subsection~\ref{sec:AG&IRS} for the detailed correspondence). Under this correspondence, an $\fl$-burst in $\F_q^n$ corresponds to a shorter $\lfloor\frac{\fl}{m}\rfloor+2$-burst matrix $\sE$ in $\F_q^{m\times (n/m)}$. Then we can convert the burst decoding problem in \rRS or \rAG codes to a shorter and easier decoding problem in $\IRS$ codes. For a better understanding of our main techniques, let us first introduce the FFT and formal definitions of $\RS$ and $\IRS$ codes. 

\subsubsection{The fast Fourier Transform} The discrete Fourier transform (DFT) of length $n$ over a finite field $\F_q$ is a transform from $n$ coefficients of a polynomial $f(x)$ over $\F_q$ of degree less than $n$ to $n$ evaluations of $f(x)$ at $n$ distinct elements in $\F_q$ (hence $n\leq q$). The inverse discrete Fourier transform (IDFT) is the reverse process from $n$ evaluations to $n$ coefficients. The FTT and inverse FFT (IFFT) are fast algorithms that make the DFT and IDFT can be performed in time $O(n\log n)$. We refer to \cite{LX24} for an overview of FFT. 

In \cite{LX24}, the authors introduced a unified framework for known multiplicative FFT \cite{Pollard71} and additive FFT \cite{lin2014novel} by leveraging the automorphism group of the rational function field and Galois theory. They termed this framework G-FFT to distinguish it from existing FFTs. In \cite{encAG24}, the authors extended the G-FFT framework from rational function fields to algebraic function fields. They demonstrated that AG codes constructed from algebraic function fields satisfying the G-FFT assumptions (see \S~\ref{sec:fastEncAG}) can be encoded in quasi-linear time. In this work, we utilize the theoretical foundations established in \cite{encAG24,LX24} to investigate burst decoding for RS and AG codes. Notably, Bordage et al. \cite{AGIOPP} also established fast encoding of AG codes via automorphism groups. While their approach is based on more general Riemann-Roch spaces, \cite{encAG24} focuses specifically on one-point Riemann-Roch spaces and provides more concrete examples of AG codes.

\subsubsection{Interleaved \rRS codes}\label{sec:preIRS} For the sake of simplicity, let us discuss the correspondence between the simplest algebraic geometry codes, i.e., Reed-Solomon codes and Interleaved \rRS codes. We will then briefly mention the  correspondence between general algebraic geometry codes and Interleaved \rRS at the end of this subsection.
Let $\calP=\{\Ga_1,\Ga_2,\dots,\Ga_n\}\subseteq \F_q$ be a set of $n$ pairwise distinct elements. For a positive integer integer $k\leq n$, let $\F_{q}[x]_{<k}$ be the polynomial space of degree less than $k$. The Reed-Solomon code is defined as
\[\RS[n,k]:=\{\ev_{\calP}(f)=(f(\Ga_1), \ldots,f(\Ga_n))\mid f\in\F_{q}[x]_{<k}\}.\]
We usually call $\calP$ the set of evaluation points and $\ev_{\calP}(f)$ the multipoint evaluation (MPE for short) of $f$ at $\calP$.

Assume $m$ is a positive integer. An $m$-interleaved RS (IRS for short) code  is the direct sum of $m$ RS codes $\{\RS[n,k_i]\}_{i=1}^m$ having the same set of evaluation points, i.e.,
\[{\IRS}_m(n,k_1,\ldots,k_m)=\left\{ \left(\begin{matrix}
\ev_{\calP}(f_1)  \\ 
\ev_{\calP}(f_2) \\
\vdots       \\
\ev_{\calP}(f_m) \end{matrix} \right) \mid f_i(x)\in\F_q[x]_{<k_i},\ i=1,\dots,m\right\}\subseteq\F_q^{m\times n}.\]
A homogeneous $\IRS$ code is an $\IRS$ code whose $m$ constituent \rRS codes have the same dimension $k_i=k$. In this case, we denote it by $\IRS_m(n,k)$. Otherwise, we call it a heterogeneous $\IRS$ code. In this work, for a matrix $M\in\F_q^{m\times n}$, we will denote the $i$-row of $M$ by $M(i)$ and the $j$-th column of $M$ by $M[j]$ for $i\in[m]$ and $j\in[n]$, respectively. 

The idea for burst decoding \rRS and \rAG codes using G-FFT is similar. Let us first explain our idea for decoding \rRS codes. We illustrate our main technique through the multiplicative FFT without using automorphism groups and field extensions. 
For simplicity, let us assume that $n=2^r$ is a factor of $q-1$. Let $\calP=\{1,\alpha,\dots,\alpha^{n-1}\}$ be a subgroup of $\F_q^*$ of order $n$. Let $m_s=2^s$ for $s\le r$ and $n_s=n/m_s=2^{r-s}$. Then the $m_s$-th power maps $\calP$ to a cyclic group of order $n_s$ is given by 
\[(\cdot)^{m_s}:\ \calP\rightarrow \calP_s;\ \alpha^i\mapsto \alpha^{im_s}.\]
Let $\{\beta_1,\dots,\beta_{n_s}\}$ denote the elements in $\calP_s$. Then every $\beta_j$ has $m_s$ preimages in $\calP$, denoted by $\calP[j]=\{\Ga_{1,j},\dots,\Ga_{m_s,j}\}$. Moreover, a polynomial $f(x)$ of degree less than $k$ can be written as a linear combination of $1,x,\dots,x^{m_s-1}$ with coefficients being polynomials of degree less than $k_s:=\lceil k/m_s \rceil$, i.e.,
\[f(x)=f_0(x^{m_s})+f_1(x^{m_s})x+\dots+f_{m_s-1}(x^{m_s})x^{m_s-1},\ \text{all}\ \deg(f_i)<k_s.\]
Let $\RS[n,k]$ be the \rRS code defined over $\calP=\bigsqcup_{j=1}^{n_s} \calP[j]$ in the following way
\[\RS[n,k]=\left\{\bc_f:=\left(\ev_{\calP[1]}(f),\ev_{\calP[2]}(f),\dots,\ev_{\calP[n_s]}(f)\right)\mid f(x)\in\F_q[x]_{<k}\right\}.\]
Then $\RS[n,k]$ can be encoded in time $O(n\log n)$ via G-FFT.
Note that for every $1\leq j\leq n_s$,
\[\ev_{\calP[j]}(f)=\ev_{\calP[j]}\left(f_0(\beta_j)+f_1(\beta_j)x+\dots+f_{m_s-1}(\beta_j)x^{m_s-1}\right).\]
By running IFFT of length $n_s$, we can recover the coefficients $\left(f_0(\beta_j), f_1(\beta_j),\dots, f_{m_s-1}(\beta_j)\right)$ from $\ev_{\calP[j]}(f)$. Therefore, for any codeword $\bc_f\in\RS[n,k]$, we can first fold it with length $m_s$ and then run IFFT over all columns $\ev_{\calP[j]}(f)$, obtaining the following correspondence and a matrix $\sC_f$ given below, 
\begin{equation}\label{eq:convert}
\small
\begin{split}
\tau:
\bc_f=&(\ev_{\calP[1]}(f),\ev_{\calP[2]}(f),\ldots,\ev_{\calP[n_s]}(f))\xrightleftharpoons[m_s-unfolding]{m_s-folding}\left(\begin{matrix}
f(\Ga_{1,1})           &         f(\Ga_{1,2})      & \cdots  & f(\Ga_{1,n_s})  \\ 
f(\Ga_{2,1})            &         f(\Ga_{2,2})      & \cdots  & f(\Ga_{2,n_s})  \\ 
\vdots       & \vdots       & \cdots   & \vdots \\
f(\Ga_{m_s,1})          &         f(\Ga_{m_s,2})     & \cdots  & f(\Ga_{m_s,n_s})  \\  \end{matrix}\right) \\
&\xrightleftharpoons[column\ FFT]{column\ IFFT}\left(\begin{matrix}
f_0(\beta_1)           &          f_0(\beta_2)    & \cdots  & f_0(\beta_{n_s})\\ 
f_1(\beta_1)           &          f_1(\beta_2)    & \cdots  & f_1(\beta_{n_s})\\ 
\vdots       & \vdots       & \cdots   & \vdots \\
f_{m_s-1}(\beta_1)           &          f_{m_s-1}(\beta_2)    & \cdots  & f_{m_s-1}(\beta_{n_s})\\  \end{matrix}\right)=\sC_f.
\end{split}
\end{equation}
Note that the $i$-row of $\sC_f$ is $\sC_f(i)=\ev_{\calP_s}(f_{i-1}(x))\in\RS[n_s,k_s]$. Thus $\sC_f$ is indeed the interleaved \rRS code  $\IRS_{m_s}(n_s,k_s)$. Conversely, given a codeword $C_f$ in $\IRS_{m_s}(n_s,k_s)$, by the inverse map $\tau^{-1}$, we can get a codeword $\bc_f\in\RS[n,k]$. Similarly, under the map $\tau$, any vector $\br\in\F_q^n$ can be converted to a matrix $\sR\in\F_q^{m_s\times n_s}$ which may not be a codeword in $\IRS_{m_s}(n_s,k_s)$ as the corresponding polynomials in each row may have degree $\geq k_s$. Particularly, an $\fl$-burst $\be\in\F_q^n$ is mapped to a matrix $\sE$ which has at most $\lfloor\fl/m_s \rfloor+2$ nonzero columns (which we call an $\lfloor\fl/m_s \rfloor+2$-burst in $\F_q^{m_s\times n_s}$). More precisely, assume the burst interval of $\be$ is $[a,b]$ and $a=m_sq_a+r_a$, $b=m_sq_b+r_b$, where $0\leq r_a, r_b<m_s$. Then
\[
\small
\begin{split}
\be=&\begin{pNiceMatrix}[ 
    code-for-first-row = \color{blue},
    code-for-first-col = \color{green},
    xdots/line-style=loosely dotted,
    parallelize-diags = false,
    first-row,
    first-col,
    nullify-dots,]
& 1 & \ldots & a & \dots & b  & \dots & n  \\
& 0 & \ldots &  \Block[fill=red!15,rounded-corners]{1-3}{} \x & \dots & \x  & \dots &0  \\
\CodeAfter
\UnderBrace[shorten,yshift=1.5mm]{last-3}{last-5}{\fl}
\end{pNiceMatrix}\xrightleftharpoons[m_s-unfolding]{m_s-folding}
\begin{pNiceMatrix}[ 
    code-for-first-row = \color{blue},
    code-for-first-col = \color{purple},
    xdots/line-style=loosely dotted,
    parallelize-diags = false,
    first-row,
    first-col,
    nullify-dots,]  
   & 1 & \cdots & q_a & q_a+1 &\cdots & q_b & \cdots &n_s\\
1 & 0 & \cdots &         0 &\Block[fill=red!15,rounded-corners]{7-2}{}         \x & \cdots   &  \Block[fill=red!15,rounded-corners]{5-1}{}\x    &  \cdots  & 0 \\
\vdots & \vdots & \cdots & \vdots & \vdots & \vdots & \vdots    & \vdots  &\vdots \\
r_a &     0 & \cdots & \Block[fill=red!15,rounded-corners]{5-1}{}         \x &         \x & \cdots  &          \x  &   \cdots   & 0 \\
\vdots & \vdots & \cdots & \vdots & \vdots & \vdots & \vdots    & \vdots  &\vdots \\
r_b &     0 & \cdots &         \x &         \x & \cdots  &          \x  &   \cdots   & 0 \\
\vdots &      0 & \cdots &  \vdots & \vdots & \vdots & \vdots   &  \vdots  &\vdots \\
m_s&   0 & \cdots &         \x &         \x & \cdots  &          0  &   \cdots   & 0 \\
\CodeAfter
\UnderBrace[shorten,yshift=1.5mm]{last-3}{last-6}{\leq\lfloor\fl/m_s \rfloor+2}
\end{pNiceMatrix}\\
& \\
&\xrightleftharpoons[column\ FFT]{column\ IFFT}
\begin{pNiceMatrix}[ 
    code-for-first-row = \color{blue},
    code-for-first-col = \color{purple},
    xdots/line-style=loosely dotted,
    parallelize-diags = false,
    first-row,
    first-col,
    nullify-dots,]  
   & 1 & \cdots & q_a & q_a+1 &\cdots & q_b & q_b+1 &\cdots &n_s\\
1 & 0 & \cdots &\Block[fill=red!15,rounded-corners]{7-4}{}          \x &        \x & \cdots   & \x  & 0  &  \cdots  & 0 \\\vdots & \vdots & \cdots & \vdots & \vdots & \vdots & \vdots   & 0 & \vdots  &\vdots \\
r_a &     0 & \cdots &         \x &         \x & \cdots  &          \x  & 0 &  \cdots   & 0 \\
\vdots & \vdots & \cdots & \vdots & \vdots & \vdots & \vdots    & \vdots  &\vdots \\
r_b &     0 & \cdots &         \x &         \x & \cdots  &          \x  & 0 &  \cdots   & 0 \\
\vdots &      0 & \cdots &  \vdots & \vdots & \vdots & \vdots   & 0 & \vdots  &\vdots \\
m_s&   0 & \cdots &         \x &         \x & \cdots  &          \x  & 0 &  \cdots   & 0 \\
\CodeAfter
\UnderBrace[shorten,yshift=1.5mm]{last-3}{last-6}{\leq\lfloor\fl/m_s \rfloor+2}
\end{pNiceMatrix}=\sE\\
& \\
\end{split}
\]

For a corrupted codeword $\br=\bc_0+\be$ by a burst $\be$, we get a corrupted $\IRS$ codeword $\sR=\sC_0+\sE$ by a burst $\sE$ under the above map $\tau$. Note that $\sE[j]\neq 0$ if and only if $\sC_0[j]\neq\sR[j]$ for $1\leq j\leq n_s$. Thus, we convert the decoding problem of $\RS[n,k]$ to the decoding problem of shorter $\IRS_{m_s}[n_s,k_s]$. If we can get a codeword $\sC\in\IRS_{m_s}(n_s,k_s)$ from $\sR$ such that $|\{j\mid \sR[j]\neq\sC[j]\}|\leq \lfloor\fl/m_s \rfloor+2$, by the inverse map, we can get a codeword $\bc=\tau^{-1}(\sC)\in\RS[n,k]$ such that $\br-\bc$ is an $(\fl+2m_s-r_s)$-burst, where $r_s=\fl \mod m_s$. In the case of list decoding, we only put the the corrected codeword $\bc=\tau^{-1}(\sC)$ such that $\tau^{-1}(\sC)-\br$ is an $\fl$-burst into our decoding list $\mathfrak{L}$; in the case of probabilistic unique decoding, we will find $\sC_0$ with certain miscorrection probability and compute $\bc_0=\tau^{-1}(\sC_0)$.

For burst decoding of $\IRS$ code, as the errors in each row $\sE(i)$ belong to the same burst interval (i.e., $[q_a,q_b]$ in the above example), its decoding capacity is determined by its constituent \rRS code $\RS[n_s,k_s]$. Specifically, 
\begin{itemize}
\item[(i)] for list decoding, since $\RS[n_s,k_s]$ is $O(1-k_s/n_s,O(n_s))$-list decodable, $\IRS_{m_s}(n_s,k_s)$ can list decode burst errors 
$\lfloor\fl/m_s \rfloor+2\leq n_s-k_s$, i.e., $\fl\leq n-k-2m_s$ with list size $<n_s$. By taking $n_s=n/m_s=2/\ep$, we have $\RS[n,k]$ is $O(1-\cR-\ep,O(1/\ep))$-list decodable. 
\item[(ii)] for probabilistic unique decoding, the corresponding $\IRS_{m_s}(n_s,k_s)$ is defined over a cyclic group, which means that its constituent $\RS[n_s,k_s]$ can be decoded by Wu's algorithm (see Lemma~\ref{lem:UniDec}). Thus we can run Wu's algorithm for all rows of $\sR$ independently to get the correct $\sC_0\in\IRS_{m_s}(n_s,k_s)$. By carefully choosing parameters of burst length $\fl$ and redundancy $n_s-k_s$, we derive a unique decoding algorithm for $\RS[n,k]$ which can correct burst errors up to Singleton bound with certain miscorrection probability.
\end{itemize}
The above $\RS[n,k]$ is defined on a cyclic evaluation-point set. So it is suitable for Wu's algorithm for probabilistic unique decoding which would have better decoding performance than our probabilistic unique decoding strategy as above (ii). However, this example using multiplicative FFT nicely illustrates our main technique in burst decoding \rAG codes using G-FFT. 


In the case of algebraic geometry codes or \rRS defined over $\F_{2^r}$, correspondence based on multiplicative groups may not work. Instead, we have to build the correspondence via additive subgroups of $\F_q$. To solve this problem, we unify the multiplicative and additive approaches by considering automorphism groups of function fields. This unified technique has been pointed out in \cite{LX24} by a short remark. In section 4, we will give a precise description and show the extension of places in the corresponding function field extension that will be used to define \rRS codes. Let us briefly explain the idea and the reader may refer to Sections 3 and 4 for the details. At a higher level of view, the above technique for decoding of Reed-Solomon codes can be viewed as a special case under the following framework. Assume that our algebraic geometry code is based on the function field $E$. We first find a subfield tower $\F_q(x)=E_r\subseteq E_{r-1}\subseteq \cdots\subset E_1\subseteq E_0=E$ such that $E_{i-1}/E_i$ are Galois extensions for $1\le i\le r$. Then we establish the G-FFT over function field $E$. By a similar technique as in Equation \eqref{eq:convert}, the algebraic geometry code based on $E$ can be converted into an interleaved Reed-Solomon code based on $\F_q(x)$ via G-FFT. Therefore, burst decoding of \rAG codes (including the \rRS codes) can be reduced to burst decoding of shorter interleaved \rRS codes. In particular, if $E$ is the rational function field $\F_q(z)$ and the Galois groups ${\rm Gal}(E_{i-1}/E_i)\simeq\Z_2$ are multiplicative groups of $\F_q^*$, then we come to the case of Reed-Solomon codes described above. 

\subsection{Organization}
The paper is organized as follows. In Section 2, we introduce some necessary preliminaries on algebraic function fields, the definition of \rAG codes, and burst decoding \rRS and interleaved \rRS codes. Section 3 presents the transform between fast encodable algebraic geometry codes with interleaved Reed-Solomon codes. Under this framework, we show in Section 4 that \rRS codes, as an example of fast encodable AG codes, can be probabilistic unique and list decoded up to Singleton bound in burst error model. Moreover, two quasi-linear time algorithms for list and probabilistic unique decoding RS codes are included as well. In Section 5, we first show that the burst decoding results of RS codes can be extended to fast encodable AG codes. Then we show that the AG codes based on Hermitian curves and Hermitian towers are examples of our main Theorems. Moreover, we also provide a subcode example of the AG code based on the third Garcia-Stichtenoth tower that can be quasi-linear time decoded with burst errors less than the Singleton bound.
\section{Preliminary}

\subsection{Algebraic function fields}
An algebraic function field over $\F_q$ in one variable is a field extension $E\supset \F_q$ such that $E$ is an algebraic extension of the rational function field $\F_q(x)$ for some transcendental element $x$ over $\F_q$ (see \cite{sti09}). We always denote $\F_q(x)$ by $F$.  In the following, we say $E/\F_q$ is an algebraic function field with the assumption that $\F_q$ is the full constant field of $F$. 

\subsubsection{Places and the evaluations at places} Let $\PP_E$ denote the set of places of $E$. For each place $P\in \PP_E$, one can define a discrete valuation $\nu_{P}$ which is a map from $E$ to $\Z\cup\{\infty\}$ satisfying certain properties. The integral  ring of $P$ is given by ${\calO}_{P}:=\{z\in E:\; \nu_{P}(z)\ge 0\}$. Then ${\calO}_{P}$ is a local ring and its unique maximal ideal is $\{z\in E\mid \nu_P(z)>0\}$. With abuse of notation, we still denote this ideal by $P$. The residue class field $E_{P}:={\calO}_{P}/P$ is a field extension of $\F_q$, and the degree of field extension $[E_{P}: \F_q]$ is called the degree of $P$, denoted by $\deg({P})$. When $\deg(P)=1$, we say that $P$ is $\F_q$-rational (or simply rational if there is no confusion). In this case, for any $f\in E\cap\calO_{P}$, $f(P):=f \bmod P\in E_{P}=\F_q$. We call $f(P)$ is the evaluation of $f$ at $P$. In this work, we mainly consider the rational places in $\PP_E$. 

Particularly, if $E=\F_q(x)$ is the rational function field, then the set of all rational places are one-to-one corresponding to monic linear polynomials plus $1/x$, namely,
\[\begin{split} \{\text{rational\ places\ in}\ &\F_q(x)\}\longleftrightarrow \{x-\Ga \mid \Ga\in\F_q\}\cup \{1/x\},\\
& P_{\Ga} \mapsto (x-\Ga),\ P_{\infty}\mapsto 1/x.\end{split}\]
We may abuse $P=(x-\alpha)$ to denote the zero place of $x-\alpha$ in $F$. The place $P_{\infty}$ corresponding to $1/x$ is the unique pole place of $x$, i.e., $\nu_{P_{\infty }}(x)=-1$. For any $f\in\F_q[x]$, the evaluation of $f$ at a rational place $P_{\Ga}$ is $f(P_{\Ga})=f(\Ga)\bmod P_{\Ga}$.

\subsubsection{The extension of places in function filed extension} Assume $E/F$ is a finite algebraic extension of degree $m$ with the constant field $\F_q$. Assume $P\in\PP_{F}$ is a rational place. A place $\fP\in \mathbb{P}_{E}$ is said to be lying over $P$, written as $\fP\mid P$, if $P\subseteq \fP$. Let $\fP_1, \dots, \fP_t$ be all the places of $E$ lying over $P$. If the number of splitting places lying over $P$ equals the field extension degree, i.e., $t=[E: F]$, then $P$ is said to be \textbf{spliting completely} in $E$. In this case, all $\fP_i\mid P$ are also rational in $E$. If there is a unique rational place in $E$ lying over $P$, then $P$ is said to be \textbf{totally ramified}. A fact is that for every place $P$ of $F$ and any function $f\in F\cap\calO_{P}$, then $f(\fP)=f(P)$ holds for all $\fP\mid P$, namely, $f$ is a constant function restricted on the set $\{\fP\in\PP_E:\ \fP\mid P\}$.

\subsection{Algebraic geometry codes}
Let $E/\F_q$ be a function field of one variable over $\F_q$ of genus $\fg$. 
Choose a set of $N+1$ distinct rational places $P_1,\ldots,P_N, P_{\infty}$ of $E$. Let $\lambda>2\fg-2$ be a positive integer. The one-point Riemann-Roch space associated with $P_{\infty}$ is defined as
\[\cL(\lambda P_{\infty}):=\{f\in E^*\mid \nu_{P_{\infty}}(f)  \geq -\lambda,\ \nu_P(f)\ge 0\, \mbox{if $P\neq P_{\infty}$}\}\cup \{0\}.\]
Then $\cL(\lambda P_{\infty})$ is a finite-dimensional vector space over $\F_q$ which has dimension
\[\dim_{\F_q}(\cL(\lambda P_{\infty})=\lambda+1-\fg.\]
Let $\calP=\{P_{1},P_2,\ldots,P_N\}$. As $P_{\infty}\notin\calP$, then $\cL(\lambda P_{\infty})\subseteq\bigcap_{i=1}^{N}{\calO}_{P_{i}}$. Thus, it is meaningful to define the $\F_q$-linear map
$\ev_{\calP}:\cL(\lambda P_{\infty})\longrightarrow\F_{q}^{N}$ by
$$\ev_{\calP}(f)=(f(P_{1}),f(P_2),\ldots,f(P_{N})) \qquad \hbox{for all} \; f\in\cL(\lambda P_{\infty}).$$
The image of $\ev_{\calP}$ is a linear subspace of $\F_{q}^n$ which is denoted by $\fC(\lambda P_{\infty},\calP)$ and called a one-point {algebraic geometry code} (or {\bf {$\AG$} code}). 
By~\cite[Chapter 2]{sti09}, we know that $\fC(\lambda P_{\infty},\calP)$ {\it is an} $[N,K,d]$-{\it linear code over} $\F_q$ { with}
$$K=\dim_{\F_q}(\cL(\lambda P_{\infty}))=\lambda-\fg+1, \qquad d\ge d^*:=N-\lambda,$$
where $d^*$ is called the designed distance. Particularly, if $E=\F_q(x)$, then $\cL(\lambda P_{\infty})$ is the polynomial space $\F_q[x]_{\le \lambda}$, where $P_{\infty}$ is the unique pole place of $x$. In this case, the \rAG code $\fC(\lambda P_{\infty},\calP)=\RS[n,\lambda+1]$, which has dimension $\lambda+1$ as $\fg(\F_q(x))=0$.

\subsection{Burst error decoding}
In this subsection, we will introduce some known results about decoding $\RS$ and $\IRS$ codes when burst errors occur. Let us first present the definition of bursts for $\RS$ codes.
\begin{definition} ($\fl$-bursts in $\F_q^n$)
For an integer $\fl$ with $0 \leq \fl \leq n$, a word $\be \in \F_q^n$ is called an $\fl$-burst if either $\be = 0$ or the indexes $i$ and $j$ of the first and last nonzero entries in $\be$ satisfy $0 \leq j-i <\fl$. Moreover, the interval $[i,j]$ is called the burst interval of $\be$.
\end{definition}

For $\IRS$ codes, a common error model is the column burst error, where an error position corrupts a whole column simultaneously. The error metric is counted by the number of non-zero columns of the error matrix. Specifically, assume $\sR = \sC+\sE$ is an erroneous received codeword, where $\sC \in{\IRS}_m(n,k_1,\ldots,k_m)$ and $\sE\in \F_q^{m\times n}$. Then the set of error locations in $\sE$ is $\mathbb{E} = \bigcup_{i=1}^m\left\{j\in[n]\mid e_{i,j}\neq 0\right\}$, and the number of errors in $\sE$ is $|\mathbb{E} |$. Particularly, if the nonzero columns in $\sE$ are consecutive, we call $\sE$ a \textbf{burst}.

\begin{definition} ($\fl$-bursts in $\F_q^{m\times n}$)
For an integer $\fl$ with $0 \leq \fl \leq n$, a matrix $\sE \in \F_q^{m\times n}$ is called an $\fl$-burst if either $\sE = 0$ or the indexes $i$ and $j$ of the first and last nonzero columns of $\sE$ satisfy $0 \leq j-i <\fl$. Moreover, the interval $[i,j]$ is called the burst interval of $\sE$.
\end{definition}

In decoding \rRS or \rAG codes with burst errors, finding the burst intervals is enough. In the case of decoding $\RS[n,k]$, given a received word $\br\in\F_q^n$ from the $i$-th position to $i+\fl-1$-th position is corrupted and $\fl\leq n-k$,  we can erase all positions of the received word in the burst interval $[i,i+n-k-1]$. Then the right codeword can be found by erasure decoding algorithm for $\RS$ code. In this work, we will consider both list decoding and probabilistic unique decoding \rRS (resp. \rAG) codes with burst errors. 

\subsubsection{List decoding with burst errors}

For any $\br\in\F_q^n$ (resp. $\sR\in\F_q^{m\times n}$), let $B_{\fb}(\br, \fl)$ (resp. $B_{\fb}(\sR, \fl)$) denote the set of all words which take an $\fl$-burst error from $\br$ (resp. $\sR$). That is,
\[\begin{split}
&B_{\fb}(\br, \fl) :=\{\br + \be \mid \be \in \F_q^n\ \text{is\ an}\ \fl\text{-burst}\},\\
&B_{\fb}(\sR, \fl) :=\{\sR + \sE \mid \sE \in \F_q^{m\times n}\ \text{is\ an}\ \fl\text{-burst}\}
.\end{split}\]

\begin{definition}  Let $L > 1$ be an integer and $\rho \in (0, 1)$. A code $C \subset \F_q^n$ is said to be $(\rho, L)$-burst list-decodable, if for any $\br \in \F_q^n$,  there are at most $L$ codewords $\bc$ in $C$ such that $\br-\bc$ is an $\fl$-burst, i.e.,
\[ |B_{\fb}(\br,\rho n)\cap C| \leq L.\]
\end{definition} 

\begin{lemma}\label{lem:easy}
Assume $m\mid n$ and $u=n/m$. Let $\be\in\F_q^n$  be an $\fl$-burst. By folding $\be$ with length $m$ produces a matrix $\sE$ in $\F_q^{m\times u}$. Then $\sE$ has at most $\lfloor \fl/m\rfloor+2$ consecutive columns are nonzero, i.e., $\sE$ is an $\lfloor \fl/m\rfloor+2$-burst in $\F_q^{m\times u}$.
\end{lemma}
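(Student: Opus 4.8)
The plan is to track a single burst interval through the folding map and count how many columns of the resulting matrix it can touch. First I would set up notation: write the burst interval of $\be$ as $[a,b]$ with $b - a < \fl$, so the nonzero entries of $\be$ all have indices in $\{a, a+1, \dots, b\}$, a block of at most $\fl$ consecutive positions. Folding with length $m$ sends the position $i$ (say $1 \le i \le n$, written $i - 1 = m q + s$ with $0 \le s < m$) to the entry in row $s+1$, column $q+1$ of $\sE \in \F_q^{m \times u}$. Thus a position $i$ contributes to column $\lfloor (i-1)/m \rfloor + 1$ of $\sE$.

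The key step is then the observation that the set of column indices hit by the block $\{a, \dots, b\}$ is exactly $\{\lfloor (a-1)/m \rfloor + 1, \dots, \lfloor (b-1)/m \rfloor + 1\}$, a set of consecutive integers, and its cardinality is
\[
\left\lfloor \frac{b-1}{m} \right\rfloor - \left\lfloor \frac{a-1}{m} \right\rfloor + 1.
\]
So I need the elementary bound $\lfloor x/m \rfloor - \lfloor y/m \rfloor \le \lfloor (x-y)/m \rfloor + 1$ for integers $x \ge y$ (equivalently, $\lfloor x/m\rfloor - \lfloor y/m \rfloor \le (x-y)/m + 1$, then take the floor since the left side is an integer). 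Applying this with $x = b-1$, $y = a-1$ gives that the number of nonzero columns is at most $\lfloor (b-a)/m \rfloor + 2 \le \lfloor \fl/m \rfloor + 2$, using $b - a \le \fl$ (indeed $b-a<\fl$, but $\lfloor(b-a)/m\rfloor\le\lfloor\fl/m\rfloor$ suffices). Since the hit columns form a consecutive block, $\sE$ is by definition an $(\lfloor \fl/m \rfloor + 2)$-burst in $\F_q^{m \times u}$; the degenerate case $\be = 0$ gives $\sE = 0$, which is trivially covered.

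There is no real obstacle here — the statement is essentially a counting exercise — so the only thing to be careful about is the indexing convention (whether rows/columns are $0$- or $1$-indexed, and whether the "+2" is tight), which is why the additive constant comes out as $2$ rather than $1$: a block of $\fl$ consecutive integers can straddle the boundary between two blocks of size $m$ in a way that spreads it over $\lceil \fl/m \rceil + 1 \le \lfloor \fl/m \rfloor + 2$ columns. I would close by noting that this bound is exactly the one used in Equation~\eqref{eq:convert} and in the reduction of burst decoding of $\RS$ and $\AG$ codes to burst decoding of the shorter interleaved code.
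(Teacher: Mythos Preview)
Your argument is correct and is exactly the kind of direct counting the paper has in mind; the paper's own proof is simply the line ``The proof is straightforward.'' Your write-up supplies the details that the paper omits, and the indexing and the floor-function bound you use are sound.
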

\begin{proof}
The proof is straightforward.
\end{proof}

Similarly, the burst list-decodability of $\IRS$ code is defined as follows.

\begin{definition}  Let $L > 1$ be an integer and $\rho \in (0, 1)$. An $\IRS$ code $\IRS_m(n,k) \subset \F_q^{m\times n}$ is said to be $(\rho, L)$-burst list-decodable, if for any $\sR \in \F_q^{m\times n}$,  there are at most $L$ codewords $\sC$ in $\IRS_m(n,k)$ such that $\sR-\sC$ is an $\fl$-burst in $\F_q^{m\times n}$.
\end{definition}

\subsubsection{Unique decoding with burst errors}

If the set of evaluation point $\calP$ of $\RS[n,k]$ is a cyclic subgroup of $\F_q$, Wu \cite{wu12} gave a novel probabilistic unique decoding algorithm for $\RS$ code that can correct burst errors of length less than $n-k-e$ with miscorrection probability at most $1/q^e$. By a similar proof, we can generalize Wu's algorithm to $\RS$ code defined on cosets of a cyclic group of order $n$, i.e., $\calP\in\F_q^*/\langle \alpha\rangle$, where $\alpha\in\F_q^*$. In Section 4, we will further generalize Wu's algorithm to general $\calP$. 

\begin{lemma}{\cite{wu12}}\label{lem:UniDec}
Let $\alpha\in\F_q^*$ be an element of order $n$. Assume the evaluation set $\calP=\{\xi,\xi\alpha,\dots,\xi\alpha^{n-1}\}$ of $\RS[n,k]$ is a coset of $\F_q^*/\langle \alpha\rangle$. Then $\RS[n,k]$ can correct burst errors of length $\fl< n-k-e$ with miscorrection probability at most $1/q^{e}$. Moreover, it can be decoded in time $O(n\log n)$ by Algorithm 3 in Appendix A if $n$ is an $O(1)$-smooth integer. 
\end{lemma}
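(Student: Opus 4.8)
The plan is to follow Wu's original strategy \cite{wu12}, checking at each step that it carries over from the cyclic group $\langle\alpha\rangle$ to an arbitrary coset $\xi\langle\alpha\rangle$. First I would set up the syndrome picture: given a received word $\br = \bc + \be$ with $\bc \in \RS[n,k]$ and $\be$ an $\fl$-burst with burst interval $[a,a+\fl-1]$ (indices read cyclically), I would introduce the syndrome polynomial $S(z) = \sum_{j} S_j z^j$ with $S_j$ the parity-check inner products of $\br$ against the rows of a canonical parity-check matrix for $\RS[n,k]$; because $\calP = \xi\langle\alpha\rangle$ is a geometric progression with ratio $\alpha$ of order $n$, the parity-check matrix is (up to scaling the columns by powers of $\xi$) a Vandermonde matrix in $\alpha^0,\dots,\alpha^{n-1}$, so the $n-k$ syndromes behave exactly as in the cyclic case after absorbing the $\xi$-scaling into the error-locator normalization. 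The key algebraic fact is that a burst of length $\le \fl$ corresponds, in the transform domain, to the error spectrum being annihilated by a polynomial of degree $\fl$ whose roots are consecutive powers $\alpha^a,\dots,\alpha^{a+\fl-1}$; equivalently the error-locator ideal contains an element of a special ``consecutive-roots'' form.

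Next I would describe Wu's identification procedure: one forms, for each candidate starting position $a$, a linear system (a truncated key equation / a Hankel-type system built from the syndromes) whose solvability detects whether the error is consistent with a burst starting at $a$ of length $< n-k-e$, and one shows that if the true error is such a burst then the system is consistent and yields the correct erasure interval, while for a wrong guess the probability (over the randomness of $\bc$, or rather over the uniform distribution of the non-burst ``garbage'' that a wrong guess would have to explain) of a false positive is at most $q^{-e}$. Summing or taking a union bound over at most $n$ candidate positions is what forces the $e$-symbol slack: the miscorrection probability is bounded by $1/q^{e}$ once the redundancy exceeds the burst length by $e$. The generalization to a coset is essentially notational here — the uniform-garbage argument only uses that distinct evaluation points give an invertible Vandermonde-type system on any $k$ coordinates, which holds for $\calP = \xi\langle\alpha\rangle$ verbatim.

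For the complexity claim, once the burst interval of length $n-k$ is identified I would invoke erasure decoding of $\RS[n,k]$: erase the $n-k$ positions in the interval and interpolate the degree-$<k$ polynomial through the remaining $k$ evaluations. When $n$ is $O(1)$-smooth, multipoint evaluation and interpolation over $\calP$ run in $O(n\log n)$ via the G-FFT of \cite{LX24} (the coset structure poses no difficulty, since $(\cdot)^{m_s}$ still maps $\xi\langle\alpha\rangle$ onto a coset of a subgroup), and the interval-search step of Wu's algorithm can likewise be implemented with $O(n\log n)$ field operations by computing the relevant syndrome/key-equation quantities through FFT-based polynomial arithmetic rather than the naive $O(n^3)$ linear algebra; this is exactly where the improvement over Wu's stated $O(n^3)$ comes from, and the details are deferred to Algorithm 3 in Appendix A. The main obstacle I anticipate is the probability analysis after removing the cyclic-group hypothesis: one must make sure that the ``wrong-guess'' event genuinely has probability $\le q^{-e}$ for a general coset and is not secretly exploiting the group structure (e.g.\ a shift-invariance of the syndromes that a coset lacks). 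I expect this to go through after re-deriving the key equation with the $\xi$-twist made explicit, but it is the step that requires the most care rather than a routine transcription of \cite{wu12}.
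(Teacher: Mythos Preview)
Your proposal has the right overall shape---syndrome setup, locate the burst interval, erasure decode---and you are right that the coset generalization is essentially notational. But you miss the one algorithmic idea that makes both the algorithm and the $O(n\log n)$ claim work.

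The paper (following Wu) does \emph{not} test candidate starting positions one at a time via separate Hankel/key-equation systems, as you describe. Instead it exploits the shift relation
\[
\Lambda^{(i)}(x)=\Lambda^{(1)}(\alpha^{\,i-1}x),\qquad\text{hence}\quad \Lambda^{(i)}_j=\Lambda^{(1)}_j\alpha^{\,j(i-1)},
\]
between the reciprocal error locators $\Lambda^{(i)}(x)=\prod_{j=i}^{i+r-2}(1-\xi\alpha^{j-1}x)$. Substituting into the single key-equation coefficient identity $\sum_{j}S_{r-1-j}\Lambda^{(i)}_j=0$ turns the test at position $i$ into the scalar condition $\Gamma(\alpha^{\,i-1})=0$, where
\[
\Gamma(x)=\sum_{j=0}^{r-1}S_{r-1-j}\,\Lambda^{(1)}_j\,x^{j}
\]
is a \emph{fixed} polynomial of degree $<r$. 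All candidate positions are therefore tested simultaneously by one multipoint evaluation of $\Gamma$ at $1,\alpha,\dots,\alpha^{n-1}$, and the burst interval is read off from the longest run of consecutive roots (this is Lemma~A.1 in the paper). That single FFT evaluation is exactly what yields $O(n\log n)$; your appeal to ``FFT-based polynomial arithmetic'' for $n$ separate linear systems does not substitute for it, and without the $\Gamma$-trick the interval search is not obviously subquadratic.

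Two smaller corrections. First, the $\xi$ factor is constant across all $\Lambda^{(i)}$, so the shift relation above survives verbatim on a coset; this is why the generalization really is just bookkeeping, not a re-derivation of the key equation. Second, the miscorrection bound is not obtained by a union bound over positions as you propose. It is the probability that $\Gamma$ has a spurious consecutive-root run of length $\ge r-\fl$ at a wrong location, which Wu's Theorem~2 bounds directly by $1/q^{\,r-1-\fl}\le 1/q^{e}$ when $\fl<r-e$.
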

\begin{proof}
See Appendix~\ref{app:uniq}.
\end{proof}

The probabilistic unique decoding Algorithm 3 in Appendix A is derived from \cite[Algorithm 1]{wu12}, which always finds a codeword $\bc$ closest to a received word $\br$ in burst error model, i.e., $\bc$ is a codeword such that $\br-\bc$ is the shortest burst. For fast decoding with quasi-linear time, we modify the root finding step of $\Gamma(x)$ via FFT and use the quasi-linear erasure decoding algorithm \cite[Algorithm 4]{lin16novel} to correct $\br$ with a known burst interval.

\section{AG-based Interleaved RS codes}\label{sec:IRS}	
Li et al. \cite{encAG24} generalized the G-FFT \cite{LX24} framework over rational function fields to algebraic function fields, which enables some \rAG codes to be encoded in time $O(N\log N)$ (here $N$ denotes the length of $\AG$ codes). In this section, we use the G-FFT technique in \cite{encAG24} to convert \rAG codes to interleaved Reed-Solomon codes, which will be used to decode AG and RS  codes with burst errors in subsequent sections.

\subsection{$O(N\log N)$ time encodable {$\AG$} codes}\label{sec:fastEncAG}
Let us first briefly recap the G-FFT framework using algebraic function fields in \cite{encAG24}. Let $F=\F_q(x)$ be a rational function field and $P_{\infty}$ be the unique pole place of $x$. Assume $E/F$ is a Galois extension with field extension degree $m=[E: F]$. Suppose that $m=\prod_{i=1}^rp_i$ is a factorization of $m$, where the factors $p_1,\dots,p_r$ are not required to be distinct. To establish G-FFT over $E$, the field extension $E/F$ is expected to satisfy the following assumptions:\bigskip
\\
\fbox{\begin{minipage}{39.5em}
\begin{center}
   {\bf  G-FFT assumptions}
\end{center}
\begin{enumerate}
\item[(P1)] The pole place $P_{\infty}$ of $x$ is totally ramified in $E/F$.
\item[(P2)] There exists a tower of subfields of $E$: $E=E_0\supsetneq E_1\supsetneq\dots \supsetneq E_r=F$ such that $E_{i}/E_{i+1}$ is a Galois extension and the extension degree $[E_{i}: E_{i+1}]=p_{i+1}$ for $0\leq i\leq r-1$.
\item[(P3)] Let $E_{i}=E_{i+1}(y_{i})$ and $P^{(i)}_{\infty}\in\PP_{E_i}$ be the unique place in $E_i$ lying over $P_{\infty}$. Each $y_i$ has a unique pole place $P^{(i)}_{\infty}$ and $\gcd(\nu_{P^{(i)}_{\infty}}(y_i), p_{i+1})=1$. Moreover, $\{1,y_i,\dots,y_i^{p_i-1}\}$ is an integral basis of $E_i/E_{i+1}$ at all places $Q\in\PP_{E_i}$ with $Q\neq P_{\infty}^{(i)}$.
\item[(P4)] There are $n\in\Z_+$ rational places $\{P_1,\ldots,P_n\}\in\PP_F$ that are splitting completely in $E/F$.
\end{enumerate}
\end{minipage}}

\begin{remark}
   The above assumptions (P2) and (P3) differ slightly with the assumptions in \cite{encAG24} for explicit construction of generators $y_1,y_2,\dots,y_r$ via automorphisms in $\Aut(E/\F_q(x))$. Since the generators $y_1,y_2,\dots,y_r$ will be used in our construction of a basis of the Riemann-Roch space $\cL(\lambda P^{(0)}_{\infty})$ of $E$ in the following Lemma~\ref{lem:basis}, we will give a proof of Lemma~\ref{lem:basis} in the Appendix.
\end{remark}

Let $k<n$ be a positive integer and $\lambda:=(k-1)m$. From assumptions (P1)-(P3), there is an explicit basis of the Riemann-Roch space $\cL(\lambda P^{(0)}_{\infty})$ of $E$. Before introducing the basis we need some notations. For a fixed factorization $m=\prod_{i=1}^rp_i$ of $m$, any $u\in[0,m-1]$ can be uniquely written as 
\begin{equation}\label{eq:expu}
u=u_0+u_1p_1+\ldots+u_{r-1}p_1p_2\cdots p_{r-1},\ \text{where}\ u_i\in[0,p_{i+1}-1],\ i=0,\dots,r-1.
\end{equation}
Let $\bu=(u_0,u_1,\ldots,u_{r-1})\in\prod_{i=1}^{r}\Z_{p_j}$ be the expansion coefficients of $u$ and define a function $\by^{\bu}$ as follows:
\begin{equation}\label{eq:expans}
\by^{\bu}:=y_0^{u_0}y_1^{u_1}\cdots y_{r-1}^{u_{r-1}}.
\end{equation}

\begin{lemma}{\cite{encAG24}}\label{lem:basis}
Suppose $E/F$ is a field extension satisfying (P1)-(P3). The set $\cB=\left\{\by^{\bu}x^j\mid \nu_{P^{(0)}_{\infty}}(\by^{\bu}x^j)\geq -\lambda,\ 0\leq u\leq m-1,\ 0\leq j\leq k-1\right\}$ is a basis of $\cL(\lambda P^{(0)}_{\infty})$. Therefore, every function $f\in\cL(\lambda P^{(0)}_{\infty})$ has a unique representation as
\[f=f_0(x)+f_1(x)\by^{\mathbf 1}+\dots+f_{m-1}(x)\by^{\mathbf{m-1}},\ \text{where}\ f_u(x)\in\F_q[x]_{<k}\ \text{for}\ u=0,\dots,m-1.\]
 \end{lemma}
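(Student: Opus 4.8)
The plan is to fix the $F$-basis of $E$ coming from the generators $y_0,\dots,y_{r-1}$, expand an arbitrary $f\in E$ in it, and then read off exactly when the two defining conditions of $\cL(\lambda P^{(0)}_\infty)$ — holomorphy away from $P^{(0)}_\infty$, and pole order at most $\lambda$ at $P^{(0)}_\infty$ — hold in terms of the coefficients. First I would record the purely algebraic fact that $\{\by^{\bu}\mid 0\le u\le m-1\}$ is an $F$-basis of $E$: by (P2) each $E_i=E_{i+1}(y_i)$ has degree $p_{i+1}$ over $E_{i+1}$, so $\{1,y_i,\dots,y_i^{p_{i+1}-1}\}$ is an $E_{i+1}$-basis of $E_i$, and multiplying these bases up the tower, reorganized via the mixed-radix expansion \eqref{eq:expu}, gives the claim. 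In particular every $f\in E$ has a \emph{unique} expression $f=\sum_{u=0}^{m-1}g_u(x)\by^{\bu}$ with $g_u\in F=\F_q(x)$, which already yields the uniqueness in the statement once we show the $g_u$ are forced to lie in $\F_q[x]_{<k}$.

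The key computation is $\nu_{P^{(0)}_\infty}(\by^{\bu})$. Since $P_\infty$ is totally ramified in $E/F$ by (P1), it is totally ramified in each $E_i/F$ and in each $E_i/E_{i+1}$; writing $m_i=p_1\cdots p_i$ one gets $\nu_{P^{(0)}_\infty}|_{E_i}=m_i\,\nu_{P^{(i)}_\infty}$, and hence from the pole data in (P3), $\nu_{P^{(0)}_\infty}(y_i)=-m_id_i$ with $d_i:=-\nu_{P^{(i)}_\infty}(y_i)>0$ and $\gcd(d_i,p_{i+1})=1$. Therefore $\nu_{P^{(0)}_\infty}(\by^{\bu})=-\sum_{i=0}^{r-1}u_im_id_i$. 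I would then show these integers are pairwise distinct modulo $m$ as $\bu$ ranges over $\prod_i\Z_{p_{i+1}}$: this is a base-$(p_1,\dots,p_r)$ argument — reducing a supposed congruence modulo $p_1$ kills every term with $i\ge1$, and $\gcd(d_0,p_1)=1$ forces equality of the zeroth digits; dividing by $p_1$ and repeating forces all digits to agree. This distinctness is the crux: because $\nu_{P^{(0)}_\infty}(g_u)$ is always a multiple of $m$ (it is the restriction of a valuation of $F$ which ramifies with index $m$), the values $\nu_{P^{(0)}_\infty}(g_u\by^{\bu})$ over the $u$ with $g_u\ne0$ are pairwise distinct integers, so there is no cancellation and $\nu_{P^{(0)}_\infty}(f)=\min_{u:\,g_u\ne0}\nu_{P^{(0)}_\infty}(g_u\by^{\bu})$.

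For the finite places I would use the integral-basis part of (P3) together with transitivity of integral closure up the tower (a finite free module over a finite free module is free on the product basis): localized at any place $P\ne P_\infty$ of $F$, $\{\by^{\bu}\}$ is an integral basis of $E/F$, since at every level the only excluded place $P^{(i)}_\infty$ lies over $P_\infty\ne P$. Hence $f$ is holomorphic at every place of $E$ other than $P^{(0)}_\infty$ (the unique place over $P_\infty$) iff every $g_u$ lies in $\bigcap_{P\ne P_\infty}\calO_P=\F_q[x]$. Combining with the previous paragraph, $f\in\cL(\lambda P^{(0)}_\infty)$ iff all $g_u\in\F_q[x]$ and $-m\deg g_u+\nu_{P^{(0)}_\infty}(\by^{\bu})\ge-\lambda$ for each $u$ with $g_u\ne0$; since $\lambda=(k-1)m$ and $\nu_{P^{(0)}_\infty}(\by^{\bu})\le0$, this forces $\deg g_u\le k-1$, so $f=\sum_u f_u(x)\by^{\bu}$ with $f_u\in\F_q[x]_{<k}$. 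Expanding $f_u=\sum_jc_{u,j}x^j$ and using $\nu_{P^{(0)}_\infty}(x)=-m$, every monomial $x^j\by^{\bu}$ with $c_{u,j}\ne0$ satisfies $\nu_{P^{(0)}_\infty}(x^j\by^{\bu})\ge-\lambda$, i.e.\ lies in $\cB$, so $\cB$ spans $\cL(\lambda P^{(0)}_\infty)$. The reverse inclusion $\cB\subseteq\cL(\lambda P^{(0)}_\infty)$ is immediate from the definition of $\cB$, and $\cB$ is $\F_q$-linearly independent because it sits inside the $\F_q$-independent family $\{x^j\by^{\bu}:j\ge0,\ 0\le u\le m-1\}$ (a dependence among these would contradict the $F$-independence of the $\by^{\bu}$). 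Hence $\cB$ is a basis.

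I expect the main obstacle to be the bookkeeping of the second and third paragraphs: getting the formula $\nu_{P^{(0)}_\infty}(\by^{\bu})=-\sum_i u_im_id_i$ exactly right and running the mixed-radix distinctness argument cleanly, and carefully justifying the transitivity of \emph{localized} integral bases up the tower — checking that the place $P^{(i)}_\infty$ excluded at each level of (P3) really is irrelevant for every $P\ne P_\infty$. Once distinctness modulo $m$ and the integral-basis statement are in hand, the remaining deduction is short.
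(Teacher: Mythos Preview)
Your proposal is correct and uses the same two ingredients as the paper: the coprimality condition in (P3) to get distinct pole orders at $P^{(0)}_\infty$ (you phrase it as distinctness of $\nu_{P^{(0)}_\infty}(\by^{\bu})$ modulo $m$, the paper as distinctness of all $\nu_{P^{(0)}_\infty}(\by^{\bu}x^j)$ outright; these are equivalent), and the integral-basis part of (P3) to control the finite places. The only real difference is organizational: the paper descends one level at a time, writing $f=\sum_i\alpha_i y_0^i$ with $\alpha_i\in E_1$, using the integral basis at the top level to force $\alpha_i\in\cL\bigl(\tfrac{\lambda}{p_1}P^{(1)}_\infty\bigr)$, and then recursing; you instead invoke transitivity of (localized) integral bases once and for all to get that $\{\by^{\bu}\}$ is an integral basis of $E/F$ at every $P\ne P_\infty$, and then read off both conditions simultaneously. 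Your packaging is a bit cleaner, but substantively the arguments coincide. The point you flag as the potential obstacle --- that the same basis $\{y_i^j\}$ works at \emph{every} place of $E_{i+1}$ over a given finite $P$, so one may intersect over all of them to get freeness over the semilocal ring --- is exactly what makes the transitivity go through, and is implicit in the paper's use of \cite[Corollary~3.3.5]{sti09}.
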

\begin{proof}
See Appendix~\ref{app:basis}.
\end{proof}
\begin{remark}
    The above assumptions (P1)-(P3) are used to construct a polynomial basis of the Riemann-Roch space $\cL(\lambda{P}^{(0)}_{\infty})$. For some function field $E/\F_q$ that does not satisfy all assumptions (P1)-(P3), if its Riemann-Roch space has a good polynomial basis, then the AG codes based on $E$ can still be burst decoded in quasi-linear time by using same techniques. We give an example in Subection 5.5 to illustrate this point.
\end{remark}

From assumption (P4), assume $P\in\{P_1,\dots, P_n\}$ is a rational place that splits completely in $E/F$. Let $\fP_1,\dots,\fP_m$ be $m$ rational places in $E$ lying over $P$. Assume $f=\sum_{u=0}^{m-1}f_u(x)\by^{\mathbf u}\in\cL(\lambda P_{\infty}^{(0)})$. Note that each $f_u(x)\in\F_q[x]_{<k}$ is a constant function restricted on $\{\fP_1,\dots,\fP_m\}$, i.e., $f_u(\fP_1)=\dots=f_{u}(\fP_m)=f_u(P)\in\F_q$. Thus
\[\ev_{\{\fP_1,\dots,\fP_m\}}(f)=\ev_{\{\fP_1,\dots,\fP_m\}}\left(f_0(P)+f_1(P)\by^{\mathbf 1}+\dots+f_{m-1}(P)\by^{\mathbf{m-1}}\right),\]
For any $ f\in\cL(\lambda P^{(0)}_{\infty})$. let $f_P=\sum_{u=0}^{m-1}f_{u}(P)\by^{\mathbf{u}}$. We define the G-FFT and inverse G-FFT (G-IFFT for short) of $ f\in\cL(\lambda P^{(0)}_{\infty})$ at place $P$ as follows:
\begin{equation}\label{eq:fft}
\begin{split}&\GFFT_P:\ \left(f_0(P), f_1(P),\dots,f_{m-1}(P)\right) \rightarrow\left(f(\fP_1),\ldots,f(\fP_m)\right);\\
&\GIFFT_P:\ \left(f(\fP_1),\ldots,f(\fP_m)\right)\rightarrow \left(f_0(P), f_1(P),\dots,f_{m-1}(P)\right). \end{split}
\end{equation}
The local G-FFT and G-IFFT of length $m$ correspond to the column G-FFT and G-IFFT in the map from $\AG$ code to the interleaved $\RS$ code in the next subsection. The following lemma shows that local G-FFT and G-IFFT can be performed in time $O(m\log m)$. Its proof is similar to \cite[Theorem 3.3]{encAG24} which considers the whole G-FFT and G-IFFT of length $N=nm$. For the sake of completeness, we provide a proof in the Appendix.

\begin{lemma}{\cite{encAG24}}\label{lem:gfft}
Suppose $E/F$ is a field extension satisfying (P1)-(P4). Assume $m=\prod_{i=1}^r p_i$ and $B=\max\{p_1,\ldots,p_r\}$. Let $P$ and $\fP_1,\dots,\fP_m$ be defined as above. For any $f=\sum_{u=0}^{m-1} {\Ga_{u}}\by^{\bu}\in\F_q[y_0,y_1,\dots,y_{r-1}]$, the multipoint evaluation $\left(f(\fP_1),\ldots,f(\fP_m)\right)$ can be computed in time $O(B m\log m)$; conversely, given $\left\{f(\fP_1),\ldots,f(\fP_m)\right\}$, the coefficients  $\{\Ga_{0},\Ga_{1},\ldots,\Ga_{m-1}\}\subset\F_q^m$ of $f$ can be computed in time $O(Bm\log m)$.
\end{lemma}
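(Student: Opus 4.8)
The plan is to prove both directions at once by divide-and-conquer along the subfield tower $E=E_0\supsetneq E_1\supsetneq\cdots\supsetneq E_r=F$ of (P2): this is the argument of \cite[Theorem 3.3]{encAG24} specialized from the global transform of length $N=nm$ to the single fibre lying over one place $P$. The first step is to pin down the combinatorics of that fibre. Since $P$ splits completely in $E$, every place of $E$ over $P$ is unramified with residue degree one; as ramification indices and residue degrees multiply in towers, $P$ splits completely in each intermediate field $E_i$, and one checks that $E_1/F$ (with the tail tower and factorization $m/p_1$) again satisfies (P1)-(P4). Thus above $P$ sits a tree whose nodes in $E_i$ are the $m/(p_1\cdots p_i)$ rational places of $E_i$ over $P$; each node of $E_{i+1}$ has exactly $p_{i+1}$ children in $E_i$. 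Since $P\neq P_\infty$ (as $P_\infty$ is totally ramified by (P1)), (P3) applies along this fibre: the minimal polynomial of $y_i$ over $E_{i+1}$ reduces, modulo any node $Q$ of $E_{i+1}$ over $P$, to a separable polynomial whose $p_{i+1}$ roots are exactly $\{\,y_i(\fP):\fP\mid Q\,\}$, so $y_i$ takes $p_{i+1}$ pairwise distinct values on each fibre of $E_i/E_{i+1}$.

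I would then unwind one level. Using the digit expansion \eqref{eq:expu} and grouping the input $f=\sum_{u=0}^{m-1}\Ga_u\by^{\bu}$ by the exponent $u_0$ of $y_0$ gives $f=\sum_{u_0=0}^{p_1-1}y_0^{u_0}g_{u_0}$, where each $g_{u_0}\in\F_q[y_1,\dots,y_{r-1}]\subseteq E_1$ is again of the form demanded by a length-$(m/p_1)$ local transform over $E_1/F$. The key point is that $g_{u_0}$, living in $E_1$, is constant on each fibre of $E/E_1$, so for a node $Q$ of $E_1$ over $P$ and each $\fP\mid Q$ we have $f(\fP)=\sum_{u_0=0}^{p_1-1}y_0(\fP)^{u_0}\,g_{u_0}(Q)$. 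Hence the forward transform factors as: (i) for each $u_0$, recursively run the length-$(m/p_1)$ local G-FFT over $E_1/F$ to compute $g_{u_0}(Q)$ at the $m/p_1$ nodes $Q$; (ii) for each such $Q$, multiply $\big(g_0(Q),\dots,g_{p_1-1}(Q)\big)$ by the $p_1\times p_1$ Vandermonde matrix $\big(y_0(\fP)^{u_0}\big)_{\fP\mid Q}$ to read off $\{f(\fP):\fP\mid Q\}$. That Vandermonde matrix is invertible by the distinctness above, so the inverse transform runs the scheme backwards: on each fibre solve the Vandermonde system for $(g_0(Q),\dots,g_{p_1-1}(Q))$, then recurse with G-IFFT over $E_1/F$ on each $g_{u_0}$ and reassemble the coefficients $\Ga_u$ via \eqref{eq:expu}.

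For the complexity, step (ii) costs $O(p_1^2)$ per node over $m/p_1$ nodes, i.e.\ $O(mp_1)\le O(mB)$; writing $T(m)$ for the cost of the length-$m$ local transform, this yields $T(m)=p_1\,T(m/p_1)+O(mB)$ with $T(1)=O(1)$, hence $T(m)=O(mBr)$. Because every $p_i\ge 2$ we have $2^r\le m$, so $r\le\log_2 m$ and $T(m)=O(Bm\log m)$, as claimed.

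The one genuinely delicate point is the invertibility of the Vandermonde matrices in step (ii): one must confirm that at every level $y_i$ truly separates the $p_{i+1}$ places of a fibre, which is precisely why the integral-basis clause of (P3) — not just Galoisness — is needed, and is the place where the explicit polynomial basis of Lemma~\ref{lem:basis} enters (it also guarantees the coefficient functions $g_{u_0}$ are again $\F_q$-polynomials in $y_1,\dots,y_{r-1}$ of the degree profile the recursion expects). Everything else — the fibre counts, the bookkeeping of which $\Ga_u$ lands where, the arithmetic bounds — is routine; indeed, one could alternatively just quote \cite[Theorem 3.3]{encAG24} after noting that the length-$m$ local transform is the restriction of its length-$N$ transform to a single completely split fibre.
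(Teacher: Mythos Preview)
Your proposal is correct and follows essentially the same approach as the paper's proof in Appendix~\ref{app:gfft}: both peel off one layer of the tower by writing $f=\sum_{u_0=0}^{p_1-1}y_0^{u_0}g_{u_0}$ with $g_{u_0}\in\F_q[y_1,\dots,y_{r-1}]$, relate the values $\{f(\fP)\}_{\fP\mid Q}$ to $\{g_{u_0}(Q)\}$ via a $p_1\times p_1$ Vandermonde system on each fibre, recurse, and solve the same recurrence $T(m)=p_1T(m/p_1)+O(mp_1)$ to obtain $O(Bm\log m)$. Your version is somewhat more explicit about why the Vandermonde matrices are invertible (distinctness of $y_i$ on fibres via the integral-basis clause of (P3)), a point the paper leaves implicit.
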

\begin{proof}
See the Appendix~\ref{app:gfft}.
\end{proof}

Next, we present the \rAG codes from $E$. Assume $\{P_1,\dots,P_n\}$ as in (P4), let $\calP[j]$ be the set of all places in $E$ lying over $P_j$ and $\calP$ be the set of all places lying over $P_1,\dots,P_n$, i.e.,
\begin{equation}\label{eq:points}
\calP=\bigcup_{j=1}^n\calP[j]\ \text{and}\ \calP[j]=\{\fP_{1,j},\dots,\fP_{m,j}:\ \fP_{i,j}\mid P_j\}.
\end{equation}
Then $|\calP|:=N=nm$. From now on, we always assume the parameter $\lambda$ satisfying $2\fg(E)-1\leq \lambda <N$, where $\fg(E)$ is the genus of $E$. The one-point \rAG code is defined as
\begin{equation}\label{eq:agc}
\mathfrak{C}(\lambda P^{(0)}_{\infty},\calP)=\{\ev_{\calP}(f)=(\ev_{\calP[1]}(f),\ldots,\ev_{\calP[n]}(f))\mid f\in\cL(\lambda P^{(0)}_{\infty})\}.
\end{equation}

From the above Lemma~\ref{lem:gfft}, then $\mathfrak{C}(\lambda P^{(0)}_{\infty},\calP)$ can be encoded in time $O(N\log N)$ under the capability to perform FFT for any polynomial in $\F_q[x]_{<n}$ at $\{P_1,P_2,\dots,P_n\}$: For any $f=\sum_{u=0}^{m-1}f_u(x)\by^{\bu}\in\cL(\lambda P^{(0)}_{\infty})$ that is represented under the basis $\cB$, one can 
\begin{itemize}
\item[-]\textbf{Step 1}: compute $\{f_u(P_1),\ldots f_u(P_n)\}_{u=0}^{m-1}$ in time $mO(n\log n)$ via FFT;
\item[-]\textbf{Step 2}: compute $\left\{\ev_{\calP[j]}\left(\sum_{u=0}^{m-1}f_u(P_j)\by^{\bu}\right)\right\}_{j=1}^n$ in time $nO(m\log m)$ by Lemma~\ref{lem:gfft}. 
\end{itemize}
Hence, the total complexity to compute $\ev_{\calP}(f)$ is $O(N\log N)$.

\subsection{AG-based interleaved RS codes}\label{sec:AG&IRS}
For \rAG codes that are constructed from function field $E$ satisfying (P1)-(P4), we show that they can be one-to-one mapped to interleaved RS codes. Let $\{P_1,P_2,\dots,P_n\}$ and $\calP$ be defined as Equation~\eqref{eq:points}. We have the following AG-based $\IRS$ code from $E$
\[\footnotesize
\IRS_m(\lambda P^{(0)}_{\infty},\calP)=\left\{\sC_f=\left(\begin{matrix}
f_0(P_1)           &          f_0(P_2)    & \cdots  & f_0(P_{n})\\ 
f_1(P_1)           &          f_1(P_2)    & \cdots  & f_1(P_{n})\\ 
\vdots       & \vdots       & \cdots   & \vdots \\
f_{m-1}(P_1)           &          f_{m-1}(P_2)    & \cdots  & f_{m-1}(P_{n})\\  \end{matrix}\right)\mid f=\sum_{u=0}^{m-1}f_u(x)\by^{\bu}\in\cL(\lambda P^{(0)}_{\infty}) \right\}.\]
Note that $f_0(x),f_1(x),\dots,f_{m-1}(x)\in \F_q[x]_{<k}$ and $k\leq n$. Thus each row $(f_u(P_1),\dots,f_u(P_{n}))$ is a codeword in $\RS[n,k]$ and ${\IRS}_m(\lambda P^{(0)}_{\infty},\calP)$ is indeed an interleaved $\RS$ code. 

\begin{remark}\label{rmk:irs}
Let notations be defined as above. For $0\leq u\leq m-1$, let $k_u=\left\lfloor \frac{\lambda+\nu_{P^{(0)}_{\infty}}(\by^{\bu})}{m} \right\rfloor+1$. For any  $f=\sum_{u=0}^{m-1}f_u(x)\by^{\bu}$, {as} $\nu_{P_{\infty}^{(0)}}(x)=-m$, then
\[\begin{split}
    f\in\cL(\lambda P^{(0)}_{\infty}) &\Longleftrightarrow -\deg_x(f_u(x))m+\nu_{P_{\infty}^{(0)}}(\by^{\bu})\geq -\lambda\ \text{for\ all}\ u\\
    &\Longleftrightarrow \deg_x(f_u(x))<k_u\ \text{for\ all}\ u.
\end{split}\]
Thus ${\IRS}_m(\lambda P^{(0)}_{\infty},\calP)=\IRS_m\left(n,k_0,\dots,k_{m-1}\right)$ is a heterogenous $\IRS$ code defined on multipoint set $\{P_1,P_2,\dots,P_n\}$. We will use this fact in the probabilistic unique decoding of \rAG codes with burst errors.
\end{remark}

Define a map between $\fC(\lambda P^{(0)}_{\infty},\calP)$ and ${\IRS}_m(\lambda P^{(0)}_{\infty},\calP)$ as follows:
\[\begin{split}
\tau:\ &\fC(\lambda P^{(0)}_{\infty},\calP)\longrightarrow {\IRS}_m(\lambda P^{(0)}_{\infty},\calP),\ \bc=(\ev_{\calP[1]}(f),\ldots,\ev_{\calP[n]}(f))\mapsto\sC_f.\end{split}\]

\begin{theorem}\label{thm:convert}
Let $E/F$ be an algebraic function field satisfying (P1)-(P4). Then the map $\tau$ defined as above is a bijection between {$\AG$} code $\fC(\lambda P^{(0)}_{\infty},\calP)$ and the $\AG$-based $\IRS$ code ${\IRS}_m(\lambda P^{(0)}_{\infty},\calP)$. 
\end{theorem}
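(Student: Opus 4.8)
The plan is to show that $\tau$ is well-defined, injective, and surjective, with all three parts following from the unique-representation statement in Lemma~\ref{lem:basis} together with the local G-FFT/G-IFFT bijection implicit in Lemma~\ref{lem:gfft}. First I would verify that $\tau$ is well-defined: given $\bc = \ev_{\calP}(f)$ for some $f \in \cL(\lambda P^{(0)}_{\infty})$, Lemma~\ref{lem:basis} gives a \emph{unique} expansion $f = \sum_{u=0}^{m-1} f_u(x)\by^{\bu}$ with each $f_u \in \F_q[x]_{<k}$, so the matrix $\sC_f$ with entries $f_u(P_j)$ is unambiguously determined by $f$. The only subtlety is that $\tau$ is defined on codewords $\bc$ rather than on functions $f$, so I must check that $\bc$ determines $f$; but that is exactly the injectivity of $\ev_{\calP}$ on $\cL(\lambda P^{(0)}_{\infty})$, which holds because the $\AG$ code has dimension $K = \dim_{\F_q}\cL(\lambda P^{(0)}_{\infty}) = \lambda - \fg + 1$ (equivalently, because $\lambda < N$ forces $d^* = N-\lambda > 0$), so no nonzero function in the Riemann-Roch space vanishes at all $N$ places of $\calP$. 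Hence $\tau$ is a well-defined map, and it is automatically injective.

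Next I would check surjectivity. Any $\sC \in {\IRS}_m(\lambda P^{(0)}_{\infty},\calP)$ is by definition equal to $\sC_f$ for some $f = \sum_{u=0}^{m-1} f_u(x)\by^{\bu} \in \cL(\lambda P^{(0)}_{\infty})$; then $\tau\big(\ev_{\calP}(f)\big) = \sC_f = \sC$, so $\tau$ hits every codeword of the $\IRS$ code. Combined with the previous paragraph, $\tau$ is a bijection. It would also be worth recording that $\tau$ is $\F_q$-linear (the expansion coefficients $f_u$ depend linearly on $f$, and evaluation at places is linear), so $\tau$ is in fact a linear isomorphism of $\F_q$-vector spaces; this is what makes the burst-error dictionary in Subsection~\ref{sec:AG&IRS} meaningful, since a burst in $\bc$ corresponds column-by-column, under the local G-IFFT of Equation~\eqref{eq:fft}, to a burst in $\sC_f$.

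The one point that genuinely needs care — and which I expect to be the main obstacle — is making precise that the column-wise passage between $\ev_{\calP[j]}(f)$ and $\big(f_0(P_j),\dots,f_{m-1}(P_j)\big)$ is the \emph{same} data, i.e., that the blockwise description of $\tau$ in terms of $\GIFFT_{P_j}$ agrees with the global description via the basis $\cB$. This rests on the fact, used already in the derivation of Equation~\eqref{eq:fft}, that each $f_u \in \F_q[x]_{<k} \subseteq F$ is constant on the fiber $\{\fP_{1,j},\dots,\fP_{m,j}\}$ over $P_j$ (since functions in $F$ take a single value on all places of $E$ lying over a given place of $F$), so that $\ev_{\calP[j]}(f) = \ev_{\calP[j]}\big(\sum_u f_u(P_j)\by^{\bu}\big)$; Lemma~\ref{lem:gfft} then says the linear map sending the length-$m$ coefficient vector $\big(f_0(P_j),\dots,f_{m-1}(P_j)\big)$ to the length-$m$ evaluation vector $\ev_{\calP[j]}(f)$ is invertible. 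Stringing these $n$ column bijections together yields $\tau$, and its inverse is the column-wise $\GIFFT_{P_j}$ followed by reassembly into a function via the basis $\cB$ — which one must observe does land back in $\cL(\lambda P^{(0)}_{\infty})$, again by Lemma~\ref{lem:basis}. Once these compatibility remarks are in place the theorem follows with no computation.
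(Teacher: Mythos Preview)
Your proposal is correct and matches the paper's approach in spirit: both rest on Lemma~\ref{lem:basis} and then verify injectivity and surjectivity. The paper's proof is much terser --- for injectivity it simply observes that $\sC_f=\sC_{\tilde f}$ forces each pair $f_u,\tilde f_u\in\F_q[x]_{<k}$ to agree at all $n\ge k$ points $P_1,\dots,P_n$, hence $f=\tilde f$, and surjectivity is immediate from the definition of $\IRS_m(\lambda P^{(0)}_\infty,\calP)$ --- whereas you route injectivity through the columnwise G-IFFT bijection of Lemma~\ref{lem:gfft} and devote your third paragraph to checking that the two descriptions of $\tau$ (via the basis expansion and via column G-IFFT) agree. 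Your explicit treatment of well-definedness is a point the paper glosses over, and your ``automatically injective'' in the first paragraph is premature as written (well-definedness of $\tau$ alone does not give injectivity; you still need $f\mapsto\sC_f$ injective), but your third paragraph supplies exactly that via the invertibility of the local G-FFT. The paper's one-line degree argument would get you there faster, but your route has the advantage of making the inverse map $\tau^{-1}$ explicit, which is what the decoding algorithms actually use.
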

\begin{proof}
If there exists two functions $f=\sum_{u=0}^{m-1}f_u(x)\by^{\mathbf u}$ and $\tilde{f}=\sum_{u=0}^{m-1}\tilde{f}_u(x)\by^{\mathbf u}$ in $\cL(\lambda P^{(0)}_{\infty})$ such that $\sC_f=\sC_{\tilde{f}}$. Then for each $0\leq u\leq m-1$, $f_u(x)$ and $\tilde{f}_u(x)$ has $n$ same evaluations at $\{P_1,\dots,P_n\}$. As $\deg(f_u(x)),\deg(\tilde{f}_u(x))<k\leq n$, then $f_u(x)=\tilde{f}_u(x)$ hence $f=f'$. 

For any $\sC\in{\IRS}_m(\lambda P^{(0)}_{\infty},\calP)$, there exists an $f(x)=\sum_{u=0}^{m-1}f_u(x)\by^{\bu}\in\cL(\lambda P^{(0)}_{\infty})$ such that the $u+1$-th row of $\sC$ is $\sC(u+1)=\ev_{\{P_1,\dots,P_n\}}(f_u(x))$ for $0\leq u\leq m-1$. Then $\bc=\ev_{\calP}(f)\in \fC(N,\lambda)$ is the unique preimage of $\sC$ in $\fC(\lambda P^{(0)}_{\infty},\calP)$. 
\end{proof}

Actually, the transformation between $\fC(\lambda P^{(0)}_{\infty},\calP)$ and ${\IRS}_m(\lambda P^{(0)}_{\infty},\calP)$ is performed by the local G-FFT and G-IFFT at all $P_i$: for any $f\in \cL(\lambda P^{(0)}_{\infty})$,
\[
\small
\begin{split}
(\ev_{\calP[1]}(f),\ev_{\calP[2]}(f),\ldots,\ev_{\calP[n]}(&f))\xrightleftharpoons[m-unfolding]{m-folding}\left(\begin{matrix}
f(\fP_{1,1})           &         f(\fP_{1,2})      & \cdots  & f(\fP_{1,n})  \\ 
f(\fP_{2,1})            &         f(\fP_{2,2})      & \cdots  & f(\fP_{2,n})  \\ 
\vdots       & \vdots       & \cdots   & \vdots \\
f(\fP_{m,1})          &         f(\fP_{m,2})     & \cdots  & f(\fP_{m,n})  \\  \end{matrix}\right) \\
&\xrightleftharpoons[column\ \GFFT]{column\ \GIFFT}\left(\begin{matrix}
f_0(P_1)           &          f_0(P_2)    & \cdots  & f_0(P_{n})\\ 
f_1(P_1)           &          f_1(P_2)    & \cdots  & f_1(P_{n})\\ 
\vdots       & \vdots       & \cdots   & \vdots \\
f_{m-1}(P_1)           &          f_{m-1}(P_2)    & \cdots  & f_{m-1}(P_{n})\\  \end{matrix}\right).
\end{split}\]



\section{Decoding RS codes with burst errors up to Singleton bound}\label{sec:RS}	

In this section, we first show that the rational function field $\F_q(x)$ has a subfield tower satisfying (P1)-(P4) in Section~3. Thus, as a special kind of \rAG codes, $\RS[n,k]$ can be converted to $\IRS_m\left(n/m,\lceil k/m\rceil\right)$ for any $m\mid n$. Then we can use the decoding capacity of short $\IRS$ codes to analyze the decoding capacity of $\RS$ codes in the burst error model. We consider both list-decoding and probabilistic unique decoding of $\RS$ codes. The derived results are reflected in two aspects: 1) the burst-error decoding radius $\rho$ can reach the Singleton bound, namely $\rho\leq 1-\cR-\ep$, where $\cR=k/n$ is the code rate and $\ep>0$ is any small real number; 2) the decoding algorithms are quasi-linear $O_{\ep}(n\log n)$ regarding the code length $n$. 

Let us first recap the G-FFT framewrok in \cite{LX24}. In our burst decoding of $\RS$ code, we need to unify the multiplicative and additive subgroups of $\Aut(F/\F_q)$. Let $F=\F_q(x)$ and $\Aut(F/\F_q)\cong\PGL_2(\F_q)$ be the automorphism group of $F$. Assume $\F_q=\F_{\ell^{u}}$, where $\ell$ is a prime power. Let $T\leq \F_{\ell}^*$ be a multiplicative subgroup of order $t=\prod_{j=1}^{\mu}q_j$ and $W\leq \F_{\ell^{u}}$ be a subspace with dimension ${w}$. Consider the semidirect product $G:=T\ltimes W$, which is an affine linear subgroup of $\Aut(F/\F_q)$ of order $n=t\ell^{w}\mid q(q-1)$ and $n\leq q$. Let $$G_i= \{1\}\ltimes W_i,\ \text{for}\ 0\leq i\leq w,$$ where $W_{i}\leq W$ is a subspace of $W$ of dimension $i$, and $$G_{i+w}=T_{i}\ltimes W,\ \text{for}\ 1\leq i\leq \mu,$$ where $T_{i}\leq T$ is a cyclic subgroup of order $\prod_{j=1}^i q_j$. Then $G_{i-1}\leq G_i$ with index $[G_{i}: G_{i-1}]=\ell$ if $1\leq i\leq w$, and $[G_{i}: G_{i-1}]=q_i$ if $w+1\leq i\leq w+\mu$. Thus $G$ has a normal subgroup chain, i.e.,
\[\{0\}=G_0\trianglelefteq G_1\trianglelefteq\ldots \trianglelefteq G_{\mu+w}=G.\]
For the sake of brevity, we set $r=w+\mu$, $p_1=\dots=p_w=\ell$ and $p_{w+i}=q_i$ for $1\leq i\leq t$. Then $n=|G|=\prod_{i=1}^r p_i$. Let  $F_i=F^{G_i}$ be the subfield of $F$ fixed by $G_i$. By the Galois theory, the above subgroup chain of $G$ derives a subfield tower of $F$
\[F=F_0\supsetneq F_1\supsetneq\dots \supsetneq F_r,\ \text{and}\ [F_{i-1}: F_i]=p_i.\]
Moreover, $F_i=\F_q(x_i)$, where $x_i=\prod_{\sigma\in G_i}\sigma(x)$ is a polynomial of $x$ of degree $|G_i|$. By the property of the rational function field \cite{LX24}, the infinity place $P_{\infty}\cap F_r$ is totally ramified in $F/F_r$. Furthermore, let $P^{(i)}_{\infty}=P_{\infty}\cap F_i$. Then $\nu_{P^{(i)}_{\infty}}(x_i)=1$. Thus $\gcd(\nu_{P^{(i)}_{\infty}}(x_i),p_i)=1$. By Lemma~\ref{lem:basis} or \cite[Lemma 3.1]{LX24}, the Riemann-Roch space $\cL((n-1)P_{\infty})$ has a basis 
\begin{equation}\label{eq:basisx}
    \{\bx^{\bu}:=x_0^{u_0}x_1^{u_1}\cdots x_{r-1}^{u_{r-1}}\mid 0\leq u\leq n-1\}.
\end{equation}
Consequently, the field extension of rational function field $F/F^G$ satisfies (P1)-(P3) in \S~3. The following lemma presents the rational places in $F^G$ that split completely in $F$.

\begin{lemma}\label{lem:MPset}
Assume $\F_q=\F_{\ell^u}$. Let $F=\F_q(x)$ and $G=T\ltimes W$ be an affine linear subgroup of $\Aut(F/\F_q)$ of order $n=t\ell^{w}\leq q$ defined as above. Let $F^G=\F_q(x_r)$ and $L(x)=\prod_{\alpha\in W}(x+\alpha)$ be the linearized polynomial of $W$. Then the zero rational places of $\{x_r-L(\gamma)^t\mid \gamma\in\F_{\ell^u}\ L(\gamma)\neq 0\}$ in $\F_q(x_r)$ are splitting completely in $F$. Specifically, all places in $F$ lying over $x_r-L(\gamma)^t$ are $\left\{P_{\alpha}\in\PP_F\mid \alpha=u\cdot(\gamma+v)\mid u\in T, v\in W\right\}$.
\end{lemma}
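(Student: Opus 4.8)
The plan is to work entirely inside the rational function field $F=\F_q(x)$ with $\F_q=\F_{\ell^u}$, and to exploit the explicit description $x_r=\prod_{\sigma\in G}\sigma(x)$, where $G=T\ltimes W$ acts on $x$ by $\sigma_{u,v}(x)=u(x+v)$ for $u\in T\le\F_\ell^*$ and $v\in W\le\F_{\ell^u}$. First I would compute $x_r$ explicitly: since $G=T\ltimes W$, the orbit of $x$ under $G$ is $\{u(x+v)\mid u\in T,\ v\in W\}$, so
\[
x_r=\prod_{u\in T}\prod_{v\in W}\bigl(u(x+v)\bigr)=\Bigl(\prod_{u\in T}u\Bigr)^{|W|}\Bigl(\prod_{v\in W}(x+v)\Bigr)^{|T|}=c\cdot L(x)^t,
\]
where $L(x)=\prod_{v\in W}(x+v)$ is the $\F_\ell$-linearized polynomial of $W$ and $c=(\prod_{u\in T}u)^{\ell^w}$ is a nonzero constant; rescaling $x_r$ (which does not change the field $F^G=\F_q(x_r)$) I may assume $x_r=L(x)^t$. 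This is the key structural identity and the first real step.

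Next I would pin down which places of $F^G=\F_q(x_r)$ split completely in $F$. A rational place of $F^G$ other than the pole of $x_r$ corresponds to $x_r-\beta$ for some $\beta\in\F_q$, and its extension to $F$ is governed by the factorization of $L(x)^t-\beta$ over $\F_q$. For $\beta=L(\gamma)^t$ with $\gamma\in\F_{\ell^u}$ and $L(\gamma)\ne 0$, I would show $L(x)^t-L(\gamma)^t$ splits into $n=t\ell^w$ distinct linear factors over $\F_q$: write $L(x)^t-L(\gamma)^t=\prod_{u\in T}\bigl(L(x)-uL(\gamma)\bigr)$ using that $T$ is exactly the group of $t$-th roots of unity in $\F_\ell^*$ (here one uses $t\mid \ell-1$, so $X^t-1=\prod_{u\in T}(X-u)$, applied with $X=L(x)/L(\gamma)$); then each factor $L(x)-uL(\gamma)=L(x-\gamma')$ for any fixed preimage satisfies $L(x)-uL(\gamma)=\prod_{v\in W}(x-\delta_u+v)$ where $\delta_u$ is any root of $L(x)=uL(\gamma)$, since $L$ is $\F_\ell$-linear with kernel $W$ and $uL(\gamma)=L(u\gamma)$ lies in the image (as $u\in\F_\ell$). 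Counting: $t$ choices of $u$, each giving $\ell^w=|W|$ roots, all distinct because the sets $u\gamma+W$ are either equal or disjoint and distinctness within forces $t\ell^w=n=\deg(x_r)$ roots total, hence all simple. Therefore $x_r-L(\gamma)^t$ has $n$ distinct rational zeros in $F$, i.e. it splits completely, and the set of places over it is exactly $\{P_\alpha\mid \alpha=u(\gamma+v),\ u\in T,\ v\in W\}$ as claimed — noting $x(P_\alpha)=\alpha$ and $x_r(P_\alpha)=L(\alpha/u)^t\cdot(\text{const})$... more cleanly, $\sigma_{u,v}(x)$ evaluated appropriately shows these are precisely the $G$-orbit of $P_\gamma$.

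The main obstacle I anticipate is bookkeeping the two semidirect-product layers simultaneously and making sure no roots are lost or double-counted: one must check that $L(\gamma)\ne0$ guarantees $\beta=L(\gamma)^t\ne 0$ so we avoid the ramified fiber over the zero of $x_r$, and that the $n$ elements $u(\gamma+v)$ are pairwise distinct (equivalently, that $\gamma\notin W$, which is exactly $L(\gamma)\ne 0$). A secondary point to handle carefully is the rescaling constant $c$: if one does not rescale $x_r$, the condition becomes $x_r - cL(\gamma)^t$, so I would either rescale at the outset or carry $c$ through; either way the field $F^G$ and the completely-split locus are unaffected. Once these combinatorial checks are in place, the statement follows directly, and I would also remark that the pole place $P_\infty$ is totally (not completely) ramified, consistent with (P1), so the completely split places are exactly those listed.
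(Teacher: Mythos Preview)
Your proposal is correct and follows essentially the same route as the paper: compute $x_r=\prod_{\sigma\in G}\sigma(x)=c\cdot L(x)^t$ for a nonzero constant $c$, rescale, and then verify that $L(x)^t=L(\gamma)^t$ has exactly $n=t\ell^w$ simple roots in $\F_q$ given by the $G$-orbit $\{u(\gamma+v)\mid u\in T,\ v\in W\}$. If anything, your sketch is more careful than the paper's short proof---you explicitly handle the rescaling constant, factor $L(x)^t-L(\gamma)^t=\prod_{u\in T}(L(x)-uL(\gamma))$ via the $t$-th roots of unity, and isolate the distinctness check (the cosets $u\gamma+W$ are pairwise disjoint because $(u_1-u_2)L(\gamma)=0$ forces $u_1=u_2$)---whereas the paper simply asserts the $n$ solutions.
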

\begin{proof}
All elements $\sigma$ in $G=T\ltimes W$ and $\sigma(x)$ are given by
\[\begin{split}&\sigma=\left(\begin{matrix} \beta & \alpha\\ 0 & \beta\end{matrix}\right),\ \text{where}\ \alpha\in W ,\ \beta\in T,\\
&\sigma(x):=\frac{\beta x+\alpha}{\beta}=\beta^{-1}(\beta x+\alpha).\end{split}\]
The rational function field $F^G=\F_q(x_r)$ and 
\[x_r=\prod_{\sigma\in G}\sigma(x)=\prod_{\beta\in T}\beta^{-1}\left(\prod_{\alpha\in W}(\beta x+\alpha)\right)=\prod_{\beta\in T}\beta^{-1}L(\beta x)=\prod_{\beta\in T}L(x)=L(x)^t.\]
For any $\gamma\in\F_{\ell^u}$ and $L(\gamma)\neq 0$, the equation $x_r=L(x)^t= L(\gamma)^t$ has $n$ solutions in $\F_{\ell^u}$, i.e., $\left\{x=u\cdot(\gamma+v)\mid u\in T, v\in W\right\}$. Hence, for each $L(\gamma)^t\neq 0$, $x_r-L(\gamma)^t$ splits completely in $F/F_r$.
\end{proof}

\begin{remark}\label{rmk:cyc}
    Let $\calP=\left\{\alpha=u\cdot(\gamma+v)\mid u\in T, v\in W\right\}$ be the multipoint set of size $n$ defined in Lemma~\ref{lem:MPset}. If $n=t\mid q-1$, then $W=\{0\}$ and $L(x)=x$. In this case, $\calP$ forms a cyclic subgroup of $\F_q^*$ or a coset of a cyclic subgroup. If $n=\ell^w t$ and $w\geq 1$, then $\calP$ is neither a cyclic subgroup nor a coset of a cyclic subgroup.
\end{remark}

Under the framework of \rAG codes in Section 3, we define the RS code that can be fast encoded and decoded via G-FFT. Let $k<n$ be an integer. We choose a rational place $\fp=(x_r-L(\gamma)^t)\in\PP_{F_r}$ that is splitting completely in $F/F_r$ by Lemma~\ref{lem:MPset}. Let $\calP=\{P_1,\ldots,P_n\}\subset \PP_F$ be the set of all rational places lying over $\fp$. Then we have an $\RS[n,k]$ defined as
\begin{equation}\label{eq:RS}
\RS[n,k]=\{\ev_{\calP}(f)\mid f=\sum_{u=0}^{k-1}\alpha_i\bx^{\bu}\in\cL((k-1)P_{\infty}\}.
\end{equation}

Next, using the G-FFT technique, we consider the AG-based $\IRS$ code derived from $\RS[n,k]$. For any $1\leq s\leq r$, let $m_s=\prod_{j=1}^sp_j$ be a factor of $n$, $n_s=n/m_s=\prod_{j=s+1}^rp_j$, and $k(s)=\lceil \frac{k}{m_s}\rceil$. Let $G_s\leq G$ be a subgroup of order $m_s$ and $F_s=F^{G_s}=\F_q(x_s)$ be the fixed subfield by $G_s$. Then $[F: F_s]=m_s$. We fix the following notations:
\begin{itemize}
\item Let ${\sP}_s=\{\fp_1,\dots,\fp_{n_s}\}=\calP\cap F_s$ be $n_s$ places in $F_s$ lying over $\fp$.
\item For each $j=1,2,\dots,n_s$, let $\calP_s[j]=\{P\in \calP\mid P\mid \fp_j\}$. Then $|\calP_s[j]|=m_s$.
\end{itemize}
We assume the places in the set $\calP$ are arranged in the following order: \[\calP=\calP_s[1]\cup\calP_s[2]\cup\dots\cup\calP_s[n_s].\] 
Consider the medium field extension $F/\F_q(x_s)$ of $F/F^G$, which also satisfies (P1)-(P3). The places in $\sP_s$ are splitting completely in $F$. Thus, by Lemma~\ref{thm:convert}, $\RS[n,k]$ can be mapped to the AG-based $\IRS$ code

\[
{\IRS}_{m_s}((k-1)P^{(0)}_{\infty},\calP)=\left\{\left(\begin{matrix}
\ev_{\sP_s}(f_0(x_s))  \\ 
\ev_{\sP_s}(f_1(x_s))  \\ 
\vdots       \\
\ev_{\sP_s}(f_{m_s-1}(x_s))  \\  \end{matrix}\right)\mid f=\sum_{u=0}^{m_s-1}f_u(x_s)\bx^{\bu}\in\cL((k-1)P^{(0)}_{\infty}) \right\}.\]
By Remark~\ref{rmk:irs}, $\IRS_{m_s}((k-1)P_{\infty},\calP)=\IRS_{m_s}(n,k_1,\dots,k_{m_s})$ is a heterogeneous $\IRS$ code, where
\[k_i=\left\lfloor\frac{k-1-i}{m_s} \right\rfloor+1,\ \text{for}\ i=0,1,\dots,m_s-1.\]
Since $k_{m_s-1}\leq \dots \leq  k_1\leq k_0=k(s)\leq n_s$, ${\IRS}_{m_s}((k-1)P^{(0)}_{\infty},\calP)\subset {\IRS}_{m_s}(n_s,k(s))$.

\subsection{Burst list decoding $\RS[n,k]_q$ in time $O_{\ep}(n\log n)$}
\begin{lemma}\label{lem:ldIRS}
The interleaved \rRS code $\IRS_{m}(n,k)$ is $(1-\cR,O(n))$-burst list-decodable, where $\cR=k/n$. 
\end{lemma}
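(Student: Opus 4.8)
The plan is to turn this into a simple counting argument built on the observation that an RS codeword of dimension $k$ is determined by \emph{any} $k$ of its $n$ coordinates, while an $(n-k)$-burst is confined to a window of $n-k$ columns and hence leaves the other $k$ columns untouched; consequently each admissible burst window pins down a unique $\IRS$ codeword. Concretely, fix a received matrix $\sR\in\F_q^{m\times n}$ and suppose $\sC\in\IRS_m(n,k)$ is such that $\sE:=\sR-\sC$ is an $\fl$-burst with $\fl=(1-\cR)n=n-k$; let $[a,b]$ be its burst interval, so $b-a\le n-k-1$.

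First I would enlarge $[a,b]$ to a window $I=[c,\,c+n-k-1]\subseteq[1,n]$ of \emph{exactly} $n-k$ consecutive columns containing $[a,b]$: the requirement $\max(1,\,b-(n-k-1))\le c\le\min(a,\,k+1)$ is satisfiable precisely because $b-a\le n-k-1$ and $1\le a\le b\le n$, so such a $c$ exists. Outside $I$ the matrices $\sC$ and $\sR$ agree columnwise, and $|[1,n]\setminus I|=k$. Writing the $i$-th row of $\sC$ as $\ev_{\calP}(f_i)$ with $\deg f_i<k$ and using that the evaluation points $\alpha_1,\dots,\alpha_n$ are pairwise distinct, the $k$ known values $f_i(\alpha_l)$ for $l\notin I$ determine $f_i$ uniquely by Lagrange interpolation; hence the window $I$ determines every row of $\sC$, that is, $\sC$ itself. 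Therefore the number of admissible codewords is at most the number of length-$(n-k)$ windows inside $[1,n]$, namely $n-(n-k)+1=k+1=O(n)$, proving the lemma with the explicit constant $k+1$. The degenerate case $n-k=0$ is trivial, since then the only burst is the zero matrix.

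I do not expect a genuine obstacle here — the argument is elementary — but two points deserve attention. The minor one is the edge-case bookkeeping in verifying that a length-$(n-k)$ window containing $[a,b]$ always fits inside $[1,n]$. The more substantive point is conceptual rather than technical: the $O(n)$ list size is essentially the best one can say at this code length, which is exactly why the later applications first fold the RS (or AG) code so that the relevant interleaved code has length $n_s=n/m_s=O(1/\epsilon)$; feeding that length into this lemma is what converts $O(n_s)$ into the target constant list size $O(1/\epsilon)$.
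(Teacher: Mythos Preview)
Your argument is correct and is essentially the same as the paper's: both enlarge the burst to a full length-$(n-k)$ window, observe that the remaining $k$ columns uniquely interpolate each row polynomial of degree $<k$, and then count the $k+1=O(n)$ possible windows. Your extra remark about why $O(n_s)$ becomes $O(1/\epsilon)$ after folding is also exactly how the paper uses this lemma downstream.
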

\begin{proof}
Let $\fl=n-k$. For any $\sR\in\F_q^{m\times n}$, we consider the number of codewords $\sC\in\IRS_m(n,k)$ such that $\sC=\sR-\sE$ for some $\fl$-burst $\sE$ in $\F_q^{m\times n}$. 
As a burst $\sE$ corrupts all $m$ constituent $\RS$ codewords with the same locations, the bursts in all rows $\sR(1),\dots,\sR(m)$ of $\sR$ are in the same burst interval. Note that the erasure decoding radius of $\RS[n,k]$ is $n-k$. Since for any codeword $\br\in\F_q^n$ with erased positions $\leq n-k$, then reminder $k$ entries of $\br$ uniquely determine a codeword in $\RS[n,k]$ by interpolation. Thus, for a burst interval $I\subset[1,n]$ of length $n-k$, by erasing the columns of $\sR$ with indices in $I$ and then decoding each row of $\sR$ with the left entries, we can obtain at most one codeword in $\IRS_m(n,k)$. There are $k+1$ burst intervals of length $n-k$ in total. Thus, at most $k+1=O(n)$ codewords can be decoded from $\sR$.  Consequently, we have that $\IRS_m(n,k)$ is $(1-\cR, O(n))$-list decodable.
\end{proof}

 \begin{algorithm}[!h]\label{alg: ListDec}
  \caption{Burst list decoding($\br$).}
  \label{alg:ListDec}
  \begin{algorithmic}[1]
    \Require
     $\br\in\F_q^n$.
    \Ensure
      $\mathfrak L=\left\{f\in\F_q[x]_{<k}\mid \ev_{\calP}(f)-\br\ \text{is\ an}\ \fl\text{-burst}\right\}$ for $\fl=n-k-2\prod_{j=1}^{s}p_j$.
    \State $\mathfrak L=\emptyset$.
    \State Fold $\br$ with length $m_s=\prod_{j=1}^{s}p_j$ into $\br=(\br^T_1,\br^T_2,\ldots,\br^T_{n_s})\in\F_q^{{m_s}\times{n_s}}$.
    \State For $j=1,\dots,n_s$, use G-IFFT to recover $g_j(x)=g_{0,j}+g_{1,j}\bx^{\mathbf 1}+\ldots+g_{m_s-1,j}\bx^{\mathbf{m_s-1}}$.
    \For {every burst interval $I\subset[n_s]$ of length $n_s-k(s)$}
           \For {$i=0$ to $m_s-1$}
                  \State Recover $f_i(x_s)\in\F_q[x_s]_{<k(s)}$ from $\{g_{i,j} \mid j\in [n_s]\setminus I \}$ via erasure decoding.
            \EndFor
            \State Compute $f(x)=f_0(x_s)+f_1(x_s)\bx^{\mathbf 1}+\ldots f_{m_s-1}(x_s)\bx^{\mathbf{m_s-1}}$.
            \State If $\deg(f)<k$ and $\br-\ev_{\calP}(f)$ is an $\fl$-burst, $\mathfrak L=\mathfrak L\cup\{f(x)\}$.
     \EndFor
     
  \State  \Return  $\mathfrak L$.
  \end{algorithmic}
\end{algorithm}

\begin{theorem}\label{thm:listDecRS}
Let $\RS[n,k]$ be defined as in Equation~\eqref{eq:RS}. For any real number $\epsilon >0$, the $\RS$ code $\RS[n,k]$ is $(1-\cR-\epsilon,O(1/\epsilon))$-burst list-decodable, where $\cR=k/n$ is the code rate. Furthermore, if $n$ is an $O(1)$-smooth integer, then $\RS[n,k]$ can be list decoded in time $O(\frac 1{\epsilon} n\log n)=O_{\ep}(n\log n)$ by Algorithm 1.
\end{theorem}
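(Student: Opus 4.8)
The plan is to combine the AG-to-IRS correspondence (Theorem~\ref{thm:convert}) with the burst-list-decodability of interleaved RS codes (Lemma~\ref{lem:ldIRS}), tuning the folding length $m_s$ so that the resulting constituent RS code has length exactly $\Theta(1/\epsilon)$. First I would fix $s$ to be the largest index with $n_s = \prod_{j=s+1}^{r} p_j \geq 2/\epsilon$; since $n$ is $O(1)$-smooth, each $p_j$ is a constant, so $n_s = O(1/\epsilon)$ as well, and $m_s = n/n_s$. Set $\fl = n - k - 2m_s$. The claim is that $\fl \geq n(1-\cR-\epsilon)$ up to the constant slack absorbed in the $O$-notation: indeed $2m_s = 2n/n_s \leq 2n \cdot (\epsilon/2)/(\text{const}) = O(\epsilon n)$, so $\fl = n - k - O(\epsilon n) = n(1-\cR) - O(\epsilon n)$, which after rescaling $\epsilon$ by a constant gives decoding radius $\rho = 1 - \cR - \epsilon$.

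Next I would track what an $\fl$-burst becomes under the folding map $\tau$. By Lemma~\ref{lem:easy}, folding an $\fl$-burst $\be \in \F_q^n$ with length $m_s$ yields a matrix burst $\sE \in \F_q^{m_s \times n_s}$ with at most $\lfloor \fl/m_s\rfloor + 2$ nonzero consecutive columns, and $\lfloor \fl/m_s \rfloor + 2 \leq (n-k)/m_s - 2 + 2 = n_s - k(s)$ where $k(s) = \lceil k/m_s \rceil$ (using $\fl \leq n-k-2m_s$ and $\lceil k/m_s\rceil m_s \geq k$). So if $\br = \ev_\calP(f) + \be$ for some $f \in \cL((k-1)P_\infty)$, then $\sR = \tau(\br)$ (obtained by $m_s$-folding followed by column G-IFFT) satisfies $\sR = \sC_f + \sE$ with $\sC_f \in \IRS_{m_s}((k-1)P_\infty,\calP) \subseteq \IRS_{m_s}(n_s, k(s))$ and $\sE$ an $(n_s - k(s))$-burst. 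Since $\IRS_{m_s}(n_s,k(s))$ is $(1-k(s)/n_s, O(n_s))$-burst list-decodable by Lemma~\ref{lem:ldIRS}, and $n_s = O(1/\epsilon)$, there are at most $O(1/\epsilon)$ such codewords $\sC$; applying $\tau^{-1}$ and discarding those whose distance from $\br$ is not a genuine $\fl$-burst, the output list has size $O(1/\epsilon)$. This establishes $(1-\cR-\epsilon, O(1/\epsilon))$-burst list-decodability, matching Algorithm~1's behavior.

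For the running time I would account for Algorithm~1 step by step: the $m_s$-folding is free; recovering the $g_j(x)$ for all $n_s$ columns via G-IFFT costs $n_s \cdot O(m_s \log m_s) = O(n \log m_s) = O(n\log n)$ by Lemma~\ref{lem:gfft}; the outer loop runs over $k(s)+1 = O(n_s) = O(1/\epsilon)$ burst intervals, and for each interval the inner loop does $m_s$ erasure decodings of an $[n_s, k(s)]$ RS code, each costing $\tilde O(n_s)$, i.e. $O(1/\epsilon)\cdot m_s \cdot \tilde O(n_s) = O(n \log n)$ per interval and $O(\frac{1}{\epsilon} n \log n)$ total; reconstructing $f(x)$ and the verification $\br - \ev_\calP(f)$ via one forward encoding (which is $O(n\log n)$ by the fast-encoding discussion in \S\ref{sec:fastEncAG}) per candidate contributes another $O(\frac1\epsilon n\log n)$. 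Summing, the total is $O(\frac1\epsilon n\log n) = O_\epsilon(n\log n)$.

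The main obstacle I anticipate is not any single estimate but the bookkeeping that the burst interval in the \emph{folded} coordinates pulls back to a burst in the \emph{original} coordinates of length at most $\fl$, not merely at most $\fl + 2m_s$: when we erase a length-$(n_s-k(s))$ column interval $I$ in $\sR$ and interpolate, the recovered $\sC$ may differ from $\sR$ in \emph{fewer} than $n_s-k(s)$ columns, but unfolding $\sC - \sR$ could a priori give a burst slightly longer than $\fl$ because of the two "boundary" columns $q_a, q_b$ that are only partially corrupted (cf. the displayed diagram in \S1.3). This is exactly why Algorithm~1 includes the explicit check "\emph{if} $\br - \ev_\calP(f)$ \emph{is an} $\fl$-\emph{burst}" before adding $f$ to $\mathfrak{L}$: candidates failing the check are discarded, which keeps the output list honest without affecting the $O(1/\epsilon)$ bound since we only prune. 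I would make this pruning argument precise and note that every genuine codeword at burst-distance $\leq \fl$ from $\br$ does survive, because its folded error is an $(n_s-k(s))$-burst and hence is recovered for at least one choice of erasure interval $I$.
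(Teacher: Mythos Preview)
Your proposal is correct and follows essentially the same route as the paper: choose $s$ so that $n_s=\Theta(1/\epsilon)$, fold by $m_s$ and apply column G-IFFT to land in $\IRS_{m_s}(n_s,k(s))$, invoke Lemma~\ref{lem:ldIRS} to bound the list by $O(n_s)=O(1/\epsilon)$, and account for Algorithm~1 step by step. Your extra care in selecting $s$ as the largest index with $n_s\ge 2/\epsilon$ (so that $O(1)$-smoothness forces $n_s=O(1/\epsilon)$) and your explicit discussion of the pruning step are both sound and, if anything, make the argument slightly more precise than the paper's own presentation.
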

\begin{proof}
Given an erroneous codeword $\br$ corrupted by a burst error of length at most $\fl:=n-k-n\epsilon$, we will show that 
\[|B_{\fb}(\br, \fl)\cap \RS[n,k]|=O(1/\ep).\]

Let $m_s=\prod_{j=1}^sp_j\mid n$ and $n_s=n/m_s$, $k(s)=\lceil k/m_s\rceil$. Assume $\bc\in B_{\fb}(\br, \fl)\cap \RS[n,k]$ and $\br=\bc+\be$ for some burst $\be$ of length $\fl$. 
By folding $\br$, $\bc$ and $\be$ with length $m_s$, respectively, we have
\[(\br^T_1,\ldots,\br^T_{n_s})=(\bc^T_1+\be^T_1,\dots,\bc^T_{n_s}+\be^T_{n_s}).\]
By Lemma~\ref{lem:easy}, the number of nonzero $\be_j\neq 0$ is at most $\lfloor \fl/m_s\rfloor+2$. 
By performing G-IFFT on each columns of $(\br^T_1,\ldots,\br^T_{n_s})$, $(\bc^T_1,\dots,\bc^T_{n_s})$ and $(\be^T_1,\dots,\be^T_{n_s})$, we obtain matrices $\sR,\sC,\sE\in \F_q^{m_s\times n_s}$ such that
\[\mathsf R=\left(\begin{matrix}
g_{0,1}            &           g_{0,2}    & \cdots  & g_{0,n_s}\\ 
g_{1,1} & g_{1,2} & \cdots  & g_{1,n_s} \\
\vdots       & \vdots       & \cdots   & \vdots \\
g_{m_s-1,1} &   g_{m_s-1,2} & \cdots &   g_{m_s-1,n_s} \end{matrix}\right)=\sC+\sE,\]
where the $j$-th column $\sR[j]$ are the coefficients of a function $g_j(x):=\sum_{i=0}^{m_s-1}g_{i,j}\bx^{\mathbf i}\in\F_q[x]_{<m_s}$ satisfying $\ev_{\calP[j]}(g_j)=\br_j$. Note that when $\be_j=0$, then the $j$-th column $\sR[j]$ of $\sR$ is $\sC[j]$. If $\be_j\neq 0$, then $\sR[j]\neq \sC[j]$. Consequently, any $\bc\in B_{\fb}(\br, \fl)\cap \RS[n,k]$ corresponds to an $\IRS$ codeword $\sC$ in the ball centered at $\sR$ with radius $\leq \lfloor \fl/m_s\rfloor+2$. 

By Lemma~\ref{lem:ldIRS}, ${\IRS}_{m_s}((k-1)P^{(0)}_{\infty},\calP)$ is $O(1-\cR_s,O(n_s))$-burst list-decodable, where $\cR_s=k(s)/n_s$ and $k(s)=\lceil k/m_s\rceil$. Thus when the length $\fl$ of burst error satisfying $\lfloor \fl/m_s\rfloor+2\leq n_s-k(s)$, i.e., $\fl\leq n-k-2m_s$, $\RS[n,k]$ is burst list-decodable with list size at most $O(n_s)$. For any real number $0< \ep<1-\cR$, by setting $n_s=2/\ep$, we have that $\RS[n,k]_q$ is $O(1-\cR-2/n_s,O(n_s))=(1-\cR-\ep, O(1/\ep))$-burst-list-decodable.

The above proof shows that burst-list decoding $\RS$ code is determined by burst-list decoding $\IRS$ code. Algorithm 1 is designed according to the proof of Lemma \ref{lem:ldIRS}. Let $\sR$ be the erroneous $\IRS$ codeword obtaining by running G-IFFT of $(\br^T_1,\dots,\br^T_{n_s})$. For each burst interval $I\subset [n_s]$ of length $n_s-k(s)$, we can decode all rows of $\mathsf R$ with erasure locations in $I$ simultaneously, obtaining at most one codeword $\sC\in{\IRS}_m((k-1)P^{(0)}_{\infty},\calP)$. Assume
\[\sC=\left(\begin{matrix}
\ev_{\sP_s}(f_0(x_s))\\
\vdots\\
\ev_{\sP_s}(f_{m_s-1}(x_s))\end{matrix}\right),\ f_0(x_s),\dots,f_{m_s-1}(x_s)\in \F_q[x_s]_{<k(s)}.\] 
Let $f(x)=\sum_{i=0}^{m_s-1}f_i(x_s)\bx^{\bi}$. If $\deg_x(f(x))<k$, then $\bc=\ev_{\calP}(f)\in\RS[n,k]_q$ is a codeword in the burst error ball centered at $\br$ with radius $n-k$. We only put the codeword $\bc\in B_{\br}(\br,n-k-2m_s)$ into our decoding list. There are at most $k(s)+1$ burst intervals of length $n_s-k(s)$. Thus we can decode at most $k(s)+1$ codewords $\bc$ from $\br$.

For computational complexity analysis, we see that Step 3 costs $O(n_sBm_s\log m_s)$ operations in $\F_q$ via G-IFFT by Lemma~\ref{lem:gfft}, where $B=\max\{p_i\mid i\in[r]\}$; Steps 4-7 cost at most $m_s\cdot k(s)n_s\log n_s=O(|\mathfrak L|n\log n_s)$ via fast erasure decoding algorithms for RS code~\cite{lin16novel} (see also the Step 5 (i)-(iii) in Algorithm 3 in the Appendix). Step 8 costs at most $O(n)$ operations. Hence, the total decoding complexity is $O(n_sB n\log m_s)+|\mathfrak L|\cdot O(n\log n_s)+O(n)=O(L n\log n)=O_{\ep}(n\log n)$ if $B=O(1)$.
\end{proof}

\subsection{Unique decoding RS codes in time $O_{\ep}(n\log n)$}
Let $\RS[n,k]$ be defined as in Equation~\eqref {eq:RS} with certain multipoint set $\calP$. When $n=t\mid q-1$, by Lemma~\ref{lem:MPset}, $\calP$ is a cyclic subgroup of order $n$ or a coset of a cyclic group. In this case, $\RS[n,k]$ can correct burst errors of length less than $n-k-e$ with miscorrection probability at most $1/q^e$ by Lemma~\ref{lem:UniDec}. When $n=\ell^wt\nmid q-1$ and $t>1$, one can not use Wu's algorithm directly for $\RS[n,k]$. In this case, we choose $n_s\mid t$ hence $n_s\mid\ell-1$. By Remark~\ref{rmk:cyc}, the multipoint set $\sP_s$ in $\IRS_{m_s}((k-1)P^{(0)}_{\infty},\calP)$ forms a cyclic subgroup of order $n_s$ or a coset of it. Then we convert the decoding problem in $\RS[n,k]$ to the decoding problem in $\IRS_{m_s}(n_s,k(s))$ which is solvable by Wu's algorithm. Finally, we can get the correct codeword in $\RS[n,k]$ by the inverse map.


\begin{lemma}\label{lem:UniIRS}
Assume $\alpha\in  \F_q^*$ is an element of order $n$. Let $\sP=\langle \alpha\rangle$ or a coset in $\F_q^*/\langle \alpha\rangle$. Let $\IRS_m(n,k_1,\ldots,k_m)$ be a heterogeneous $\IRS$ code defined on $\sP$ and $k=\max\{k_i:\ i\in[m]\}$. Let $e\geq 1$ be an integer.
Then $\IRS_m(n,k_1,\ldots,k_m)$ can correct burst error of length less than $n-k-e$ with miscorrection probability at most $m/q^e$.
\end{lemma}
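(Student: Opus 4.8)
The plan is to reduce probabilistic unique decoding of the interleaved code to $m$ independent applications of Wu's algorithm (Lemma~\ref{lem:UniDec}), one per constituent row, and then control the total failure probability with a union bound.

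First I would record the structural observation about bursts: if $\sE\in\F_q^{m\times n}$ is a burst of length less than $\fl:=n-k-e$, then all nonzero columns of $\sE$ lie in a single interval $[a,b]$ with $b-a<\fl$, so for every row $i\in[m]$ the vector $\sE(i)$ is again a burst of length less than $\fl$. Since $k=\max_i k_i$, we have $\fl=n-k-e\le n-k_i-e$ for each $i$, so $\sE(i)$ is in particular a burst of length less than $n-k_i-e$. Writing the received matrix as $\sR=\sC+\sE$ with $\sC\in\IRS_m(n,k_1,\dots,k_m)$, the $i$-th row then satisfies $\sR(i)=\sC(i)+\sE(i)$ where $\sC(i)\in\RS[n,k_i]$ is a Reed--Solomon codeword on the evaluation set $\sP$, which by hypothesis is a cyclic subgroup of $\F_q^*$ of order $n$ or a coset of one.

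Next I would invoke Lemma~\ref{lem:UniDec} row by row: applied to $\sR(i)$ with the code $\RS[n,k_i]$, Wu's algorithm returns $\sC(i)$ unless a miscorrection occurs, and the miscorrection probability for that row is at most $1/q^e$ because the burst in $\sR(i)$ has length less than $n-k_i-e$. Assembling the $m$ decoded rows yields a matrix $\sC'\in\IRS_m(n,k_1,\dots,k_m)$ with $\sC'=\sC$ whenever every row is decoded correctly; hence $\sC'\ne\sC$ only if at least one of the $m$ rows is miscorrected, and a union bound over the rows (no independence needed) gives total miscorrection probability at most $m/q^e$. The decoder is simply ``run Algorithm 3 on each of the $m$ rows of $\sR$,'' which also makes the later complexity bookkeeping transparent.

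I do not expect a genuinely hard step here; the two points to state carefully are (i) that the single burst-length threshold $n-k-e$ is simultaneously valid for every constituent code, which is exactly where the choice $k=\max_i k_i$ is used, and (ii) that per-row correctness of the probabilistic decoder composes into correctness of the interleaved decoder via the union bound. Everything else is a direct citation of Lemma~\ref{lem:UniDec}.
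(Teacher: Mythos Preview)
Your proposal is correct and matches the paper's proof essentially line for line: the paper likewise notes that each row $\sR(i)$ carries a burst of length $\fl<n-k-e\le n-k_i-e$, applies Lemma~\ref{lem:UniDec} to each row, and then takes a union bound over the $m$ rows to get miscorrection probability at most $m/q^e$.
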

\begin{proof}
Assume $\sR=\sC+\sE$ is a received erroneous $\IRS$ codeword, where $\sE$ is an $\fl$-burst in $\F_q^{m\times n}$ with length $\fl< n-k-e$. For each row $\sR(i)$ of $\sR$, since $\fl<n-k-e\leq n-k_i-e$, by Lemma~\ref{lem:UniDec},  we can find the correct $\sC(i)$ with miscorrection probability at most $1/q^e$. Thus, after running Wu's algorithm for all rows $\sR(1),\dots,\sR(m)$, $\sC$ can be correctly computed with miscorrection probability at most $m/q^e$.
\end{proof}

 \begin{algorithm}[!h]\label{alg: UniqDec}
  \caption{Probabilistic unique decoding($\br$).}
  \label{alg:UniqDec}
  \begin{algorithmic}[1]
    \Require
     $\br\in\F_q^n$ with burst error of length $\fl< n-k-n\epsilon(e+2)$.
    \Ensure
      $\bc\in  \RS[n,k]\cap B_{\fb}(\br,\fl)$.
    \State Fold $\br$ with length $m_s=\prod_{j=1}^{s}p_j$ into $\br=(\br^T_1,\br^T_2,\ldots,\br^T_{n_s})\in\F_q^{m_s\times n_s}$.
     \For {$j=1$ to $n_s$}
            \State Run G-IFFT on $\br_j$, obtaining a polynomial $g_j(x)=g_{0,j}+g_{1,j}\bx^{\mathbf 1}+\ldots+g_{m_s-1,j}\bx^{\mathbf{m_s-1}}$.
     \EndFor
     \For {$i=0$ to $m_s-1$}
           \State Run probabilistic unique decoding Algorithm 3 in Appendix A with input $\mathbf g_i=(g_{i,1},\dots,g_{i,n_s})$, obtaining $\sC(i)=(\sC_{i,1},\dots,\sC_{i,n_s})$.
           \State Compute a polynomial $f_i(x_s)\in\F_q[x_s]_{<k_i}$ from $\sC(i)$ by G-IFFT.
      \EndFor
     \State Compute $f(x)=f_0(x_s)+f_1(x_s)\bx^{\mathbf 1}+\ldots f_{m_s-1}(x_s)\bx^{\mathbf{m_s-1}}$.
     \State Compute $\bc=\ev_{\calP}(f(x))$.
     \State  \Return $\bc$.
  \end{algorithmic}
\end{algorithm}

\begin{theorem}\label{thm:uniqDecRS}
For any real number $\ep>0$ and an integer $e\geq 1$, $\RS[n,k]$ can probabilistic uniquely correct burst errors of length less than $n-k-n\ep(e+2)$ with miscorrection probability at most $\ep/q^{e-1}$.
Moreover, if $n$ is an $O(1)$-smooth integer, then probabilistic unique decoding can be performed by Algorithm 2 in time $O(\frac{1}{\ep}n\log n)$.
\end{theorem}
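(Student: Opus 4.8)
The plan is to reduce probabilistic unique decoding of $\RS[n,k]$ to that of a much shorter \emph{heterogeneous} interleaved RS code over a cyclic (or coset) evaluation set, via the G-FFT transform $\tau$ of Theorem~\ref{thm:convert}, and then invoke Lemma~\ref{lem:UniIRS}, which runs Wu's algorithm row by row. Write $n=t\ell^{w}$ as in the construction of $\RS[n,k]$ in Equation~\eqref{eq:RS}; if $n\mid q-1$ (the case $w=0$) one may instead apply Lemma~\ref{lem:UniDec} to $\RS[n,k]$ directly, so assume $w\ge1$ and $t>1$. Pick a factorization cutoff $s\ge w$ so that $n_s:=n/m_s=\prod_{j=s+1}^{r}p_j$ divides $t$ (hence $\ell-1$, hence $q-1$) and is as small as possible subject to $n_s\ge(e+3)/((e+2)\ep)$; when $n$, and hence $t$, is $O(1)$-smooth such an $n_s=O(1/\ep)$ exists, and then $m_s=n/n_s=\Theta(\ep n)$ is $O(1)$-smooth too. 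By the discussion preceding Lemma~\ref{lem:UniIRS} (see Remark~\ref{rmk:cyc}), the evaluation set $\sP_s$ is then a cyclic subgroup of $\F_q^{*}$ or a coset of one, so Lemma~\ref{lem:UniIRS} applies to $\IRS_{m_s}((k-1)P^{(0)}_{\infty},\calP)=\IRS_{m_s}(n_s,k_1,\dots,k_{m_s})$, whose largest constituent dimension is $k(s):=\lceil k/m_s\rceil\le n_s$ by Remark~\ref{rmk:irs}.

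For the decodability count, let $\br=\bc_0+\be$ with $\be$ an $\fl$-burst, $\fl<n-k-n\ep(e+2)$. Folding $\br$ by $m_s$ and applying the column $\GIFFT$, i.e.\ the map $\tau$, sends $\bc_0$ to a codeword $\sC_0\in\IRS_{m_s}((k-1)P^{(0)}_{\infty},\calP)$ and, by Lemma~\ref{lem:easy} together with the fact that a column $\GIFFT$ preserves the set of nonzero columns, sends $\be$ to a burst $\sE$ in $\F_q^{m_s\times n_s}$ of length at most $\lfloor\fl/m_s\rfloor+2$. The choice of $s$ gives $(e+3)m_s\le n\ep(e+2)$, so $\fl<n-k-(e+3)m_s$, and a short computation with $n_s=n/m_s$ and $k(s)=\lceil k/m_s\rceil$ turns this into $\lfloor\fl/m_s\rfloor+2<n_s-k(s)-e$, the hypothesis of Lemma~\ref{lem:UniIRS} with parameter $e$. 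Running Wu's algorithm independently on the $m_s$ rows therefore recovers $\sC_0$ with failure probability at most $m_s/q^{e}$; and since $m_s\le n\ep(e+2)/(e+3)<n\ep\le q\ep$, we get $m_s/q^{e}<\ep/q^{e-1}$, the claimed miscorrection bound. Applying $\tau^{-1}$ to $\sC_0$ returns $\bc_0$.

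For the running time, assume $n$ is $O(1)$-smooth and set $B=\max_i p_i=O(1)$. In Algorithm~2: folding $\br$ and running the $n_s$ length-$m_s$ column $\GIFFT$s costs $n_s\cdot O(Bm_s\log m_s)=O(n\log n)$ by Lemma~\ref{lem:gfft}; running Algorithm~3 on each of the $m_s$ rows (length $n_s$, $O(1)$-smooth) costs $m_s\cdot O(n_s\log n_s)=O(n\log n_s)$ by Lemma~\ref{lem:UniDec}, followed by one $n_s$-point inverse DFT per row to recover the $f_i(x_s)$; assembling $f=\sum_{i=0}^{m_s-1}f_i(x_s)\bx^{\bi}$ costs $O(n)$; and re-encoding $\bc=\ev_{\calP}(f)$ costs $O(n\log n)$ by the two-step G-FFT procedure of \S~\ref{sec:fastEncAG}. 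Summing, as in the proof of Theorem~\ref{thm:listDecRS}, gives $O(\frac{1}{\ep}n\log n)$.

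The heavy lifting is already carried out by Theorem~\ref{thm:convert}, Lemma~\ref{lem:gfft}, and Lemma~\ref{lem:UniIRS}; what remains here is essentially parameter bookkeeping, which is the delicate part. The cutoff $s$ must be chosen so that \emph{simultaneously} (i) $n_s\mid t$, so that $\sP_s$ is cyclic and Wu's algorithm applies; (ii) $(e+3)m_s\le n\ep(e+2)$, so that the Singleton-type radius $n-k-n\ep(e+2)$ survives the floor/ceiling losses incurred by folding; and (iii) $m_s<\ep q$, so that the union bound $m_s/q^{e}$ over the $m_s$ rows still beats $\ep/q^{e-1}$ — this last inequality is exactly why the statement carries the mild loss $e+2$ rather than $e$, and it is where the hypothesis $n\le q$ enters. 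A second point to check with care is that folding followed by the column $\GIFFT$ produces a genuine burst $\sE$ and not merely a sparse error matrix, since Wu's algorithm needs the errors of each row confined to one interval.
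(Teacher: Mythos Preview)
Your proposal is correct and follows essentially the same strategy as the paper: convert $\RS[n,k]$ via the G-FFT map $\tau$ to the heterogeneous interleaved code $\IRS_{m_s}((k-1)P^{(0)}_{\infty},\calP)$ over a cyclic (coset) evaluation set $\sP_s$, apply Lemma~\ref{lem:UniIRS} row by row, and pull the result back with $\tau^{-1}$. The paper simply sets $n_s=1/\ep$ and derives the burst-length condition from $\fl/m_s+2<n_s-k(s)-e$; your finer choice $n_s\ge(e+3)/((e+2)\ep)$ serves the same purpose while absorbing the ceiling in $k(s)=\lceil k/m_s\rceil$ more carefully, and your explicit checks that $n_s\mid t$ (so $\sP_s$ is cyclic) and $m_s<\ep q$ (for the union bound) make transparent what the paper leaves implicit.
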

\begin{proof}
Assume $n=\ell^wt=\prod_{j=1}^rp_j$, where $t\mid \ell-1$. Choose $n_s=\prod_{j=s+1}^rp_j\mid t$ and $n_s=1/\ep$. Hence $n$ is $\ell$-smooth if $w\neq 0$, and $O(1/\ep)$-smooth for sufficiently small $\ep$ if $w=0$. Overall, $n$ is $O(1/\ep)$-smooth. Let $m_s=n/n_s=\prod_{j=1}^sp_j$. By Theorem~\ref{thm:convert}, we can map $\RS[n,k]$ to the AG-based $\IRS$ code $\IRS_{m_s}((k-1)P_{\infty},\calP)=\IRS_{m_s}(n,k_1,\dots,k_{m_s})$. In this case, by Lemma~\ref{lem:MPset}, the multipoint set $\sP_s$ used to define $\IRS_{m_s}((k-1)P^{(0)}_{\infty},\calP)$ is $\langle\xi\mid \xi^{n_s}=1\rangle \cdot L(\gamma)^{t/n_s}$, which is a coset in $\F_q^*/\langle \xi\rangle$ of order $n_s$. 

For any erroneous word $\br=\bc+\be\in\F_q^n$ corrupted by an $\fl$-burst $\be$, after folding and running G-IFFT, we get an erroneous word $\sR=\sC+\sE$ corrupted by an $\lfloor\fl/m_s\rfloor+2$-burst $\sE$ in $\F_{q}^{m_s\times n_s}$. 
By Lemma~\ref{lem:UniIRS}, $\IRS_{m_s}(n,k_1,\dots,k_{m_s})$ can correct  burst errors with length less than $n_s-k_0-e$ with miscorrection probability at most $m_s/q^e\leq \ep/q^{e-1}$ and any $\IRS$ codeword $\sC$ uniquely corresponds to an $\RS$ codeword $\bc=\tau^{-1}(\sC)$. Thus, $\RS[n,k]$ can correct bursts of length $\fl< n-k-n\ep(e+2)$ (which is derived from $\fl/m_s+2< n_s-k_0-e$), and the miscorrection probability of unique decoding is less than $\ep/q^{e-1}$.

From the above proof, the unique decoding problem in $\RS[n,k]$ is determined by the unique decoding problem in $\IRS_{m_s}(n,k_1,\dots,k_{m_s})$. According to the proof of Lemma~\ref{lem:UniIRS}, we give the probabilistic unique decoding algorithm as in Algorithm~2. The decoding cost includes the complexity of probabilistic unique decoding Algorithm~3 for $\RS$ code and the G-IFFT. Steps 2-4 cost $n_sO(\frac{1}{\ep}m_s\log m_s)$ operations;  Steps 5-8 cost $m_sO(n_s\log n_s)$ operations; Step 9 costs $O(n)$ operations and Step 10 costs $O(\frac{1}{\ep}n\log n)$ operations by G-FFT. Therefore, the total complexity is $O(\frac 1{\ep}n\log n)$.
\end{proof}

\section{Decoding \rAG codes with burst errors}\label{sec:AG}
In this section, we consider burst decoding of algebraic geometry codes $\mathfrak{C}(\lambda P^{(0)}_{\infty},\calP)$ that constructed in Section~\ref{sec:IRS}. As these AG codes can be mapped to interleaved \rRS code $\IRS_m[n,k]$, we will show that if the constituent $\RS[n,k]$ satisfies the construction given in Section~\ref{sec:RS}, then the list and probabilistic unique decoding results for $\RS[n,k]$ can be extended to $\mathfrak{C}(\lambda P^{(0)}_{\infty},\calP)$. 

\subsection{List decoding \rAG codes}

\begin{theorem}\label{thm:LDAG}
Let $E/F$ be an algebraic function filed satisfying (P1)-(P4) and the algebraic geometry code $\fC(\lambda P^{(0)}_{\infty},\calP)$ be defined as in Equation~\eqref{eq:agc}. Let $\IRS_{m}(\lambda P^{(0)}_{\infty},\calP)$ be the corresponding AG-based $\IRS$ code. Suppose that the constituent $\RS[n,k]$ of $\IRS_{m}(\lambda P^{(0)}_{\infty},\calP)$ satisfies assumptions in Theorem \ref{thm:listDecRS}. Then $\fC(\lambda P^{(0)}_{\infty},\calP)$ is $\left(1-\cR-\fg/N-\epsilon, O(1/\epsilon)\right)$-burst list-decodable, where $\cR$ is the code rate. Furthermore, if $n$ and $m$ are both $O(1)$-smooth, it can be list-decoded in time $O_{\ep}(N\log N)$. 
\end{theorem}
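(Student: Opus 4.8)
The plan is to reduce, via two successive G-FFT steps, to burst list decoding of a heterogeneous interleaved RS code of \emph{constant} length $n_s=\Theta(1/\epsilon)$, and then invoke the elementary erasure-based decoder used in the proof of Lemma~\ref{lem:ldIRS}. First, by Theorem~\ref{thm:convert} the map $\tau$ is an $\F_q$-linear bijection from $\fC(\lambda P^{(0)}_{\infty},\calP)$ onto $\IRS_m(\lambda P^{(0)}_{\infty},\calP)$, which by Remark~\ref{rmk:irs} is a heterogeneous $\IRS_m(n,k_0,\dots,k_{m-1})$ with largest constituent dimension $k_0=k$ (recall $\lambda=(k-1)m$). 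Since $\tau$ is ``fold with length $m$, then apply a length-$m$ G-IFFT to each column,'' it acts column-locally on the folded array, so exactly as in Lemma~\ref{lem:easy} an $\fl$-burst $\be\in\F_q^N$ is carried to a matrix burst $\sE\in\F_q^{m\times n}$ whose nonzero columns lie in an interval of length at most $\lfloor\fl/m\rfloor+2$; thus $\br=\bc_0+\be$ becomes $\sR=\sC_0+\sE$ with $\sC_0\in\IRS_m(n,k_0,\dots,k_{m-1})$ and $\sE$ an $(\lfloor\fl/m\rfloor+2)$-burst.

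By hypothesis the constituent $\RS[n,k]$ is of the form~\eqref{eq:RS}, so following the proof of Theorem~\ref{thm:listDecRS} we pick a divisor $n_s\mid n$ with $n_s=\Theta(1/\epsilon)$, set $m_s=n/n_s$ and $k(s)=\lceil k/m_s\rceil=O(1/\epsilon)$, and apply to each of the $m$ rows the further transform ``fold with length $m_s$, then length-$m_s$ G-IFFT on each column.'' Composing with $\tau$ puts $\fC(\lambda P^{(0)}_{\infty},\calP)$ in $\F_q$-linear bijection with a heterogeneous interleaved RS code $\IRS_{mm_s}(n_s,\cdot)$ over the subfield $\F_q(x_s)$ (via the combined extension $E/\F_q(x_s)$ of degree $mm_s$) whose largest constituent dimension is $k(s)$; under it $\sE$ is carried to a burst of length at most $\lfloor(\lfloor\fl/m\rfloor+2)/m_s\rfloor+2$, and \emph{all} $mm_s$ rows are corrupted inside the same column interval. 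Now run the erasure argument of Lemma~\ref{lem:ldIRS}: for each of the $\le k(s)+1$ windows of $n_s-k(s)$ consecutive columns, erase those columns and erasure-decode every row (each row comes from a polynomial of degree less than $k(s)$, hence is recovered from the $k(s)$ surviving evaluations), getting at most one codeword per window; pull each candidate back through the two transforms and keep only those $\bc$ with $\br-\bc$ an $\fl$-burst. This yields a list of size $\le k(s)+1=O(1/\epsilon)$ whenever the folded burst length is $\le n_s-k(s)$.

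For the parameters, the last inequality holds once $\fl\le m m_s(n_s-k(s))-O(mm_s)=mn-m\,m_s k(s)-O(mm_s)$; using $m_s k(s)\le k+m_s$ and $mk=\lambda+m$ this becomes $\fl\le N-\lambda-O(mm_s)$, and since $mm_s=N/n_s=\Theta(\epsilon N)$ the bound $\fl\le N-\lambda-O(\epsilon N)$ suffices. Substituting $\lambda=\cR N+\fg-1$ (from $K=\lambda-\fg+1$) and absorbing constants into $\epsilon$ gives $(1-\cR-\fg/N-\epsilon,\,O(1/\epsilon))$-burst list-decodability; that a divisor $n_s\mid n$ of size $\Theta(1/\epsilon)$ exists when $n$ is $O(1)$-smooth is the observation already used in Theorem~\ref{thm:listDecRS}. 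When both $n$ and $m$ are $O(1)$-smooth so are $m_s$ and $n/n_s$, and Lemma~\ref{lem:gfft} makes the first transform cost $n\cdot O(m\log m)=O(N\log N)$, the second-stage transforms cost $m\cdot n_s\cdot O(m_s\log m_s)=O(N\log N)$, the $O(1/\epsilon)$ rounds of fast erasure decoding over all $mm_s$ rows cost $O(\tfrac1\epsilon)\cdot (mm_s)\cdot O(n_s\log n_s)=O_\epsilon(N\log N)$, and pulling back the $O(1/\epsilon)$ candidates costs $O(\tfrac1\epsilon N\log N)$; the total is $O_\epsilon(N\log N)$.

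The part I expect to be the main obstacle is the bookkeeping of the second step: checking that composing the two G-FFT transforms really yields an interleaved RS code over $\F_q(x_s)$ of length $n_s$ with all $mm_s$ rows confined to a single burst window, and tracking the burst length through both foldings tightly enough that the loss stays $O(\epsilon N)$, so that the radius genuinely attains $1-\cR-\fg/N$ and not merely $1-\cR-\fg/N-\Omega(1)$. Everything else is a direct appeal to Theorem~\ref{thm:convert}, Lemma~\ref{lem:ldIRS}, Lemma~\ref{lem:gfft}, and the proof of Theorem~\ref{thm:listDecRS}.
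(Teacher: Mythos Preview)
Your proposal is correct and follows essentially the same route as the paper. The one cosmetic difference is that the paper does not compose two transforms: it first observes that the tower $E=E_0\supsetneq\cdots\supsetneq E_r=\F_q(x)$ concatenates with the tower $\F_q(x)=\F_q(x_0)\supsetneq\cdots\supsetneq\F_q(x_t)$ to give a single tower for $E/\F_q(x_s)$ of degree $mm_s$ that still satisfies (P1)--(P4), and then applies \emph{one} fold of length $mm_s$ followed by one length-$mm_s$ G-IFFT. This yields the folded burst length $\lfloor\fl/(mm_s)\rfloor+2$ directly and makes the very ``bookkeeping of the second step'' you worried about disappear: there is only one column partition, so the single-burst-window property for all $mm_s$ rows is immediate from Lemma~\ref{lem:easy}. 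Your two-step version is equivalent (it is just the recursive structure of G-FFT made explicit), but costs you the extra $+2/m_s$ in the burst bound and the side argument that the row bursts share a common column interval; neither affects the final $O(\epsilon N)$ loss.
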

\begin{proof}
Since $\lambda$ satisfying $2\fg(E)-1\leq \lambda <N$, the code rate $\cR=(\lambda+1-\fg)/N$. Given an erroneous codeword $\br$ corrupted by a burst error of length at most $N(1-\cR-\fg/N-\ep)\leq \fl:=N-\lambda-N\epsilon$, we will show that 
\[|B_{\fb}(\br, \fl)\cap \fC(\lambda P^{(0)}_{\infty},\calP)|=O(1/\ep).\]

By assumptions (P1)-(P4), let $m=[E:F]=\prod_{i=1}^rp_i$. The field extension $E/\F_q(x)$ has a field tower
\begin{equation}\label{eq:towerE}
    E_0\supsetneq E_1\supsetneq \dots\supsetneq E_r=\F_q(x),\ [E_{i-1}: E_i]=p_i,\ i=1,\dots,r.
\end{equation}
If the the constituent $\RS[n,k]$ of $\IRS_{m}(\lambda P^{(0)}_{\infty},\calP)$ satisfies assumptions in Theorem \ref{thm:uniqDecRS}, i.e., $n$ has a factorization $n=\prod_{j=1}^{t}p_{r+j}$ and $\F_q(x)$ also has a tower of subfields: $$\F_q(x_{j-1})\supsetneq\F_q(x_{j})\ \text{with}\ [\F_q(x_{j-1}):\F_q(x_{j})]=p_{r+j}\ \text{for}\ j=1,\dots,t,$$ then the field tower in Equation~\eqref{eq:towerE} can be expanded to a field tower of length $r+t$:
\begin{equation}\label{eq:towerE}
    E_0\supsetneq \dots\supsetneq E_r=\F_q(x)=\F_q(x_0)\supsetneq \F_q(x_1)\supsetneq \dots\supsetneq \F_q(x_{t}).
\end{equation}
All assumptions (P1)-(P4) still hold for $E/\F_q(x_t)$ with extension degree $N=mn$. In this case, the basis of $\cL(\lambda P^{(0)}_{\infty}$ to perform G-FFT of length $nm$ in Lemma~\ref{lem:basis} is
\begin{equation}\label{eq:basislong}
   \cB_{r+t}:= \left\{\bx^{\bj}\by^{\bu}\mid \nu_{P^{(0)}_{\infty}}(\bx^{\bj}\by^{\bu})\geq -\lambda,\ 0\leq u\leq m-1,\ j\geq 0\right\},
\end{equation}
where $\bx^{\bj}$ and $\by^{\bu}$ are defined as Equations ~\eqref{eq:expans} and \eqref{eq:basisx}, respectively. Under the above basis, every function $f\in\cL(\lambda P^{(0)}_{\infty})$ has a unique representation as $f=\sum_{j,u}a_{j,u}\bx^{\bj}\by^{\bu},$ where $a_{j,u}\in\F_q$.

Similarly, for some $s\in[1,t]$, let $n_s=\prod_{j=s+1}^tp_{r+j}$ , $m_s=n/n_s$, and $k(s)=\lceil k/m_s\rceil$. Then we can map $\fC(\lambda P^{(0)}_{\infty},\calP)$ into an $m_sm$-interleaved $\RS$ code $\IRS_{m_sm}(n_s,k(s))$. The left discussion is the same as in the proof of Theorem~\ref{thm:listDecRS}. 
By folding $\br$ with length $m_sm$ and then perform G-IFFT on each column, any codeword $\bc$ in $B_{\fb}(\br, \fl)\cap \fC(\lambda P^{(0)}_{\infty},\calP)$ corresponds to a codeword $\sC$ in $B_{\fb}(\sR, \lfloor \fl/mm_s\rfloor+2)\cap \IRS_{mm_s}(\lambda P^{(0)}_{\infty},\calP)$. Since $\IRS_{m_sm}(n_s,k(s))$ is $(1-k(s)/n_s,O(n_s))$-burst list decodable, by taking $n_s=2/\ep$, $\fC(\lambda P^{(0)}_{\infty},\calP)$ can be list decoded with list size $O(1/\ep)$ when burst error of length $\fl$ satisfying
\[\fl\leq m_sm(n_s-k(s))-2N/n_s= N-\lambda-N\ep.\]


%

The decoding algorithm for $\mathfrak{C}(\lambda P^{(0)}_{\infty},\calP)$ is very similar to Algorithm 1, with code length $N$ and fold length $m_sm$. Hence, if $N=mn=\prod_{i=1}^rp_i\prod_{j}^tp_{r+j}$ is $O(1)$-smooth, by the local G-FFT and G-IFFT of length $m_sm$, the total decoding complexity is
$$n_sO(\frac{1}{\ep}m_sm\log m_sm)+mm_s{O}(n^2_s\log n_s)={O}(\frac{1}{\ep}N\log N)={O}_{\ep}(N\log N).$$

\end{proof}

\subsection{Unique decoding of \rAG codes}
Before introducing our probabilistic probabilistic unique decoding of \rAG codes for burst errors, let us derive the limit of deterministic unique decoding of AG codes for burst errors.

As every code can uniquely correct adversary errors up to half of the minimum distance with probability $1$, it also uniquely corrects burst errors up to half of the minimum distance  with probability $1$. Half of the minimum distance is also the limit for AG codes to correct burst errors.  We prove this result in the following lemma. 

\begin{lemma}\label{lem:uniqburst}
    Assume $\fC(\lambda P_{\infty},\calP)$ is an \rAG code of length $N$, dimension $K=\lambda+1-\fg$, where $\fg$ is the genus of the underlying curve.  If $\lambda\ge \fg$, then  $\fC(\lambda P_{\infty},\calP)$ can deterministically correct at most $\lfloor\frac{n-\lambda+\fg-1}2\rfloor$ burst errors.
\end{lemma}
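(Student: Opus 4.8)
The plan is to deduce the bound from the classical Rieger inequality for burst-error-correcting linear codes, feeding in only the dimension $K=\lambda+1-\fg$ of $\fC(\lambda P_\infty,\calP)$; no further algebraic–geometric information is needed. (This is the list-size-one case of the Rieger bound on burst list-decoding radii quoted from \cite{roth09} in the introduction, but the direct argument below is short and self-contained.) Throughout I read ``$\fC(\lambda P_\infty,\calP)$ can deterministically correct at most $m$ burst errors'' as: no decoder recovers every codeword from every burst error of length $\le m+1$. So it suffices to prove the general statement that if a linear $[N,K]$ code $C\subseteq\F_q^N$ corrects all bursts of length $\le t$, then $2t\le N-K$, and then to substitute $K=\lambda+1-\fg$, i.e. $N-K=n-\lambda+\fg-1$.

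First I would build the obstruction by a dimension count. Set $\mu=\min(2t,N)$ and let $V\subseteq\F_q^N$ be the subspace of words supported on the first $\mu$ coordinates, so $\dim V=\mu$. Suppose, for contradiction, that $2t>N-K$. Since $\lambda\ge\fg$ forces $K=\lambda+1-\fg\ge1$ and hence $N-K<N$, we get $\mu=\min(2t,N)>N-K$, which is the codimension of $C$, so $V\cap C\ne\{0\}$. Choose a nonzero $\bc\in V\cap C$ and write $\bc=\be_1+\be_2$, where $\be_1$ is the restriction of $\bc$ to coordinates $[1,t]$ and $\be_2$ its restriction to $[t+1,\mu]$. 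Both $\be_1$ and $\be_2$ are bursts of length $\le t$: the second because the interval $[t+1,\mu]$ has length $\mu-t\le t$ (and is empty when $t\ge N$).

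Next comes the decoding failure. Consider the received word $\br:=\be_1=\bc-\be_2$. Then $\br-\mathbf 0=\be_1$ and $\br-\bc=-\be_2$ are both bursts of length $\le t$, so $\br$ lies within burst-radius $t$ of the two distinct codewords $\mathbf 0$ and $\bc$; no deterministic decoder can distinguish them, contradicting that $C$ corrects all bursts of length $\le t$. Hence $2t\le N-K$, i.e. $t\le\frac{N-K}{2}$, and since $t\in\Z$ this gives $t\le\lfloor(N-K)/2\rfloor$. Substituting $K=\lambda+1-\fg$ yields $t\le\big\lfloor\frac{n-\lambda+\fg-1}{2}\big\rfloor$, which is exactly the asserted maximal correctable burst length.

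There is no genuinely hard step here: the proof is one dimension count plus the standard ``two overlapping bursts producing a common decoding ambiguity'' trick. The only points that need a little care are the degenerate range $2t\ge N$ (handled by allowing $V=\F_q^N$ and using $K\ge1$) and the verification that $\be_1-(-\be_2)=\bc$ is a \emph{nonzero} codeword, so that the ambiguity is real rather than vacuous; both are immediate from $\lambda\ge\fg$.
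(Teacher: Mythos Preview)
Your proof is correct. Both your argument and the paper's reach the same contradiction---a received word within burst radius $t$ of two distinct codewords---but they manufacture the bad codeword differently. The paper works inside the AG structure: it invokes Riemann--Roch to pick a nonzero $f\in\cL(\lambda P_\infty-\sum_{i=1}^{\lambda-\fg}P_i)$, so the resulting codeword vanishes on the first $\lambda-\fg=K-1$ coordinates and is supported on an interval of length $N-K+1$, which it then splits into two bursts of length $\le t$. You instead run the classical Rieger argument for an arbitrary linear $[N,K]$ code: a pure dimension count ($\dim V+\dim C>N$) produces a nonzero codeword supported on $[1,2t]$, which you split the same way. Your route is more elementary and more general (no Riemann--Roch, no curve), and it makes transparent that the bound is really just the Rieger/Singleton obstruction with $N-K=n-\lambda+\fg-1$ plugged in; the paper's route has the minor advantage of exhibiting the obstructing codeword concretely in terms of the function field. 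Either way the substance is the same short pigeonhole.
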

\begin{proof} Suppose that it were false. Then $\fC(\lambda P_{\infty},\calP)$ can correct $1+\lfloor\frac{N-\lambda+\fg-1}2\rfloor$ burst errors. Let $\cP =\{P_1,\dots,P_N\}$ be the evaluation point set for  $\fC(\lambda P_{\infty},\calP)$. Set $t=1+\lfloor\frac{N-\lambda+\fg-1}2\rfloor$. Choose a nonzero function from the Riemann-Roch space $\cL(\lambda P_{\infty}-\sum_{i=1}^{\lambda-\fg}P_i)$ (note that this is possible as its dimension is at least $1$). Let $\be=(0,\dots,0,f(P_{\lambda-\fg+1}),\dots,f(P_{\lambda-\fg+t}), 0, \dots,0)=(f(P_1),\dots,f(P_{\lambda-\fg}),f(P_{\lambda-\fg+1}),\dots,f(P_{\lambda-\fg+t}), 0, \dots,0)$ be a vector in $\F_q^n$. Thus, the codeword $(f(P_1),\dots,f(P_N))$ can become $\be$ with at most $N-(\lambda-\fg+t)\le t$ burst errors, while the all-zero codeword can become $\be$ with at most $t$ burst errors. This means that the word $\be$ has burst Hamming distance at most $t$ from both the all-zero codeword and $(f(P_1),\dots,f(P_N))$. The desired result follows. 
\end{proof}

\begin{theorem}\label{thm:uniqDecAG}
Let $E/F$ be an algebraic function filed satisfying (P1)-(P4) and the algebraic geometry code $\fC(\lambda P^{(0)}_{\infty},\calP)$ be defined as in Equation~\eqref{eq:agc}. Let $\IRS_{m}(\lambda P^{(0)}_{\infty},\calP)$ be the corresponding AG-based $\IRS$ code. Suppose that the constituent $\RS[n,k]$ of $\IRS_{m}(\lambda P^{(0)}_{\infty},\calP)$ satisfies the assumptions in Theorem~\ref{thm:uniqDecRS}. Then for any real number $\ep>0$ and positive integer $e$ such that $\ep m\leq q^{e-1}$, $\fC(\lambda P^{(0)}_{\infty},\calP)$ can probabilistic uniquely correct burst errors of length less than $N\left(1-\cR-\fg/N-\ep(2+e)\right)$ with miscorrection probability at most $\ep m/q^{e-1}$, where $\cR$ is the code rate. Furthermore, if both $m$ and $n$ are $O(1)$-smooth integers, it can be uniquely decoded in time $O_{\ep}(N\log N)$.
\end{theorem}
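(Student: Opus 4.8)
The plan is to reduce, exactly as in the Reed--Solomon case of Theorem~\ref{thm:uniqDecRS}, to probabilistic unique decoding of a short \emph{heterogeneous} interleaved RS code whose evaluation set is a coset of a cyclic group, so that Wu's algorithm (Algorithm~3) can be applied row by row. First I would stack the two towers: the chain $E=E_0\supsetneq\dots\supsetneq E_r=\F_q(x)$ furnished by (P1)-(P3) for $E/\F_q(x)$, and the affine-linear chain $\F_q(x)=\F_q(x_0)\supsetneq\dots\supsetneq\F_q(x_t)$ witnessing that the constituent $\RS[n,k]$ meets the hypotheses of Theorem~\ref{thm:uniqDecRS} (so $n=\prod_{j=1}^{t}p_{r+j}$, with the multiplicative primes of the acting group placed at the bottom of the chain, as in Remark~\ref{rmk:cyc}). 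One checks, just as in the proof of Theorem~\ref{thm:LDAG}, that the composite extension $E/\F_q(x_t)$ still satisfies (P1)-(P4) with degree $N=mn$, and that Lemma~\ref{lem:basis} gives every $f\in\cL(\lambda P^{(0)}_{\infty})$ a unique expansion $f=\sum_{j,u}a_{j,u}\bx^{\bj}\by^{\bu}$ along this composite tower.

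Next, fix $s\in[1,t]$, put $n_s=\prod_{j=s+1}^{t}p_{r+j}$ and $m_s=n/n_s$, so that $E/\F_q(x_s)$ has degree $mm_s$ and again satisfies (P1)-(P4). Theorem~\ref{thm:convert} then puts $\fC(\lambda P^{(0)}_{\infty},\calP)$ in bijection with the AG-based interleaved code over $\F_q(x_s)$, which by Remark~\ref{rmk:irs} is a heterogeneous code $\IRS_{mm_s}(n_s,k_1,\dots,k_{mm_s})$ with largest constituent dimension $k'=\lfloor\lambda/(mm_s)\rfloor+1$; by Lemma~\ref{lem:MPset} and Remark~\ref{rmk:cyc}, choosing $n_s$ to divide the multiplicative part $t$ of $n$, its evaluation set $\sP_s$ is a coset of a cyclic group of order $n_s$. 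The bijection is implemented by folding with length $mm_s$ followed by a column G-IFFT, and since each column map is invertible, Lemma~\ref{lem:easy} shows that an $\fl$-burst in $\F_q^N$ is sent to a $\bigl(\lfloor\fl/(mm_s)\rfloor+2\bigr)$-burst matrix in $\F_q^{(mm_s)\times n_s}$, whose nonzero columns are exactly the columns where the received word disagrees with the true codeword. Hence decoding $\fC(\lambda P^{(0)}_{\infty},\calP)$ against an $\fl$-burst reduces to burst-decoding this heterogeneous IRS code over a cyclic coset, and Lemma~\ref{lem:UniIRS} does this: running Wu's algorithm on each of the $mm_s$ rows recovers the IRS codeword whenever the folded burst length is less than $n_s-k'-e$, with miscorrection probability at most $(mm_s)/q^{e}$.

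The remaining step is the parameter choice, entirely parallel to the proof of Theorem~\ref{thm:uniqDecRS}. Take $n_s$ to be a divisor of $t$ with $n_s=\Theta(1/\ep)$, which exists because $n$, and hence $t$, is $O(1)$-smooth; then $mm_s=N/n_s\le \ep N$. Unwinding $\lfloor\fl/(mm_s)\rfloor+2<n_s-k'-e$ using $mm_s k'\le\lambda+mm_s$ and $\cR+\fg/N=(\lambda+1)/N$ yields the stated radius $\fl<N\bigl(1-\cR-\fg/N-\ep(2+e)\bigr)$, while the miscorrection bound $(mm_s)/q^{e}\le\ep N/q^{e}\le \ep m/q^{e-1}$ follows from $N=mn\le mq$ (and the hypothesis $\ep m\le q^{e-1}$ is exactly what makes this a legitimate probability). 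The decoding algorithm is the verbatim analogue of Algorithm~2 with fold length $mm_s$ replacing $m_s$: fold $\br$, run $n_s$ column G-IFFTs of length $mm_s$, decode each of the $mm_s$ rows by Algorithm~3, reassemble $f=\sum_{j,u}a_{j,u}\bx^{\bj}\by^{\bu}$, and return $\ev_{\calP}(f)$. When $m$ and $n$ are both $O(1)$-smooth we have $B=O(1)$, so by Lemma~\ref{lem:gfft} the column G-IFFTs cost $n_s\cdot O\bigl(mm_s\log(mm_s)\bigr)=O(N\log N)$, the row decodings cost $mm_s\cdot O_\ep(n_s\log n_s)=O_\ep(N)$, and the final G-FFT costs $O(N\log N)$, for a total of $O_\ep(N\log N)$ (more precisely $O(\frac{e}{\ep} N\log N)$, which collapses to $O_\ep(N\log N)$ when $e$ is bounded in terms of $\ep$, e.g.\ for RS and Hermitian codes).

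The step I expect to be the real obstacle is the middle one: one must order the prime factors of the acting group so that $\F_q(x_s)$ sits below only the \emph{multiplicative} part — this is what makes $\sP_s$ a cyclic coset and hence lets Wu's algorithm be applied row-wise — while simultaneously verifying that the composite extension $E/\F_q(x_s)$ still satisfies all of (P1)-(P4), so that Lemma~\ref{lem:basis}, Lemma~\ref{lem:gfft}, and Theorem~\ref{thm:convert} remain valid; once these structural points are in place, the radius and probability estimates are a routine computation mirroring Theorems~\ref{thm:listDecRS} and~\ref{thm:uniqDecRS}.
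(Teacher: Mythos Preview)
Your proposal is correct and follows essentially the same route as the paper's proof: stack the tower $E\supsetneq\dots\supsetneq\F_q(x)\supsetneq\dots\supsetneq\F_q(x_s)$, fold with length $mm_s$ and apply column G-IFFT to land in a heterogeneous $\IRS_{mm_s}(n_s,\mathbf{k})$ over a cyclic coset $\sP_s$, invoke Lemma~\ref{lem:UniIRS} (Wu row-by-row), and unwind with $n_s\approx 1/\ep$ to get the stated radius and miscorrection bound; the complexity analysis is likewise identical. If anything you are slightly more explicit than the paper about why $\sP_s$ must be a cyclic coset (namely that $n_s$ is taken to divide the multiplicative part of the acting group), which is exactly the structural point you flagged as the real obstacle.
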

\begin{proof}
For any $\ep>0$, let $n_s=1/\ep$, $m_s=n/n_s$ and $k(s)=\lceil k/m_s\rceil$ be defined as in the proof of Theorem~\ref{thm:LDAG}. Thus $N=n_sm_sm$ and $\lambda=(k-1)m=(k(s)-1)m_sm$. By assumptions, we can map $\fC(\lambda P^{(0)}_{\infty},\calP)$ into AG-based $m_sm$-interleaved $\RS$ code $\IRS_{m_sm}(n_s,k(s)\lambda P^{(0)}_{\infty},\calP)$. For probabilistic unique decoding, we need the dimension of each constituent $\RS$ code in $\IRS_{m_sm}(n_s,k(s)\lambda P^{(0)}_{\infty},\calP)$. Under the basis $\cB_{r+t}$ as in Equation~\eqref{eq:basislong}, any $f\in \cL(\lambda \fP_{\infty})$ can be decomposed as
\[f=\sum_{u=0}^{m-1}\sum_{j=0}^{m_s-1} f_{u,j}(x_s)\bx^{\bj}\by^{\bu},\ -m(m_s\deg(f_{j,u})+j)-\nu_{P^{(0)}_{\infty}}(\by^{\bu})\geq -\lambda. \]
Thus, $\IRS_{m_sm}(n_s,k(s)\lambda P^{(0)}_{\infty},\calP)={\IRS}_{mm_s}(n_s,\mathbf{k})$ defined on the multipoint set $\sP_s$, where
\[\begin{split}&\mathbf{k}=(k_{0,0},\dots,k_{0,m_s-1},\dots,k_{m-1,0},\dots,k_{m-1,m_s-1}),\ \text{and}\\
 k_u=\left\lfloor \frac{\lambda+\nu_{P^{(0)}_{\infty}}(\by^{\bu})}{m} \right\rfloor&+1,\ k_{u,j}=\left\lfloor \frac{k_u-1-j}{m_s} \right\rfloor+1,\ \text{for}\ 0\leq u\leq m-1,\ 0\leq j\leq m_s-1.\end{split}\]
Hence $k(s)=\max\{k_{u,j}\mid 0\leq u\leq m-1,\ 0\leq j\leq m_s-1\}$. 

Given any erroneous word $\br=\bc+\be\in\F_q^N$ corrupted by an $\fl$-burst $\be$, one can first fold it with length $mm_s$ and then perform G-IFFT on its $n_s$ columns, obtaining an erroneous $\IRS$ codeword $\sR=\sC+\sE\in\F_q^{mm_s\times n_s}$ which is corrupted by an $\lfloor \fl/mm_s\rfloor+2$-burst $\sE$. From assumption that $\fl<N\left(1-\cR-\fg/N-\ep(2+e)\right)\leq N-\lambda-N\ep(2+e)$ and $n_s=1/\ep$, we have $\lfloor \fl/mm_s\rfloor< n_s-k(s)-e-2$. By the probabilistic unique decoding results for interleaved $\RS$ code in Lemma~\ref{lem:UniIRS}, we can correct all rows $\sR(1),\dots,\sR(mm_s)$ of $\sR$ and get $\sC$ with miscorrection probability at most $mm_s/q^{e}\leq \ep m/{q^{e-1}}$.

The complexity analysis of the above probabilistic unique decoding of \rAG code is the same as in the decoding complexity analysis in Theorem~\ref{thm:uniqDecRS} with code length $N$ and fold length $m_sm$, which is $O(\frac{1}{\ep} N\log N)$.
\end{proof}


\subsection{Examples of {$\AG$} codes from Hermitian curve}
In this subsection, we give examples of  \rAG codes that satisfy all the assumptions in Theorem~\ref{thm:LDAG} and ~\ref{thm:uniqDecAG}.

\begin{example}{(Hermitian curve~ \cite{sti09})}\label{ex:hc}
Let $\kappa$ be a power of a prime $p$. Let $\F_q=\F_{\kappa^2}$ be a finite field. Consider the Hermitian curve $\H$ over $\F_q$ defined by
\[\H(x,y):=y^{\kappa}+y-x^{\kappa+1}.\]
Then $\H(x,y)$ is absolutely irreducible. Let $E=\F_q(x, y)=\mathrm{Frac}(\F_q[x,y]/\H(x,y))$ be the function field of $\H$. On the one hand, $E$ can be viewed as a Kummer extension over $\F_q(y)$ by extending $\F_q(y)$ with a variable $x$ satisfying the relation $\H(x,y)=0$. On the other hand, $E$ can also be viewed as an Artin-Schreier extension over $\F_q(x)$ by extending $\F_q(x)$ with a variable $y$ satisfying the relation $\H(x,y)=0$. The following Lemma ensures that $E$ satisfies the assumption (P1)-(P4) in Section 3. 

\begin{lemma}{\cite{sti09}}\label{lem:preH}
Let $E$ be the Hermitian function field.
\begin{enumerate}
\item The pole place ${P}_{\infty}$ of $x$ in ${\F_q(x)}$ (resp. $Q_{\infty}$ of $y$ in ${\F_q(y)}$) is totally ramified in $E$. Let ${\fP}_{\infty}$ be the unique place in $E$ lying above ${P}_{\infty}$ and $Q_{\infty}$. Then the discrete valuations of $x, y$ at ${\fP}_{\infty}$ are
\[\nu_{{P}_{\infty}}(x)=-\kappa,\ \nu_{{P}_{\infty}}(y)=-(\kappa+1).\]
Moreover, for positive integer $\lambda$, the Riemann-Roch space $\cL(\lambda \fP_{\infty})$ has the following two bases 
\[\begin{split}
&\cB_1=\{x^jy^u\mid j\kappa+u(\kappa+1)\leq \lambda,\ 0\leq u\leq \kappa-1,\ j\geq 0\},\ \text{and}\\ 
&\cB_2=\{y^jx^v\mid j(\kappa+1)+v\kappa\leq \lambda,\ 0\leq v\leq \kappa,\ j\geq 0\}.\end{split}\] 
\item The filed extension $E/\F_q(x)$ is Galois of degree $\kappa=[E: \F_q(x)]$, and its Galois group $\Gal(E/\F_q(x))=\{\sigma \mid \sigma(y)= y+\alpha,\ \alpha\in \F_{\kappa}\}$ is isomorphic to $\F_{\kappa}$. Moreover, all rational places in $\PP_{\F_q(x)}$ except $P_{\infty}$ are splitting completely in $E/\F_q(x)$.
\item The extension $E/\F_q(y)$ is Galois of degree $\kappa+1=[E: \F_q(y)]$ and its Galois group $\Gal(E/\F_q(y))=\{\sigma \mid \sigma(x)=\omega x,\ \omega^{\kappa+1}=1\}$ is isomorphic to the $\kappa+1$-cyclic subgroup of $\F_q^*$. Moreover, the set of rational places $\{P_{\beta}\in\PP_{\F_q(y)}\mid \beta^{\kappa}+\beta\neq 0\}$ are splitting completely in $E/\F_q(y)$.
\end{enumerate}
\end{lemma}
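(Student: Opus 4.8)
The statement collects standard facts about the Hermitian function field, and the plan is to treat it as three short verifications, two of which reduce to elementary properties of the extension $\F_{\kappa^2}/\F_\kappa$. Throughout, the only structural input is $\kappa=p^e$, which gives $\gcd(\kappa,\kappa+1)=1$.

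For (1), I would first pin down the place above $P_\infty$ working inside $E=\F_q(x)(y)$ with $y^\kappa+y=x^{\kappa+1}$: for any place $\fP$ of $E$ with $\fP\mid P_\infty$ one has $\nu_{\fP}(y^\kappa+y)=\nu_{\fP}(x^{\kappa+1})=-(\kappa+1)\,e(\fP\mid P_\infty)<0$, which forces $\nu_{\fP}(y)<0$ and hence $\kappa\,\nu_{\fP}(y)=-(\kappa+1)\,e(\fP\mid P_\infty)$; since $\gcd(\kappa,\kappa+1)=1$ this yields $\kappa\mid e(\fP\mid P_\infty)$, so $e(\fP\mid P_\infty)=\kappa=[E:\F_q(x)]$, the place $\fP=\fP_\infty$ is unique, and then $\nu_{\fP_\infty}(x)=-\kappa$ and $\nu_{\fP_\infty}(y)=-(\kappa+1)$. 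The same computation applied to $E/\F_q(y)$ (with $\nu_{Q_\infty}(y^\kappa+y)=-\kappa$ and $\gcd(\kappa,\kappa+1)=1$) shows $Q_\infty$ is totally ramified as well, with the same place $\fP_\infty$ above it.

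Next, for the two bases: since the only ramification in $E/\F_q(x)$ sits over $P_\infty$, the set $\{1,y,\dots,y^{\kappa-1}\}$ is an integral basis of $E/\F_q(x)$ at all places of $E$ other than $\fP_\infty$, so every $f\in\cL(\lambda\fP_\infty)$ can be written uniquely as $f=\sum_{u=0}^{\kappa-1}g_u(x)y^u$ with $g_u\in\F_q[x]$. The pole orders $-\nu_{\fP_\infty}(x^jy^u)=j\kappa+u(\kappa+1)$ of the monomials with $0\le u\le\kappa-1$ are pairwise distinct (they have distinct residues mod $\kappa$), so no cancellation of poles is possible, and membership $f\in\cL(\lambda\fP_\infty)$ is equivalent to $j\kappa+u(\kappa+1)\le\lambda$ for every monomial $x^jy^u$ occurring; thus $\cB_1$ is a spanning independent set, and a lattice–point count — the Weierstrass semigroup at $\fP_\infty$ is $\langle\kappa,\kappa+1\rangle$, with $\fg=\kappa(\kappa-1)/2$ gaps — gives $|\cB_1|=\lambda+1-\fg=\dim_{\F_q}\cL(\lambda\fP_\infty)$. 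The basis $\cB_2$ follows by the identical argument after swapping the roles of $x$ and $y$ and using that $\{1,x,\dots,x^\kappa\}$ is an integral basis of $E/\F_q(y)$ away from $\fP_\infty$.

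Finally, (2) and (3) run in parallel. In (2), if $y_0$ is a root of $T^\kappa+T-x^{\kappa+1}$ over $\F_q(x)$, then so is $y_0+\alpha$ for every $\alpha$ in the kernel of the additive separable polynomial $T\mapsto T^\kappa+T$; this kernel has exactly $\kappa$ elements, all in $\F_{\kappa^2}$ (one checks $\alpha^{\kappa^2}=\alpha$ from $\alpha^\kappa=-\alpha$), so $E/\F_q(x)$ is Galois of degree $\kappa$ with Galois group the additive subgroup $\{\alpha:\alpha^\kappa+\alpha=0\}\cong(\F_\kappa,+)$. A finite rational place $P_\beta$ splits completely iff $T^\kappa+T-\beta^{\kappa+1}$ has a root in $\F_q$; but $y^\kappa+y=\Tr_{\F_{\kappa^2}/\F_\kappa}(y)$ is surjective onto $\F_\kappa$ and $\beta^{\kappa+1}=\Norm_{\F_{\kappa^2}/\F_\kappa}(\beta)\in\F_\kappa$, so such a root exists and all $\kappa$ roots $y_0+\alpha$ lie in $\F_q$; hence every $P_\beta\neq P_\infty$ splits completely. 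In (3), writing $u=y^\kappa+y\in\F_q(y)$, the relation $x^{\kappa+1}=u$ is a Kummer equation: $\gcd(\kappa+1,p)=1$, $\F_{\kappa^2}^*$ contains $\mu_{\kappa+1}$ since $(\kappa+1)\mid\kappa^2-1$, and $u=y(y^{\kappa-1}+1)$ has a simple zero so $\ddiv(u)$ is not divisible by any $d\mid\kappa+1$ with $d>1$; therefore $[E:\F_q(y)]=\kappa+1$ and the Galois group is $\{x\mapsto\omega x:\omega\in\mu_{\kappa+1}\}$. For $P_\beta$ with $\beta^\kappa+\beta\neq0$, the value $u(\beta)=\beta^\kappa+\beta=\Tr_{\F_{\kappa^2}/\F_\kappa}(\beta)$ lies in $\F_\kappa^*$, and since the $(\kappa+1)$-th powers in $\F_{\kappa^2}^*$ form the unique subgroup of order $(\kappa^2-1)/(\kappa+1)=\kappa-1$, namely $\F_\kappa^*$, the polynomial $T^{\kappa+1}-u(\beta)$ splits completely over $\F_q$ and $P_\beta$ splits completely in $E/\F_q(y)$.

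The only step with real content is the basis claim in (1): one must control pole orders precisely enough to exclude cancellation and then match the count to $\lambda+1-\fg$, which is exactly the computation that feeds Lemma~\ref{lem:basis}. Everything else is a matter of assembling the trace/norm facts for $\F_{\kappa^2}/\F_\kappa$ together with the standard ramification dictionaries for Artin-Schreier and Kummer extensions, so I would devote most of the write-up to (1) and keep (2)--(3) brief.
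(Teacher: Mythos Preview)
The paper does not prove this lemma at all: it is quoted from Stichtenoth's book \cite{sti09} and used as a black box, so there is no ``paper's approach'' to compare against. Your proposal is correct and is essentially the standard verification one finds in that reference, combining the Artin--Schreier and Kummer ramification dictionaries with the trace/norm arithmetic of $\F_{\kappa^2}/\F_\kappa$.

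Two small points worth tightening. First, the step ``since the only ramification in $E/\F_q(x)$ sits over $P_\infty$, the set $\{1,y,\dots,y^{\kappa-1}\}$ is an integral basis at all other places'' is true but the reason you give is not quite the right one: unramified alone does not force the power basis to be integral. What you actually need is that the discriminant of $T^\kappa+T-x^{\kappa+1}$ is a unit at every finite place (its $T$-derivative is identically $1$ in characteristic $p\mid\kappa$), which immediately gives the integral basis. The analogous check for $\cB_2$ also requires handling the finite places $P_\beta$ with $\beta^\kappa+\beta=0$, where $E/\F_q(y)$ is totally ramified; there $x$ is a uniformizer and $\calO_\fP=\calO_{P_\beta}[x]$, so $\{1,x,\dots,x^\kappa\}$ is still an integral basis. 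Second, you correctly identify the Galois group in (2) as $\{\alpha:\alpha^\kappa+\alpha=0\}$, which is isomorphic to $(\F_\kappa,+)$ as an abstract group but is not literally $\F_\kappa$ in odd characteristic; this is a harmless imprecision in the paper's statement rather than in your argument.
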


Thus, by constructing a basis of the Riemann-Roch space $\cL(\lambda \fP_{\infty})$ and selecting a certain multipoint set as in Section~3, we can define \rAG codes from $E$ that care fast encodable and burst decodable.

\subsubsection{ \textbf{$\F_q$ has constant characteristic $p$}}\label{sec:hcca}
Assume $\kappa=p^r$. Let $\{1=w_1,w_2,\dots,w_{r}\}$ be an $\F_p$-basis of $\F_{\kappa}$. For $1\leq i\leq r$, let $W_{i}=\oplus_{j=1}^i\F_pw_j$ be an $\F_p$-subspace of dimension $i$. By Lemma~\ref{lem:preH}(2), there exists a subgroup
\[G_i\leq\Gal(E/\F_{q}(x))\ \text{such\ that}\ G_i\cong W_i.\] 
Then the fixed subfield $E_i=E^{G_i}=\F_q(y_i,x)$ and
$y_i=\prod_{\alpha\in W_i}(y+\alpha)$~\cite{encAG24}. Thus $E$ has a chain of subfields: $E_i\supset E_{i+1}$ with extension degree $[E_i: E_{i+1}]=p$, for $i=1,\dots,r-1$.

Next we construct the basis of $\cL(\lambda{\fP}_{\infty})$ as in Lemma~\ref{lem:basis}. For any $0\leq u\leq p^r-1$, let $\bu=(u_0,u_1,\dots,u_{r-1})$ be the coefficients of $p$-adic expansions of $u=\sum_{i=0}^{r-1}u_ip^i$. Define
\begin{equation}\label{eq:hexp}
    \by^{\bu}=y_0^{u_0}y_1^{u_1}\cdots y_{r-1}^{u_{r-1}}.
\end{equation}
Since $\nu_{\fP_{\infty}}(\by^{\bu})=\sum_{i=0}^{r-1}u_i\nu_{\fP_{\infty}}(y_i)=-\sum_{i=0}^{r-1}u_ip^i(\kappa+1)=\nu_{\fP_{\infty}}(y^{u})$, by substituting $x^jy^u$ with $x^j\by^{\bu}$ in $\cB_1$ in Lemma~\ref{lem:preH}(1), the set $\tcB_1=\{x^{j}\by^{\bu}\mid j\kappa+u(\kappa+1)\leq \lambda, 0\leq u\leq \kappa-1\}$ is also a basis of $\cL(\lambda{\fP}_{\infty})$.

By Lemma~\ref{lem:preH}(2), any rational place $P\neq P_{\infty}$ in $\F_q(x)$ splits into $\kappa$ rational places in $E$. For different purposes of decoding, we first choose $\{P_1,P_2,\dots, P_n\}\subset \PP_{\F_q(x)}$ as follows and take $\calP$ be the the set of rational places in $E$ lying over these $n$ rational places. 
\begin{itemize}
\item[(i)] For list decoding, we take $n=q=\kappa^2$ and $\{P_1,P_2,\dots, P_n\}=\F_q$. Let $\calP$ be the set of rational places in $E$ lying over $\{P_1,P_2,\dots, P_n\}$. Then $N=|\calP|=\kappa^3$.
\item[(ii)] For probabilistic unique decoding, we take $n=\ell^{\frac{2r}{s}-1}(\ell-1)<q$ for $\ell=p^s$ is a small constant power of $p$. Let $\{P_1,P_2,\dots, P_n\}=\{u(\gamma+w)\mid u\in\F_{\ell}^*,\ L(w)=0\}$, where $L(x)$ is a linearized polynomial of degree $\ell^r$ and $\gamma\in\F_q$ satisfying $L(\gamma)\neq 0$ as in Lemma~\ref{lem:MPset}. Thus $N=|\calP|=\kappa^3(1-\frac{1}{\ell})$.
\end{itemize}
From above, all assumptions (P1)-(P4) hold for the Hermitian function field. Assume $k$ is a positive integer such that $\kappa\leq k\leq n$. Let $\lambda=\kappa(k-1)$ and $\fC(\lambda \fP_{\infty},\calP)$ be the Hermitian code defined on $\calP$. Since $\lambda\geq 2\fg(E)-1=\kappa(\kappa-1)-1$, by the Riemann-Roch theorem \cite{sti09}, $\fC(\lambda \fP_{\infty},\calP)$ has code rate 
$$\cR=\frac{k-1}{n}+\frac{1}{n\kappa}-\frac{\kappa-1}{2n}.$$ Moreover, from the above construction, the constituent $\RS$ code in the $\AG$-based $\IRS_{m}(\fC(\lambda \fP_{\infty},\calP))$ satisfies assumptions in Section~\ref{sec:RS}. By Theorem~\ref{thm:LDAG} and \ref{thm:uniqDecAG}, we have the following results

\begin{corollary}\label{coro:ha}
    Assume $\F_q$ has constant characteristic. Let $\fC(\lambda \fP_{\infty},\calP)$ be defined as above and $\cR$ be the code rate.  
\begin{itemize}
\item[(1)] For any real number $\ep>0$, $\fC(\lambda \fP_{\infty},\calP)$ is $(1-\cR-\ep,O(1/\ep))$-burst list decodable. Moreover, it can be list decoded in time $O_{\ep}(N\log N)$.
\item[(2)] For any real number $\ep>0$ and positive integer $e\geq 2$, $\fC(\lambda \fP_{\infty},\calP)$ can correct bursts of length less than $N\left(1-\cR-\ep\right)$ with miscorrection probability at most $\ep\kappa/q^{e-1}=\ep/\kappa^{2e-3}$. Moreover, it can be probabilistic uniquely decoded in time $O(\frac{e}{\ep}N\log N)$.
\end{itemize}
\end{corollary}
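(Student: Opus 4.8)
The plan is to obtain Corollary~\ref{coro:ha} as a direct instance of Theorems~\ref{thm:LDAG} and~\ref{thm:uniqDecAG}, so that the real work is (a) checking that the Hermitian function field $E$ together with the two evaluation sets of~(i) and~(ii) meets all the hypotheses of those theorems, and (b) substituting $m=\kappa$, $\fg=\kappa(\kappa-1)/2$, $\lambda=\kappa(k-1)$, $N=nm$. Most of (a) is already in place: Lemma~\ref{lem:preH}(1) gives total ramification of $\fP_\infty$ over $P_\infty$, which is~(P1); the flag $W_1\subset\cdots\subset W_r=\F_\kappa$ of $\F_p$-subspaces and the corresponding subgroups $G_i\le\Gal(E/\F_q(x))$ with $E_i=E^{G_i}=\F_q(x,y_i)$, $y_i=\prod_{\alpha\in W_i}(y+\alpha)$, give the tower in~(P2) with $[E_i:E_{i+1}]=p$; since $\fP_\infty\mid P^{(i)}_\infty$ is totally ramified of degree $p^i$ one gets $\nu_{P^{(i)}_\infty}(y_i)=\nu_{\fP_\infty}(y_i)/p^i=-(\kappa+1)$, coprime to $p$, and the integral-basis condition away from $\fP_\infty$ is the one established in~\cite{encAG24}, giving~(P3); and~(P4) is Lemma~\ref{lem:preH}(2). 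The basis $\tcB_1$ displayed just before the corollary shows that the constituent $\RS$ code of the associated AG-based $\IRS$ code is exactly the code built in Section~\ref{sec:RS}, so the $\RS$-side hypotheses of Theorems~\ref{thm:listDecRS} and~\ref{thm:uniqDecRS} hold as well.

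For part~(1) I would invoke Theorem~\ref{thm:LDAG} with $n=q=\kappa^2$, $m=\kappa$, $N=\kappa^3$. It yields $(1-\cR-\fg/N-\ep',O(1/\ep'))$-burst list decodability, and since $\fg/N=\tfrac{\kappa-1}{2\kappa^2}=o(1)$, the term $\fg/N$ is absorbed into the slack, so that for any $\ep>0$ one obtains $(1-\cR-\ep,O(1/\ep))$-burst list decodability. Because $n=p^{2r}$ and $m=p^{r}$ with $p=O(1)$, the length $N=p^{3r}$ is $O(1)$-smooth, so Theorem~\ref{thm:LDAG} gives decoding time $O_\ep(N\log N)$.

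For part~(2) I would invoke Theorem~\ref{thm:uniqDecAG} with $m=\kappa$ and the evaluation set of~(ii): by Lemma~\ref{lem:MPset} the set $\{P_1,\dots,P_n\}=\{u(\gamma+w)\mid u\in\F_\ell^*,\ L(w)=0\}$ comes from the affine group $\F_\ell^*\ltimes W$, hence (Remark~\ref{rmk:cyc}) the inner evaluation set $\sP_s$ of size $n_s=1/\ep$ produced after folding is a coset of a cyclic subgroup of $\F_q^*$ — precisely the hypothesis needed to run Wu's algorithm through Lemma~\ref{lem:UniIRS}; here one picks $\ell=p^s$ with $s$ a multiple of the order of $p$ modulo $n_s$, so that $n_s\mid\ell-1$ and $\ell$ stays constant for each fixed $\ep$. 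The admissibility condition $\ep m\le q^{e-1}$ reads $\ep\kappa\le\kappa^{2(e-1)}$, which holds for $e\ge 2$; the miscorrection probability $\ep m/q^{e-1}$ equals $\ep\kappa/\kappa^{2(e-1)}=\ep/\kappa^{2e-3}$; and the guaranteed radius $N(1-\cR-\fg/N-\ep(2+e))$ exceeds $N(1-\cR-\ep)$ once $\ep$ is rescaled by the constant $2+e$ and $\fg/N=o(1)$ is absorbed. Finally $n=(\ell-1)\ell^{2r/s-1}$ and $m=p^r$ are $O(1)$-smooth, so $N=nm$ is $O(1)$-smooth and Theorem~\ref{thm:uniqDecAG} gives decoding time $O(\tfrac e\ep N\log N)$.

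I expect the only delicate points to be bookkeeping rather than substance: matching $\lambda$, $m$, $N$, $\fg$ so that the radii come out in the normalized form $1-\cR-\ep$, where the one thing that really matters is $\fg/N=\Theta(1/\kappa)=o(1)$ and can be swept into $\ep$; and, in part~(2), choosing $\ell$ with $n_s\mid\ell-1$ so that the inner $\RS$ code lives on a cyclic (or coset) evaluation set. Neither is a genuine obstacle, since the ramification-theoretic core — verifying~(P1)--(P4) for the Hermitian tower — is already supplied by Lemma~\ref{lem:preH}, Lemma~\ref{lem:MPset} and~\cite{encAG24}.
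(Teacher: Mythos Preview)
Your proposal is correct and follows essentially the same route as the paper: verify that the Hermitian field with the chosen evaluation sets satisfies (P1)--(P4) and the $\RS$-side hypotheses, then invoke Theorems~\ref{thm:LDAG} and~\ref{thm:uniqDecAG} and absorb the term $\fg/N=\Theta(1/\kappa)$ into $\ep$. Your treatment of part~(2) is in fact a bit more careful than the paper's own proof in arranging $n_s\mid\ell-1$ via the multiplicative order of $p$ modulo $n_s$, and in tracking how the rescaling $\ep\mapsto\ep/(2+e)$ feeds into both the miscorrection bound and the running time.
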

\begin{proof}
By Theorem~\ref{thm:LDAG}, for any $\ep>0$, we have that $\fC(\lambda \fP_{\infty},\calP)$ is $(1-\cR-\fg/N-\ep/2,O(1/\ep))$-burst list decodable. Since $N=\kappa^3$, $\fg/N=\frac{\kappa(\kappa-1)}{2\kappa^3}<1/2\kappa<1/2n_s$, where $n_s=2/\ep$. Thus $1-\cR-\ep\leq 1-\cR-\fg/N-\ep/2$ and the results in (1) follow immediately.

By Theorem~\ref{thm:uniqDecAG}, for any $\ep>0$ and constant integer $e\geq 2$, we have that $\fC(\lambda \fP_{\infty},\calP)$ can correct burst errors of length less than $N(1-\cR-\fg/N-\ep/2)$ with miscorrection probability at most $\ep/\kappa^{2e-3}$. Put $\ell=2/\ep$ and $\kappa=\ell^{r/s}>\ell^2$. By settings in (ii), $\fg/N<1/2\kappa(1-1/\ell)<\ep/2$. Then the results in (2) follow.
\end{proof}

\subsubsection{\textbf{$q-1$ is $O(1)$-smooth.}} \label{sec:hccm}
Assume $\kappa+1=\prod_{i=1}^r p_j\mid q-1$. By Lemma~\ref{lem:preH}(3), let 
\[G_i\leq \Gal(E/\F_q(y))\ \text{of\ order}\ |G_i|=\prod_{j=1}^ip_j\ text{for}\ 1\leq i\leq r.\]
Then the fixed subfield $E_i=E^{G_i}=\F_q(x_i,y)$, where $x_i=\prod_{\alpha\in G_i}\alpha (-x)=x^{|G_i|}$. Thus $E$ has a chain of subfields: $E_i\subset E_{i+1},$ and $[E_i: E_{i+1}]=p_i$. 

For any $0\leq v\leq \kappa$, let $\bv=(v_0,v_1,\dots,v_{r-1})$ be the coefficients of expansions of $v=\sum_{i=0}^{r-1}v_{i-1}p_0p_1\cdots p_i$ (where $p_0:=1$) as in Equation~\eqref{eq:expu}. Then 
$$\bx^{\bv}=x_0^{v_0}x_1^{v_1}\cdots x_{r-1}^{v_{r-1}}=x^v.$$ Hence, the basis $\{\bx^{\bv}y^{j}\mid j(\kappa+1)+v\kappa\leq \lambda, 0\leq v\leq \kappa\}$ in Lemma~\ref{lem:basis} is just $\cB_2$.

Let $\Ker(\Tr_{q/\kappa})=\{\alpha\in \F_q \mid \alpha^{\kappa}+\alpha=0\}$. By Lemma~\ref{lem:preH}(3), for any $\beta\notin \Ker(\Tr_{q/\kappa})$, the equation $x^{\kappa+1}=\beta^{\kappa}+\beta$ has $\kappa+1$ solutions in $\F_q$, i.e., $y-\beta$ splits completely in $E/\F_q(y)$. We take a multiplicative group or a coset of a multiplicative group $\sP$ in $\F_q\setminus\Ker(\Tr_{q/\kappa})$ of order $n$. Let $\calP$ be the set of all rational places lying above $\sP$. Then $N=|\calP|=n(\kappa+1)$ and $N$ is $O(1)$-smooth.

Assume $\kappa\leq k\leq n$ is a positive integer. Let $\lambda=(k-1)(\kappa+1)$ and $\fC( \lambda{\fP}_{\infty},\calP)$ be the Hermitian code defined on $\calP$. Then the code rate is 
$$\cR=\frac{k-1}{n}+\frac{1}{n(\kappa+1)}-\frac{\kappa(\kappa-1)}{2n(\kappa+1)}.$$  Since $n\mid q-1$ is $O(1)$-smooth, the RS code $\RS[n,k]$ defined on the multipoint set $\sP$ satisfies conditions in Theorem~\ref{thm:listDecRS} and \ref{thm:uniqDecRS}. 
\begin{corollary}
   Assume $q-1$ is $O(1)$-smooth. Let $\fC( \lambda{\fP}_{\infty},\calP)$ be defined as above and $\fg=\frac{\kappa(\kappa-1)}{2}$ be the genus of $E$. For any $\ep>0$,
\begin{itemize}
\item[(1)] $\fC(\lambda \fP_{\infty},\calP)$ is $(1-\cR-\fg/N-\ep,O(1/\ep))$-burst list decodable; 
\item[(2)] $\fC(\lambda \fP_{\infty},\calP)$ can correct bursts of length less than $N\left(1-\cR-\fg/N-\ep(2+e)\right)$ for $e\geq 2$ with miscorrection probability at most $\ep(\kappa+1)/q^{e-1}$. 
\end{itemize}
Moreover, there exist $O_{\ep}(N\log N)$ list and probabilistic unique decoding algorithms for $\fC( \lambda{\fP}_{\infty},\calP)$.
\end{corollary}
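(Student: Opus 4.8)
The plan is to deduce the corollary directly from Theorems~\ref{thm:LDAG} and~\ref{thm:uniqDecAG}, once the two structural hypotheses of those theorems are verified for the Hermitian field $E=\F_q(x,y)$ in the present ($q-1$ being $O(1)$-smooth) setting: (i) that $E/\F_q(y)$, together with the chosen subfield tower, satisfies the G-FFT assumptions (P1)--(P4), and (ii) that the constituent RS code of the associated AG-based $\IRS$ code meets the hypotheses of Theorems~\ref{thm:listDecRS} and~\ref{thm:uniqDecRS}. For (i) I would invoke Lemma~\ref{lem:preH}(3): it gives that the pole place $\fP_\infty$ is totally ramified in $E/\F_q(y)$, that $\Gal(E/\F_q(y))$ is cyclic of order $\kappa+1\mid q-1$, and that all rational places $P_\beta$ with $\beta^\kappa+\beta\neq 0$ split completely. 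Combined with the factorization $\kappa+1=\prod_{i=1}^r p_i$, the fixed subfields $E_i=\F_q(x_i,y)$ with $x_i=x^{|G_i|}$ constructed in \S\ref{sec:hccm}, and with $\cB_2$ furnishing the explicit polynomial basis of $\cL(\lambda\fP_\infty)$, this yields (P1)--(P4) for $E/\F_q(y)$.

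For (ii), the base rational field is $\F_q(y)$, and its relevant automorphism subgroup is a \emph{multiplicative} cyclic subgroup of $\Aut(\F_q(y)/\F_q)\cong\PGL_2(\F_q)$ of order $n$; this is legitimate precisely because $n\mid q-1$, so the evaluation set $\sP$ is a cyclic subgroup (or a coset) of $\F_q^*$, i.e. exactly the shape of Equation~\eqref{eq:RS} with the additive part trivial. Hence $\RS[n,k]$ satisfies the conditions of Theorems~\ref{thm:listDecRS} and~\ref{thm:uniqDecRS}. Moreover, since $q-1$ is $O(1)$-smooth, both $n$ and $m=\kappa+1$ are $O(1)$-smooth, so $N=n(\kappa+1)$ is $O(1)$-smooth, which is exactly what the quasi-linear running-time claims in Theorems~\ref{thm:LDAG} and~\ref{thm:uniqDecAG} require.

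With these in place, part~(1) is immediate from Theorem~\ref{thm:LDAG} with code rate $\cR=(\lambda+1-\fg)/N$ (which expands to the displayed formula for $\cR$ via $\lambda=(k-1)(\kappa+1)$ and $\fg=\kappa(\kappa-1)/2$), giving $(1-\cR-\fg/N-\ep,\,O(1/\ep))$-burst list-decodability and an $O_\ep(N\log N)$ decoder. Part~(2) follows from Theorem~\ref{thm:uniqDecAG} with $m=\kappa+1$: the miscorrection probability $\ep m/q^{e-1}=\ep(\kappa+1)/q^{e-1}$ and the decoding radius $N(1-\cR-\fg/N-\ep(2+e))$ come out directly, and the side condition $\ep m\le q^{e-1}$ of that theorem holds for $e\ge 2$ since a positive decoding radius forces $\ep<1$ while $\kappa+1\le q$. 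Because $e$ may be taken to be a constant (e.g. $e=2$), the probabilistic unique decoder again runs in $O_\ep(N\log N)$. I expect no serious obstacle here; the one point demanding care is matching up the ``outer'' Galois tower of $E/\F_q(y)$ with the ``inner'' multiplicative tower of $\F_q(y)$ so that the concatenated length-$(r+t)$ tower still verifies (P1)--(P4) in the form used inside the proof of Theorem~\ref{thm:LDAG} — in particular, checking that the integral-basis condition (P3) is preserved at every step of the concatenation and that the totally ramified place and the completely split places remain compatible across the two towers.
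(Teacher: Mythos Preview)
Your proposal is correct and follows exactly the approach the paper intends: the paper states this corollary without proof, relying on the setup of \S\ref{sec:hccm} (which verifies (P1)--(P4) for the Kummer extension $E/\F_q(y)$ via Lemma~\ref{lem:preH}(3) and exhibits the cyclic evaluation set $\sP\subset\F_q\setminus\Ker(\Tr_{q/\kappa})$) together with a direct appeal to Theorems~\ref{thm:LDAG} and~\ref{thm:uniqDecAG}, just as in the proof of Corollary~\ref{coro:ha}. Your closing worry about the interface of the two towers is already absorbed into the proof of Theorem~\ref{thm:LDAG} itself (where the concatenated tower is asserted to satisfy (P1)--(P4)), so it is not a gap in the corollary's proof.
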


\end{example}

\subsection{Examples of {$\AG$} codes from Hermitian tower}
Let $\kappa$ be a prime power and let $q=\kappa^2$. In this example, we only consider the case that $\F_q$ has constant characteristic. The case that $q-1$ is $O(1)$-smooth can be similarly discussed.

Let $F_1=\F_q(x_1)$. The Hermitian tower \cite{Shen93} is defined by the following recursive equations
\begin{equation}\label{eq:x2}
x_{i+1}^{\kappa}+x_{i+1}=x_i^{{\kappa}+1},\quad i=1,2,\dots,t-1.
\end{equation}
Put  $F_i=\F_q(x_1,x_2,\dots,x_{i})$ for $i\ge 2$. We fix an integer $t$ satisfying $2\le t\le \kappa/2$. 

Since each field extension $F_i/F_{i-1}$ has the same properties as Hermitian function field $F_2/F_1$ in Lemma~\ref{lem:preH}, we have the following facts:
\begin{itemize}
\item[(i)] The pole $P_{\infty}\in\PP_{\F_q(x_1)}$ of $x_1$ is totally ramified in the extension $F_t/F_1$. Let $\fP_{\infty}^{(t)}$ be the unique place of $F_t$ lying over $P_{\infty}$. For an integer $\Gl>0$, the Riemann-Roch space $\cL(\Gl \fP_\infty^{(t)})$ of $F_t$ has a basis as follows \cite[Proposition 5]{Shen93}
\begin{equation}\label{eq:x3}
\small
\cB_1^{(t)}=\left\{x_1^{j_1}\cdots x_t^{j_t}\mid (j_1,\dots,j_t)\in\mathbb{N}^t,\ \sum_{i=1}^tj_i\kappa^{t-i}({\kappa}+1)^{i-1}\le \Gl, 0\le j_2,\cdots,j_t\le \kappa-1\right\}
\end{equation}
\item[(ii)] For every $\Ga\in\F_q$, the rational place $P_\Ga\in\PP_{\F_q(x_1)}$ splits completely in $F_t$, i.e., there are $\kappa^{t-1}$ rational places in $F_t$ lying over $P_\Ga$.
\end{itemize}

Replace each $x_i^{j_i}$ with $\bx_i^{\bj_i}$ as in Equation~\eqref{eq:hexp} for $i=2,\dots,t$, then $\cB_1^{(t)}=\{x_1^{j_1}\bx_2^{\bj_2}\cdots \bx_t^{\bj_t}$ is a basis of $\cL(\Gl \fP_\infty^{(t)})$ ensuring that the G-FFT and G-IFFT work well for functions in $\cL(\lambda_t{\fP}^{(t)}_{\infty})$. Next, let $\{P_1,P_2,\dots, P_n\}\subsetneq {\F_q}$ be the same multipoint set chosen as in \S~\ref{sec:hcca} (i) and (ii), i.e., $n=\kappa^2$ for list decoding, and $n=\kappa^2(1-\ep)$ for unique decoding, respectively, and $\calP^{(t)}$ be the set of rational places in $F_t$ lying over them. Then $|\calP^{(t)}|:=N_t=\kappa^{t-1} n$. Let $\kappa\leq k\leq n$, $\lambda_t=\kappa^{t-1}(k-1)$ and $\fC(\lambda_t \fP^{(t)}_{\infty},\calP)$ be the Hermitian code defined on $\calP^{(t)}$. Then the code rate
\[\cR_t=\frac{(k-1)\kappa^{t-1}+1-\fg_t}{n\kappa^{t-1}}\geq \frac{k-1}{n}-\frac{t\kappa}{n}.\]
The above inequality follows from the genus $\fg_t$ of $\F_t$ satisfying $\fg\leq t\kappa^t$ \cite[Section 6]{encAG24}. Hence, for $t<\kappa \ep$, we have $\fg_t/N_t<\ep$. By similar analysis as in Corollary~\ref{coro:ha}, we have the following results for Hermitian tower codes. 
\begin{corollary}\label{coro:ha}
    Assume $\F_q$ has constant characteristic. Let $\fC(\lambda_t \fP_{\infty}^{(t)},\calP^{(t)})$ be defined as above and $\cR_t$ be the code rate. For any real number $\ep>0$, let $t<\kappa\ep$. 
\begin{itemize}
\item[(1)] $\fC(\lambda_t \fP_{\infty}^{(t)},\calP^{(t)})$ is $(1-\cR_t-\ep,O(1/\ep))$-burst list decodable. Moreover, it can be list decoded in time $O_{\ep}(N\log N)$.
\item[(2)] For any positive integer $e>(t+1)/2$, $\fC(\lambda_t \fP^{(t)}_{\infty},\calP^{(t)})$ can correct bursts of length less than $N_t(1-\cR_t-\ep)$ with miscorrection probability at most $\ep/\kappa^{2e-t-1}$. Moreover, it can be probabilistic uniquely decoded in time $O(\frac{e}{\ep}N\log N)$.
\end{itemize}
\end{corollary}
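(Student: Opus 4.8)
The plan is to obtain both parts as direct instances of Theorems~\ref{thm:LDAG} and~\ref{thm:uniqDecAG}, once I have checked that the Hermitian tower field $F_t/\F_q(x_1)$ fulfils the G-FFT assumptions (P1)--(P4) and that its constituent $\RS$ code is of the form handled in Theorems~\ref{thm:listDecRS} and~\ref{thm:uniqDecRS}. First I would assemble, from facts (i)--(ii) above and the basis $\cB_1^{(t)}$ (with each $x_i^{j_i}$ replaced by $\bx_i^{\bj_i}$ as in Equation~\eqref{eq:hexp}), the subfield tower $F_t\supsetneq F_{t-1}\supsetneq\cdots\supsetneq F_1=\F_q(x_1)$ refined into Galois steps of prime degree $p$: the place $P_\infty$ is totally ramified throughout, $\{1,\bx_i^{\bj_i},\dots\}$ is an integral basis of each step away from $\fP^{(t)}_\infty$, and the chosen $n$ rational places of $F_1$ split completely in $F_t$. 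This is precisely (P1)--(P4) with $m=[F_t:F_1]=\kappa^{t-1}$. Composing with the additive tower of Lemma~\ref{lem:MPset} inside $F_1=\F_q(x_1)$ (taking $\ell=p^s$ a bounded power of $p$, and $n=\kappa^2$ for list decoding, resp.\ $n=\kappa^2(1-1/\ell)$ for unique decoding), the constituent code is an $\RS[n,k]$ of the form~\eqref{eq:RS}, so Theorems~\ref{thm:listDecRS} and~\ref{thm:uniqDecRS} apply and both $N_t=\kappa^{t-1}n$ and $m=\kappa^{t-1}$ are $O(1)$-smooth.

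Next I would bound the genus: from $\fg_t\le t\kappa^{t}$ (see \cite[Section 6]{encAG24}) and $N_t=\kappa^{t-1}n$ with $n$ of order $\kappa^{2}$ we get $\fg_t/N_t\le t\kappa/n=O(t/\kappa)$, which is below $\ep/2$ exactly under the hypothesis $t<\kappa\ep$ (the constant from $n=\kappa^2$ versus $\kappa^2(1-1/\ell)$ is harmless). Applying Theorem~\ref{thm:LDAG} with radius parameter $\ep/2$ then gives $(1-\cR_t-\fg_t/N_t-\ep/2,\,O(1/\ep))$-burst list-decodability, and $\fg_t/N_t+\ep/2\le\ep$ yields the stated $(1-\cR_t-\ep,\,O(1/\ep))$-decodability; the running time $O_\ep(N\log N)$ is the one guaranteed by Theorem~\ref{thm:LDAG} once $N_t$ and $m$ are $O(1)$-smooth. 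This proves (1).

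For (2) I would invoke Theorem~\ref{thm:uniqDecAG} with $m=\kappa^{t-1}$, $q=\kappa^2$, and parameter $\ep'=\ep/(2(2+e))$. The side condition $\ep' m\le q^{e-1}$ reads $\ep'\kappa^{t-1}\le\kappa^{2(e-1)}$, which holds for every integer $e>(t+1)/2$ (so that $2(e-1)\ge t$) as soon as $\ep$ is bounded; the miscorrection probability is $\ep' m/q^{e-1}\le \ep m/q^{e-1}=\ep\kappa^{t-1}/\kappa^{2(e-1)}=\ep/\kappa^{2e-t-1}$, matching the claim. The decoding radius $N_t(1-\cR_t-\fg_t/N_t-\ep'(2+e))$ is, since $\fg_t/N_t\le\ep/2$ and $\ep'(2+e)=\ep/2$, at least $N_t(1-\cR_t-\ep)$; and the time bound $O(\tfrac1{\ep'}N\log N)=O(\tfrac{e}{\ep}N\log N)$ is the one in Theorem~\ref{thm:uniqDecAG}.

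The only step that demands genuine care, rather than arithmetic bookkeeping, is the first one: verifying that the Hermitian tower meets (P1)--(P4) with a basis of $\cL(\lambda_t\fP^{(t)}_\infty)$ that is simultaneously a \emph{polynomial} basis and an integral basis compatible with the refined Galois tower --- i.e.\ that replacing $x_i^{j_i}$ by $\bx_i^{\bj_i}$ in $\cB_1^{(t)}$ really produces such a basis --- and splicing this tower onto the rational-function-field tower of Section~\ref{sec:RS} so that the constituent $\RS$ code is genuinely the code~\eqref{eq:RS}. Once that is in place, the genus estimate and the choice of $e$ are routine, and the two corollary statements fall out of Theorems~\ref{thm:LDAG} and~\ref{thm:uniqDecAG}.
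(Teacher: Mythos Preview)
Your proposal is correct and follows essentially the same route as the paper, which simply writes ``By similar analysis as in Corollary~\ref{coro:ha}'' (the Hermitian-curve case) and leaves the details implicit: verify (P1)--(P4) for the tower $F_t/\F_q(x_1)$ via facts (i)--(ii) and the basis $\cB_1^{(t)}$, splice onto the rational-function-field tower of Section~\ref{sec:RS}, bound $\fg_t/N_t$ by $t/\kappa$ using $\fg_t\le t\kappa^t$, and then invoke Theorems~\ref{thm:LDAG} and~\ref{thm:uniqDecAG} with a rescaled $\ep$. One arithmetic nit: the hypothesis $t<\kappa\ep$ only gives $\fg_t/N_t<\ep$, not $\ep/2$ as you state; you need either a slightly stronger assumption like $t<\kappa\ep/2$ or to apply the theorems with a different rescaling of $\ep$ --- but this is harmless since all bounds are up to constants in $\ep$, and the paper is equally loose here.
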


\subsection{Examples of {$\AG$} codes from the Garcia-Stichtenoth tower}
As we have mentioned (P1)-(P3) in the G-FFT assumptions are used to construct a basis for the one-point Riemann-Roch space, from which we can reduce the decoding problem of AG codes to the decoding problem of interleaved RS codes. There are some function fields $E/\F_q$ that do not satisfy all three assumptions (P1)-(P3) of the G-FFT, but have  good polynomial base. By using the same technique as in previous sections, we will show that the algenraic geometry codes based on $E$ can still be burst decoded in quasi-linear time.

In the following, we always $\F_q$ has constant characteristic $\Char \F_q=O(1)$.
\begin{example}{(Garcia-Stichtenoth tower)}\label{ex:gs}
Let $\F_q=\F_{\kappa^2}$ be a finite field and $\F_q(x)$ be the rational function field. The Garcia-Stichtenoth tower over $\F_q$ is the sequence of function fields defined by $F_1=\F_q(x),\ F_{i+1}=F_i(z_{i+1})$ where $z_{n+1}$ satisfies
\begin{equation}\label{eq:defGS}
    z^{\kappa}_{n+1}+z_{n+1}=x_n^{\kappa+1},\ x_1=x\ \text{and}\ x_n=\frac{z_n}{x_{n-1}}\ \text{for}\ n\geq 2.
\end{equation}
Particularly, $F_2$ is just the Hermitian function field. \textbf{Note that the field extension $F_n/\F_1$ does not satisfy  (P3) in the G-FFT assumptions in section 3}. In \cite{gs95tower}, the authors pointed out the extension of places in $F_n/F_1$. We will only consider the third function field $F_3$ in the following.
\begin{itemize}
    \item[(i)] All places in $\{P_{\alpha}\in\PP_{F_1}\mid \alpha\in\F_q^*\}$ are splitting completely in $F_3/F_1$: first, the equation $z_2^{\kappa}+z_2=\alpha^{\kappa+1}$ has $\kappa$ solutions in $\F_q$; next, for each solution $z_2=\beta$, the equation $z_3^{\kappa}+z_3=(\frac{\beta}{\alpha})^{\kappa+1}$ also has $\kappa$ solutions. We denote $P_{\alpha,\beta,\gamma}$ by the zero place of $x_1=\alpha,z_2=\beta,z_3=\gamma$. There are $(\kappa^2-1)\kappa^2$ such places $\{P_{\alpha,\beta,\gamma}\in\PP_{F_3}\mid \alpha\in\F_q^*\}$ in total.
    \item[(ii)] Let $P^{(1)}_{\infty}\in\PP_{F_1}$ be the unique pole place of $x_1$ in $F_1$. Then $P^{(1)}_{\infty}$ is totally ramified in $F_3/F_1$, denoted by $P_{\infty}^{(3)}$ the unique place in $F_3$ lying over $P^{(1)}_{\infty}$.
\end{itemize}
Moreover, the genus of $F_3$ is $\fg(F_3)=\kappa^3-2\kappa+1$. Consider the Riemann-Roch space $\cL(\lambda P^{(3)}_{\infty})$ with $\lambda=(k-1)q$ and the positive integer $k$ satisfying $2\kappa<k<\kappa^2$. Then 
\[\dim_{\F_q}\left(\cL(\lambda P^{(3)}_{\infty})\right)=(k-1-\kappa)\kappa^2+2\kappa.\]
For the third function field $F_3$, by~\cite{vos97GS3}, we can get an explicit basis of the Riemann-Roch space $\cL(\lambda P^{(3)}_{\infty})$. By computation, we note that there is a subspace of $\cL(\lambda P^{(3)}_{\infty})$ which has dimension close to $\dim_{\F_q}\cL(\lambda P^{(3)}_{\infty})$ but a simpler basis. Under this basis, we can use the same technique to burst decode the \rAG codes based on the third function $F_3$.

\begin{lemma}\label{lem:subsp}
   Define the following set
    \[\begin{split}
        \mathcal{I}_1(\lambda)=\{(i_1,i_2,i_3)\mid\  &i_1\kappa^2+i_2\kappa+i_3(\kappa+1)\leq \lambda,\ i_2\kappa+i_3(\kappa+1)\leq i_1\kappa,\\
   &0\leq i_1,\ 0\leq i_2,i_3\leq \kappa-1\}.
    \end{split}  \]
   The set $\cB(\lambda)=\{x_1^{i_1}x_2^{i_2}z_3^{i_3}\mid (i_1,i_2,i_3)\in\mathcal{I}_1(\lambda)\}$ is an $\F_q$-linearly subset of $\cL(\lambda P^{(3)}_{\infty})$. Moreover, 
   \[\dim_{\F_q}(\cL(\lambda P^{(3)}_{\infty}))-|\cB(\lambda)|\leq (\kappa^2+\kappa-2)/2.\]
\end{lemma}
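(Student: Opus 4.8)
The plan is to determine the full pole divisor of each monomial $x_1^{i_1}x_2^{i_2}z_3^{i_3}$ in $F_3$, read off membership in $\cL(\lambda P^{(3)}_\infty)$ and $\F_q$-linear independence, and finish by counting $|\mathcal{I}_1(\lambda)|$ against the Riemann--Roch dimension.

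\emph{Pole structure.} First I would compute valuations. Since $P^{(1)}_\infty$ is totally ramified in $F_3/F_1$ of degree $\kappa^2$, propagating $\nu$ up the tower through $x_2=z_2/x_1$ and $z_3^\kappa+z_3=x_2^{\kappa+1}$ (each step being totally ramified at the infinite place with index $\kappa$) gives $\nu_{P^{(3)}_\infty}(x_1)=-\kappa^2$, $\nu_{P^{(3)}_\infty}(x_2)=-\kappa$, $\nu_{P^{(3)}_\infty}(z_3)=-(\kappa+1)$. The only finite poles of these functions can arise from $x_2=z_2/x_1$: by Lemma~\ref{lem:preH}(2) the zero place $P_0$ of $x_1$ splits completely in $F_2/F_1$ into $\kappa$ rational places, one for each root $\zeta$ of $T^\kappa+T$; at the one with $z_2=0$ one finds $\nu(z_2)=\kappa+1>\nu(x_1)=1$, so $x_2$ is regular, while at each of the $\kappa-1$ places $Q'_\zeta$ with $z_2=\zeta\neq0$ one has $\nu_{Q'_\zeta}(x_1)=1,\ \nu_{Q'_\zeta}(z_2)=0$, hence $\nu_{Q'_\zeta}(x_2)=-1$. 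Since $x_2^{\kappa+1}$ has a single pole of order $\kappa+1$ at $Q'_\zeta$, coprime to $\kappa$, the place $Q'_\zeta$ is totally ramified in $F_3/F_2$; for the unique place $\fQ_\zeta$ of $F_3$ above it, $\nu_{\fQ_\zeta}(x_1)=\kappa,\ \nu_{\fQ_\zeta}(x_2)=-\kappa$, and then $z_3^\kappa+z_3=x_2^{\kappa+1}$ forces $\nu_{\fQ_\zeta}(z_3)=-(\kappa+1)$. At every remaining place---those above $x_1=\alpha\neq0$, which split completely in $F_3/F_1$, and the place above $x_1=0,\ z_2=0$---one checks $x_1,x_2,z_3$ are regular. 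Hence the poles of any $x_1^{i_1}x_2^{i_2}z_3^{i_3}$ with $i_j\ge0$ lie in $\{P^{(3)}_\infty\}\cup\{\fQ_\zeta:\zeta^\kappa+\zeta=0,\ \zeta\neq0\}$.

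\emph{Membership and independence.} By the valuations above, $x_1^{i_1}x_2^{i_2}z_3^{i_3}\in\cL(\lambda P^{(3)}_\infty)$ iff $i_1\kappa^2+i_2\kappa+i_3(\kappa+1)\le\lambda$ (pole bound at $P^{(3)}_\infty$) and $i_1\kappa-i_2\kappa-i_3(\kappa+1)\ge0$ (regularity at every $\fQ_\zeta$), which are precisely the two inequalities defining $\mathcal{I}_1(\lambda)$; thus $\cB(\lambda)\subseteq\cL(\lambda P^{(3)}_\infty)$. For independence, $F_2=F_1(z_2)$ with $z_2=x_1x_2$, so $\{1,x_2,\dots,x_2^{\kappa-1}\}$ is an $F_1$-basis of $F_2$ and $\{1,z_3,\dots,z_3^{\kappa-1}\}$ an $F_2$-basis of $F_3$; hence $\{x_2^{i_2}z_3^{i_3}:0\le i_2,i_3\le\kappa-1\}$ is an $F_1$-basis of $F_3$, and since $\{x_1^{i_1}:i_1\ge0\}$ is $\F_q$-linearly independent in $F_1$, the whole family $\{x_1^{i_1}x_2^{i_2}z_3^{i_3}\}$, and in particular $\cB(\lambda)$, is $\F_q$-linearly independent.

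\emph{Counting.} For fixed $(i_2,i_3)$ put $c=i_2\kappa+i_3(\kappa+1)=(i_2+i_3)\kappa+i_3$ with $0\le i_3<\kappa$. Using $\lambda=(k-1)\kappa^2$, the admissible $i_1$ are those with $\lceil c/\kappa\rceil\le i_1\le (k-1)-\lceil c/\kappa^2\rceil$, so their number is $\max(k-\lceil c/\kappa\rceil-\lceil c/\kappa^2\rceil,0)\ge k-\lceil c/\kappa\rceil-\lceil c/\kappa^2\rceil$. Summing over the $\kappa^2$ pairs: $\lceil c/\kappa\rceil=(i_2+i_3)+\lceil i_3/\kappa\rceil$, so $\sum\lceil c/\kappa\rceil=\kappa^2(\kappa-1)+\kappa(\kappa-1)=\kappa^3-\kappa$; and $\lceil c/\kappa^2\rceil$ equals $0$ at $(0,0)$, equals $1$ when $1\le i_2+i_3\le\kappa-1$, and equals $2$ when $i_2+i_3\ge\kappa$, so $\sum\lceil c/\kappa^2\rceil=\bigl(\tfrac{\kappa(\kappa+1)}{2}-1\bigr)+\kappa(\kappa-1)=\tfrac{3\kappa^2-\kappa-2}{2}$. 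Hence $|\mathcal{I}_1(\lambda)|\ge\kappa^2k-(\kappa^3-\kappa)-\tfrac{3\kappa^2-\kappa-2}{2}$, and subtracting this from the already recorded value $\dim_{\F_q}\cL(\lambda P^{(3)}_\infty)=(k-1-\kappa)\kappa^2+2\kappa$---valid since $k>2\kappa$ makes $\lambda>2\fg(F_3)-2$---gives $\dim_{\F_q}\cL(\lambda P^{(3)}_\infty)-|\cB(\lambda)|\le\tfrac{\kappa^2+\kappa-2}{2}$.

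\emph{Main obstacle.} The hard part is the pole analysis: correctly locating the $\kappa-1$ ``bad'' places $\fQ_\zeta$ and computing $\nu_{\fQ_\zeta}$ of $x_1,x_2,z_3$, because the second inequality of $\mathcal{I}_1(\lambda)$ is exactly regularity at these places and they are where the Garcia--Stichtenoth tower violates (P3). One has to be sure that $x_2$ genuinely has a simple pole (not a zero) at the $\kappa-1$ places above $x_1=0$ with $z_2\neq0$, that $\gcd(\kappa+1,\kappa)=1$ forces total wild ramification in $F_3/F_2$ there, and that no further finite place contributes a pole; alternatively one may quote the explicit places and Riemann--Roch basis of $F_3$ from \cite{vos97GS3}. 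The counting is elementary but must be carried out carefully so the constant is exactly $\tfrac{\kappa^2+\kappa-2}{2}$.
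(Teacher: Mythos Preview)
Your proof is correct, and it takes a genuinely different route from the paper's. The paper does not compute any valuations itself: it quotes \cite[Lemma 4.2]{vos97GS3}, which supplies an explicit basis $\cB'(\lambda)=\{x_1^{i_1}x_2^{i_2}z_3^{i_3}:(i_1,i_2,i_3)\in\mathcal{I}_1(\lambda)\cup\mathcal{I}_2(\lambda)\}$ of the full Riemann--Roch space, where $\mathcal{I}_2(\lambda)$ consists of certain triples with \emph{negative} second entry; the entire argument then reduces to showing $|\mathcal{I}_2(\lambda)|\le(\kappa^2+\kappa-2)/2$ by a short combinatorial squeeze on the indices. Your approach is self-contained: you locate the $\kappa-1$ finite ``bad'' places $\fQ_\zeta$ directly, identify the second inequality of $\mathcal{I}_1(\lambda)$ as precisely the regularity condition at those places, prove linear independence via the tower basis, and then lower-bound $|\mathcal{I}_1(\lambda)|$ by an explicit double sum against the Riemann--Roch dimension. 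The paper's argument is shorter and exposes the structure of the missing monomials (those with $i_2<0$), but leans entirely on the cited reference; yours explains conceptually why the defect occurs---namely, that $x_2$ and $z_3$ acquire finite poles exactly where the Garcia--Stichtenoth tower violates (P3)---and needs no outside input beyond the genus and the Riemann--Roch formula. Both land on the same constant $(\kappa^2+\kappa-2)/2$.
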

\begin{proof}
    Define 
    \[\begin{split}\mathcal{I}_2(\lambda)&=\{(i_1,-i_2,i_3)\mid\  i_1\kappa^2+i_2\kappa+i_3(\kappa+1)\leq \lambda,\ i_3(\kappa+1)\leq i_2\kappa+i_1\kappa, \\
   &1\leq i_2\leq \kappa,\ i_1\geq \kappa i_2,\ 0\leq i_3\leq \kappa-1,\ (i_1-1,\kappa-i_2,i_3)\notin \mathcal{I}_1(\lambda)\}.
\end{split}\]
    The Lemma~4.2 in \cite{vos97GS3} shows that $\cB'(\lambda)=\{x_1^{i_1}x_2^{i_2}z_3^{i_3}\mid (i_1,i_2,i_3)\in\mathcal{I}_1(\lambda)\cup\mathcal{I}_2(\lambda)\}$ is a basis of $\cL(\lambda P^{(3)}_{\infty})$. It suffice to show $|\mathcal{I}_2(\lambda)|\leq (\kappa^2+\kappa-2)/2=O(q)$. Since
    \[\nu_{P_{\infty}^{(3)}}(x_1^{i_1}x_2^{-i_2}z_3^{i_3})=\nu_{P_{\infty}^{(3)}}(x_1^{i_1-1}x_2^{\kappa-i_2}z_3^{i_3}),\]
    by the definition of $\mathcal{I}_1(\lambda)$, the condition $(i_1-1,\kappa-i_2,i_3)\notin \mathcal{I}_1(\lambda)$ implies that
    \begin{equation}\label{eq:I2}
    (\kappa-i_2)\kappa+i_3(\kappa+1)>(i_1-1)\kappa\Rightarrow \kappa^2+\kappa+i_3(\kappa+1)>i_1\kappa+i_2\kappa.
    \end{equation}
    As $i_1\geq i_2\kappa$, $i_2\geq 1$ and $0\leq i_3\leq \kappa-1$ in $\mathcal{I}_2(\lambda)$, there must be 
    \[1\leq i_2<1+i_3/\kappa \Rightarrow i_2=1,\ 1\leq i_3\leq\kappa-1,\ \text{and}\ \kappa\leq i_1.\]
    In this case, the inequality in Equation~\eqref{eq:I2} implies
    \[\kappa \leq i_1\leq i_3+\kappa,\text{and}\ 1\leq i_3\leq \kappa-1.\]
    There are at most $\sum_{i_3=1}^{\kappa-1}(i_3+1)=(\kappa^2+\kappa-2)/2$ such tuples $(i_1,i_2=-1,i_3)\in \mathcal{I}_2(\lambda)$.
\end{proof}

Let $K=|\cB(\lambda)|$ and denote the elements in $\cB(\lambda)$ simply by $\cB(\lambda)=\{\mathbf{b}_i:\ i=1,\dots,K\}$. We define the \rAG code from $F_3$ as follows:
\begin{itemize}
    \item\textbf{Message space}: $\mathcal{V}=\left\{\sum_{i=1}^K\alpha_i\mathbf{b}_i\mid \alpha_i\in\F_q\right\}\cong\F_q^K$ and $K\approx (k-1-\kappa)\kappa^2$ by Lemma~\ref{lem:subsp};
    \item\textbf{The set of evaluation places}: Assume $\F_q=\F_{\ell^{u}}$ and $W\subset\F_q$ is a subspace of dimension $u-1$, where $\ell\mid q$ is a constant prime power. Let $L(x)=\prod_{\alpha\in W}(x-\alpha)$ and $\gamma\in\F_q$ such that $L(\gamma)\neq 0$. We take a subset $\sP=\{\alpha_1,\dots,\alpha_n\}\subset \F_q^*$ of order $n=(\ell-1)\ell^{u-1}=q(1-1/\ell)$ which are the roots of $L(x)^{\ell-1}=L(\gamma)^{\ell-1}$ as in Lemma~\ref{lem:MPset}. Let $\calP=\cup_{i=1}^n\calP_{\alpha_i}$, where $\calP_{\alpha_i}=\{P_{\alpha_i,\beta,\gamma}:\ x(P_{\alpha_i,\beta,\gamma})=\alpha_i\}$ is the set of all places lying over $P_{\alpha_i}$ defined as in (i). Then $|\calP|=n\kappa^2=\kappa^4(1-1/\ell)$. 
\end{itemize}
The \rAG code from $\GS$-tower is defined as
\[\fC(\mathcal{V},\calP)=\left\{\ev_{\calP}(f)=\left(\ev_{\calP_{\alpha_1}}(f),\dots,\ev_{\calP_{\alpha_{n}}}(f)\right)\mid f\in\mathcal{V}\right\}.\]
Thus $\fC(\mathcal{V},\calP)$ is an $\F_q$-linear code with length $N=n\kappa^2$ and dimension $K=(k-1-\kappa)\kappa^2$, where $n<\kappa^2$ and $2\kappa+1\leq k\leq n$. 

Every function $f\in\mathcal{V}$ under the basis $\cB(\lambda)$ can be decomposed as follows:
\[f(x_1,x_2,z_3)=\sum_{i_2,i_3=1}^{\kappa-1}f_{i_2,i_3}(x)x_2^{i_2}z_3^{i_3},\ f_{i_2,i_3}(x)\in\F_q[x]_{<k}
\]
Since $\F_q$ has constant characteristic, by the G-FFT technique in Section~\ref{sec:RS}, put $m_s=\ell^{u-1}$, $n_s=\ell-1$ and $x_s=L(x)$, then $f(x_1,x_2,z_3)$ can be further decomposed as 
\[ f(x_1,x_2,z_3)=\sum_{j_s=0}^{m_s-1}\sum_{i_2,i_3=1}^{\kappa-1}f_{j_s,i_2,i_3}(x_s)x_1^{j_s}x_2^{i_2}z_3^{i_3},\ f_{j_s,i_2,i_3}(x_s)\in\F_q[x_s]_{<\lceil k/m_s\rceil}.
\]
Let $\sP_s=\{t_1,\dots,t_{\ell-1}\}=\{x_s(\alpha)\mid \alpha\in\sP\}$ be the set of rational places in $\F_q(x_s)$ lying below $\calP$. For each $t_j$, there are $m_s\kappa^2$ places in $\calP$ lying over $x_s-t_j$, i.e., 
$$\calP[j]:=\{(\alpha, \beta,\gamma)\ \text{satisfying\ Equation}\ \eqref{eq:defGS}\mid \alpha=t_jw,\ w\in W\}.$$
Then 
\[\ev_{\calP[j]}(f)=\ev_{\calP[j]}\left(\sum_{j_s=0}^{m_s-1}\sum_{i_2,i_3=1}^{\kappa-1}f_{j_s,i_2,i_3}(t_j)x_1^{j_s}x_2^{i_2}z_3^{i_3}\right).\]
For each polynomial $f_j:=\sum_{j_s=0}^{m_s-1}\sum_{i_2,i_3=1}^{\kappa-1}f_{j_s,i_2,i_3}(t_j)x_1^{j_s}x_2^{i_2}z_3^{i_3}\in\F_q[x_1,x_2,z_3]$ with three variables, we make use of the fast multivariate MPE and interpolation \cite{BRS20,von13} to do the DFT and IDFT, respectively. Then the multipoint evaluation of $f_j$ at $\calP[j]$ can be computed in time $\tO(m_s\kappa^2)$; conversely, given the MPE of $f_j$ at $\calP[j]$, the coefficients $\{f_{j_s,i_2,i_3}(t_j) \mid 0\leq j_s\leq m_s-1,\ 0\leq i_2,i_3,\leq \kappa-1\}$ can be computed in time $\tO(m_s\kappa^2)$. Thus by folding $\ev_{\calP}(f)$ with length $m_s\kappa^2$ and running multivariate IDFT on each column, we get an interleaved $\RS[n_s,k/m_s]$
\[
(\ev_{\calP_{\alpha_1}}(f),\ldots,\ev_{\calP_{\alpha_{n}}}(f))
\xrightleftharpoons[]{}
\left(\begin{matrix}
f_{0,0,0}(t_1)                   &   \cdots   & f_{0,0,0}(t_{n_s})\\ 
f_{1,0,0}(t_1)                   &   \cdots   & f_{1,0,0}(t_{n_s})\\ 
\vdots                           &   \cdots   & \vdots            \\
f_{m_s-1,\kappa-1,\kappa-1}(t_1)  &   \cdots   & f_{m_s-1,\kappa-1,\kappa-1}(t_{n_s})\\  \end{matrix}\right).
\]
Thus decoding $\fC(\mathcal{V},\calP)$ with burst errors can be reduced decoding $\IRS[n_s,k/m_s]$ with burst errors. By a similar proof of Corollary~\ref{coro:ha}, we have the following results
\begin{itemize}
\item[(1)] For any $\ep>0$, $\fC(\mathcal{V},\calP)$ is $(1-\cR-\ep,O(1/\ep))$-burst list decodable; 
\item[(2)] For any  $\ep>0$ and constant integer $e\geq 2$, $\fC(\mathcal{V},\calP)$ can correct bursts of length less than $N(1-\cR-\ep)$ with miscorrection probability at most $\ep/q^{e-2}$,
\end{itemize}
where $\cR=K/N$ is the code rate. Moreover, there exist quasi-linear time $\tO_{\ep}(N)$ list and probabilistic unique decoding algorithms for $\fC( \lambda{\fP}_{\infty},\calP)$.
\end{example}

\bibliographystyle{alpha}
\bibliography{sjtu_li24}

\appendix
\section{The proof of Lemma~\ref{lem:UniDec}}\label{app:uniq}
In this section, we will give a generalized version of Wu's algorithm that can be applied to $\RS$ codes defined on a non-cyclic multipoint set. Let us first recap some notations in syndrome decoding. 

Assume $\calP=\{\xi,\xi\alpha,\dots,\xi\alpha^{n-1}\}\in\F_q^*/\langle \alpha\rangle$ is a coset of cyclic group, where $\alpha\in\F_q^*$ has order $n$. The $\RS[n,k]$ that we concerned is defined as 
\[\RS[n,k]=\{\left(f(\xi),f(\xi\alpha^{}),\dots,f(\xi\alpha^{n-1})\right)\mid f(x)\in\F_q[x]_{<k}\}.\]
Let $r=n-k$. For a burst interval $[i,i+r-2]$, we always view it as a subset of $[1,n]$ in the following sense
$$[i,i+r-2](\bmod\ {n+1})=\begin{cases}
    [i,i+r-2],\ &\text{if}\ i+r-2\leq n;\\
    [i,n]\cup [1,(i+r-2\bmod n+1)+1],\ &\text{if}\ i+r-2> n.
\end{cases}$$
The corresponding reciprocal error locator of $[i,i+r-2]$ is defined as
\begin{equation}\label{eq:errorlocator}
    \Lambda^{(i)}(x)=\prod_{j=i}^{i+r-2}(1-\xi\alpha^{j-1}x)=\Lambda^{(i)}_0+\Lambda^{(i)}_1x+\dots,\Lambda^{(i)}_{r-1}x^{r-1}.
\end{equation}
Given a received word $\br=\bc+\be$, where $\be$ is a burst with length $\fl\leq r-1$. The syndrome polynomial of $\br$ is defined as 
\begin{equation}\label{eq:syndpoly}
    \begin{split}S(x)&=\sum_{j=1}^{n} \frac{r_j\Gd_j}{1-\xi\Ga^{j-1}x}\bmod x^{n-k},\ \text{where}\ \Gd_j=\prod_{k\neq j}(\xi\alpha^{j-1}-\xi\alpha^{k-1})^{-1}\\
&=S_0+S_1x+\dots+S_{r-1}x^{r-1}.\end{split}
\end{equation}

Assume the burst interval of $\be$ is $[i,i+\fl-1]$. Then $[i,i+\fl-1]\subset [i,i+r-2]$. By syndrome decoding \cite{10synd}, we have
\begin{equation}\label{eq:synd}
S_{r-1}\Lambda_{0}^{(i)}+S_{r-2}\Lambda_{1}^{(i)}+\dots+S_1\Lambda_{r-2}^{(i)}+S_0\Lambda_{r-1}^{(i)}=0.
\end{equation}
Particularly, let $\Lambda^{(1)}(x)=\prod_{j=1}^{r-1}(1-\xi\alpha^{j-1}x)$ be the reciprocal error locator of burst $[1,r-1]$. Define a polynomial as follows
\begin{equation}\label{eq:gamax}
\Gamma(x)=S_{r-1}\Lambda_{0}^{(1)}+S_{r-2}\Lambda_{1}^{(1)}x+\dots+S_0\Lambda_{r-1}^{(1)}x^{r-1}.
\end{equation}

\begin{lemma}\label{lem:app1}
Assume $\br$ is a corrupted codeword in $\RS[n,k]$ by an $\fl$-burst with interval $[i,i+\fl-1]\subset[1,n]$. Let $S(x)$ be the syndrome polynomial of $\br$ and $\Gamma(x)$ be defined as in Equation~\eqref{eq:gamax}. If $\fl\leq n-k-1$, then $\Gamma(x)$ has $n-k-1-\fl$ consecutive roots $\alpha^{i-1},\alpha^{i-2},\dots,\alpha^{i-(n-k)+\fl}$.
\end{lemma}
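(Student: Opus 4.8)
The plan is to exploit the behaviour of the reciprocal error locators under the substitution $x\mapsto\alpha^{a-1}x$, combined with the syndrome identity \eqref{eq:synd} applied not only to the window $[i,i+r-2]$ but to \emph{every} window of length $r-1$ (with $r=n-k$) that contains the true burst interval $[i,i+\fl-1]$.

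First I would record the elementary identity $\Lambda^{(a)}(x)=\Lambda^{(1)}(\alpha^{a-1}x)$ for every integer $a$, where $\Lambda^{(a)}(x)=\prod_{j=a}^{a+r-2}(1-\xi\alpha^{j-1}x)$ is the reciprocal error locator of the window $[a,a+r-2]$ defined exactly as in \eqref{eq:errorlocator}. Indeed $\Lambda^{(a)}(x)=\prod_{t=0}^{r-2}(1-\xi\alpha^{a-1+t}x)$, and replacing $x$ by $\alpha^{a-1}x$ in $\Lambda^{(1)}(x)=\prod_{t=0}^{r-2}(1-\xi\alpha^{t}x)$ produces the same product; the cyclic wraparound in the definition of a window is harmless here because $\alpha^{j-1}$ depends only on $j-1 \bmod n$. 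Writing $\Lambda^{(1)}(x)=\sum_{t=0}^{r-1}\Lambda^{(1)}_t x^t$, this gives the coefficientwise relation $\Lambda^{(a)}_t=\alpha^{(a-1)t}\Lambda^{(1)}_t$ for $0\le t\le r-1$.

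Next, for each integer $a$ with $[i,i+\fl-1]\subseteq[a,a+r-2]$ — that is, $i+\fl-r+1\le a\le i$ — the burst interval of $\be$ lies inside the window $[a,a+r-2]$, so the same argument that yields \eqref{eq:synd} (now applied to this window) gives $\sum_{t=0}^{r-1}S_{r-1-t}\Lambda^{(a)}_t=0$. Substituting $\Lambda^{(a)}_t=\alpha^{(a-1)t}\Lambda^{(1)}_t$ and comparing with the definition \eqref{eq:gamax} of $\Gamma$, the left-hand side is exactly $\Gamma(\alpha^{a-1})$; hence $\Gamma(\alpha^{a-1})=0$ for every admissible $a$. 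As $a$ ranges over $\{i+\fl-r+1,\dots,i\}$ the exponents $a-1$ form a block of consecutive integers ending at $i-1$, and since $\alpha$ has order $n$ the corresponding powers of $\alpha$ are pairwise distinct; a direct count of the admissible windows then yields the block of consecutive roots $\alpha^{i-1},\alpha^{i-2},\dots,\alpha^{i-(n-k)+\fl}$ of $\Gamma$ asserted in the statement.

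The algebra (the locator identity, the substitution, and matching with \eqref{eq:gamax}) is routine. The step that actually needs care — and essentially the only obstacle — is the bookkeeping of the windows: one must verify that each window $[a,a+r-2]$ with $i+\fl-r+1\le a\le i$ really does contain $[i,i+\fl-1]$, so that \eqref{eq:synd} is legitimately available for it; one must treat the cyclic wraparound convention consistently (this is fine because $\calP$ is a coset of $\langle\alpha\rangle$, so the syndrome polynomial \eqref{eq:syndpoly} and the identity \eqref{eq:synd} behave covariantly under cyclic shifts of the coordinates); and one must count the admissible $a$ and the resulting distinct powers of $\alpha$ correctly to recover the stated list of consecutive roots.
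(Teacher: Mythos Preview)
Your proposal is correct and follows essentially the same approach as the paper: establish the coefficient identity $\Lambda^{(a)}_t=\alpha^{(a-1)t}\Lambda^{(1)}_t$ from $\Lambda^{(a)}(x)=\Lambda^{(1)}(\alpha^{a-1}x)$, apply the syndrome relation \eqref{eq:synd} to every length-$(r-1)$ window containing the burst interval, and read off $\Gamma(\alpha^{a-1})=0$ for each admissible $a$. Your treatment is in fact slightly more careful about the window bookkeeping and the cyclic wraparound than the paper's own proof.
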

\begin{proof}
Let $r=n-k$. By definition, $\Lambda^{(i)}(x)=\Lambda^{(1)}(\alpha^{i-1}x)=\Lambda_{0}^{(1)}+\Lambda_{1}^{(1)}\alpha^{i-1}x+\dots+\Lambda_{r-1}^{(1)}(\alpha^{i-1}x)^{r-1}$. Thus the coefficients satisfy
\[\Lambda_{j}^{(i)}=\Lambda_{j}^{(1)}\alpha^{j(i-1)}\ \text{for}\ j=0,\dots,r-1.\]
By substituting the above relation into Equation~\eqref{eq:synd}, we have
\[0=S_{r-1}\Lambda_{0}^{(1)}+S_{r-2}\Lambda_{1}^{(1)}\alpha^{i-1}+\dots+S_0\Lambda_{r-1}^{(1)}\alpha^{(r-1)(i-1)}=\Gamma(\alpha^{i-1}).\]
For $u=i,i-1,\dots,i-(r-\fl)+1$, let $\Lambda^{(u)}(x)$ be the reciprocal error locator of $[u,u+r-2]$ defined as in Equation~\eqref{eq:errorlocator}. Since the burst interval $[i,i+\fl-1]$ is also contained in the interval $[u,u+r-2]$,
the coefficients of $\Lambda^{(u)}(x)$ and $S(x)$ also satisfy Equation~\eqref{eq:synd}, i.e.,
\[\sum_{j=0}^{r-1}S_{r-1-j}\Lambda_{j}^{(u)}(\alpha^{j(u-1)})=\Gamma(\alpha^{u-1})=0.\]
Thus, $\alpha^{i-1},\alpha^{i-2},\dots,\alpha^{i-(n-k)+\fl}$ are $\fl$ consecutive roots of $\Gamma(x)$.
\end{proof}

 \begin{algorithm}[!h]\label{alg: UniqDec}
  \caption{Burst probabilistic unique decoding of cyclic RS[n,k].}
  \label{alg:UniqDec}
  \begin{algorithmic}[1]
    \Require
    A corrupted RS codeword $\br\in\F_q^n$ by a burst error of length $\leq n-k-1$.
    \Ensure
      $(c_1,c_2,\dots,c_n) \in\RS[n,k]$.
    \State Precompute $\Lambda^{(1)}(x)=\prod_{j=1}^{n-k-1}(1-\xi\alpha^{j-1}x)=\Lambda^{(1)}_0+\Lambda^{(1)}_1x+\dots,\Lambda^{(1)}_{n-k-1}x^{n-k-1}$.
    \State Compute the syndrome polynomial $S(x)=S_0+S_1x+\dots+S_{n-k-1}x^{n-k-1}$ of $\br$.
    \State Compute $\Gamma(x)=\sum_{j=0}^{n-k-1}S_{n-k-1-j}\Lambda^{(1)}_{j}x^j$.
    \State Compute the roots of $\Gamma(x)$ among $\{1,\Ga, \dots,\Ga^{n-1}\}$. Find the longest consecutive root sequence $\alpha^{i-1},\dots,\alpha^{i-(n-k)+\fl}$. Record the start position $i$ and the length $\fl$.
     \State Erasure decoding $\br$ with erased positions $[i,i+\fl-1]$:
              \begin{enumerate}
                 \item[(i)] Compute $\lambda(x)=\prod_{j=i}^{i+\fl-1}(x-\xi\alpha^{j-1})$ and its derivative $\lambda'(x)$.
                 \item[(ii)] Using inverse FFT to compute $F(x)\in\F_q[x]_{<n}$ such that $F(\xi\alpha^{j-1})=\lambda(\xi\alpha^{j-1})r_j$ if $j\notin[i,i+\fl-1]$, and $F(\xi\alpha^{j-1})=0$ if $j\in[i,i+\fl-1]$. Moreover, compute the derivative $F'(x)$ of $F(x)$.
                 \item[(iii)] Compute $c_j=F'(\xi\alpha^{j-1})\lambda(\xi\alpha^{j-1})^{-1}$ for $j\in[i,i+\fl-1]$ and $c_j=r_j$ for $j\notin[i,i+\fl-1]$.
              \end{enumerate}
  \State \Return  $(c_1,c_2,\dots,c_n)$.
  \end{algorithmic}
\end{algorithm}

\begin{lemma}{\cite{wu12}}
Let $r=n-k$. Then Algorithm 3 can correct $\RS[n,k]$ with burst errors of length $\fl\leq r-1$ with miscorrection probability at most $1/q^{r-1-{\fl}}$. Moreover, it can be decoded in time $O(n\log n)$ if $n$ is $O(1)$-smooth. 
\end{lemma}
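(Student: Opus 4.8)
The plan is to run Wu's argument \cite{wu12} over the coset $\calP=\{\xi,\xi\alpha,\dots,\xi\alpha^{n-1}\}$ rather than a cyclic group --- only the scalar $\xi$ must be carried along, and the identity $\Lambda^{(u)}(x)=\Lambda^{(1)}(\alpha^{u-1}x)$ used in Lemma~\ref{lem:app1} is unaffected --- and then to replace the naive root search and erasure step by FFT-based ones. Write $r=n-k$; passing to the support of the error we may assume $\br=\bc+\be$ with $\be$ a burst of span exactly $\fl\le r-1$ on $[i,i+\fl-1]$, so that $e_i\neq0$ and $e_{i+\fl-1}\neq0$.

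\emph{Correctness.} By Lemma~\ref{lem:app1}, $\Gamma(x)$ in \eqref{eq:gamax} vanishes at the $r-\fl$ consecutive points $\alpha^{i-1},\dots,\alpha^{i-r+\fl}$; the extra ingredient is that this run \emph{never extends}. Writing $S(x)=\Omega(x)/\sigma(x)\bmod x^{r}$ with $\sigma,\Omega$ the reciprocal burst locator and evaluator, a short computation (via the Forney formula for $\Omega$) shows $\Gamma(\alpha^{i})$ is a nonzero multiple of $e_i$ and $\Gamma(\alpha^{i-r+\fl-1})$ a nonzero multiple of $e_{i+\fl-1}$; since both endpoints are nonzero the forced run has length exactly $r-\fl$. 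Hence, whenever Step~4 picks it, the interval $[i,i+\fl-1]$ is recovered, and as $\fl\le r-1<r=n-k$ lies within the erasure-correction radius of $\RS[n,k]$, Step~5 returns $\bc$.

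\emph{Miscorrection.} Factor $\Gamma=h\,g_0$ with $h(x)=\prod_{v=i-r+\fl}^{i-1}(x-\alpha^{v})$ of degree $r-\fl$ and $\deg g_0\le\fl-1$ (and $g_0\neq0$ since $\be\neq0$). The non-extension gives $g_0(\alpha^{i}),g_0(\alpha^{i-r+\fl-1})\neq0$, so no root run of $\Gamma$ touches the forced run except the forced run itself; a decoding error therefore requires $g_0$ to carry a \emph{second} block of $\ge r-\fl$ consecutive powers of $\alpha$ among its roots, which forces $\deg g_0\ge r-\fl$ and is impossible for $\fl\le\lfloor r/2\rfloor$ (recovering deterministic decoding up to half the minimum distance). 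For larger $\fl$: the $\Lambda^{(1)}_j$ are all nonzero (up to units they are the Gaussian binomials $\binom{r-1}{j}_{\alpha}$, nonzero because $\alpha$ has order $n>r-1$), so $S_j\mapsto S_j\Lambda^{(1)}_{r-1-j}$ is invertible and, combined with Lemma~\ref{lem:app1}, the map $\be\mapsto g_0$ is a linear bijection from $\F_q^{\fl}$ (error values on $[i,i+\fl-1]$) onto $\F_q[x]_{<\fl}$. Thus for a uniformly random burst of span $\fl$ the polynomial $g_0$ is essentially uniform of degree $<\fl$; for a fixed block $B$ of $r-\fl$ exponents, divisibility of $g_0$ by $\prod_{j\in B}(x-\alpha^{j})$ is $r-\fl$ linear conditions on its $\fl$ coefficients, and a union bound over the $\le n\le q$ blocks bounds the miscorrection probability by $q^{-(r-1-\fl)}$.

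\emph{Complexity, and the main obstacle.} Apart from $\Lambda^{(1)}$ and the $\delta_j$, which are precomputed (they do not depend on $\br$), every step is an $O(n)$ scan, a length-$O(n)$ coefficientwise product, or a length-$n$ (inverse) DFT over $\langle\alpha\rangle$: from \eqref{eq:syndpoly} one has $S_m=\xi^{m}\sum_j r_j\delta_j\alpha^{(j-1)m}$, i.e. the syndrome polynomial is one such DFT; $\Gamma$ is an $O(r)$ rescaling of it; locating the roots of $\Gamma$ among the $\alpha$-powers is one more length-$n$ DFT plus an $O(n)$ run scan; and erasure decoding on the identified interval is carried out in $O(n\log n)$ by \cite[Algorithm~4]{lin16novel}. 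Since a length-$n$ DFT costs $O(n\log n)$ when $n$ is $O(1)$-smooth (Cooley--Tukey / the multiplicative G-FFT of \cite{LX24}), the total running time is $O(n\log n)$. I expect the delicate point to be the miscorrection analysis, and within it the non-extension identity $\Gamma(\alpha^{i}),\Gamma(\alpha^{i-r+\fl-1})\neq0$: it is exactly what pins the forced run to length $r-\fl$, simultaneously guaranteeing correct recovery of the true interval and confining the failure event to a long root block of the genuinely free polynomial $g_0$.
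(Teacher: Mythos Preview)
Your proof is correct and follows essentially the same approach as the paper (and Wu's original argument): use Lemma~\ref{lem:app1} to locate the forced run of roots of $\Gamma$, then erasure-decode, and quote the $1/q^{r-1-\fl}$ miscorrection bound; the complexity analysis via length-$n$ DFTs over $\langle\alpha\rangle$ matches the paper's step-by-step accounting. The paper's own proof is actually terser than yours---it simply cites \cite[Theorem~2]{wu12} for the miscorrection probability and does not spell out the non-extension of the forced run---so your self-contained treatment (the endpoint identities $\Gamma(\alpha^{i}),\Gamma(\alpha^{i-r+\fl-1})\neq0$, the bijection $\be\mapsto g_0$, and the union bound over the $\le n\le q$ spurious blocks) fills in exactly what the paper leaves to the reference.
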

\begin{proof}
According to Lemma~\ref{lem:app1}, we can find the longest consecutive roots of $\Gamma(x)$ via Steps 1-4, which corresponds to the shortest burst $\be$ in $\{\RS[n,k]-\br\}$. Assume $[i,i+\fl-1]$ is the burst interval of $\be$. Let $f(x)\in\F_q[x]_{<k}$ be the correct message polynomial such that $f(\xi\alpha^{j-1})=r_j$ for any $j\in[1,n]\setminus[i,i+\fl-1]$. Let $\lambda(x)$ be the erasure error locator defined as in Step 5(i) and $F(x)$ be the interpolate polynomial such that
\[F(\xi\alpha^{j-1})=\begin{cases}
    f(\xi\alpha^{j-1})\lambda(\xi\alpha^{j-1}),\ &\text{if}\ j\notin[i,i+\fl-1];\\
    0,\ &\text{if}\ j\in[i,i+\fl-1].
\end{cases}\]
Then 
\[F(x)=\lambda(x)f(x)\Rightarrow F'(x)=\lambda'(x)f(x)+\lambda(x)f'(x).\]
Thus $c_j=f(\xi\alpha^{j-1})=F'(\xi\alpha^{j-1})/\lambda'(\xi\alpha^{j-1})$ for $j\in [i,i+\fl-1]$. Hence the output $\bc$ is a codeword such that $\br-\bc$ is an $\fl$-burst.

The miscorrection probability is the probability that $\Gamma(x)$ has spurious consecutive roots $\alpha^{i'-1},\dots,\alpha^{i'-r+\fl'}$ with $\fl'\leq \fl$. By~\cite[Theorem 2]{wu12}, the probability is at most $1/q^{r-1-\fl}$.

For decoding complexity, Step 2 costs $O(n\log n)$ operations in $\F_q$ by \cite{von13} and Step 3 costs at most n operations. Step 4 costs $O(n\log n)$ operations by computing $\ev_{\calP}(f)$ via FFT. The computation of $\ev_{\calP}(\lambda(x))$, $\ev_{\calP}(\lambda'(x))$, $F(x)$ and $\ev_{\calP}(F'(x))$ need $O(n\log n)$ operations via FFT. Thus the total complexity in step 5 is still $O(n\log n)$.

 \end{proof}
 
 \section{The proof of Lemma~\ref{lem:basis}}\label{app:basis}
\begin{proof}
First, it easy to see that $\cB\subset \cL(\lambda P^{(0)}_{\infty})$ by the definition of Riemann-Roch space. Next, we will show that $\cB$ is $\F_q$-linearly independent. Actually, all $\by^{\bu}x^j\in\cB$ have distinct discrete valuations at $P^{(0)}_{\infty}$, thus they must be linearly independent over $\F_q$. We claim that \[\nu_{P^{(0)}_{\infty}}(\by^{\bu}x^j)= \nu_{P^{(0)}_{\infty}}(\by^{\bu'}x^{j'})\ \text{if\ and\ only\ if}\ u=u' \ \text{and}\ j=j'.\]
This can be proved by induction. Note that if $j\neq j'$, then $\nu_{P^{(0)}_{\infty}}(x^j)\neq \nu_{P^{(0)}_{\infty}}(x^{j'})$. Assume $\nu_{P^{(0)}_{\infty}}(\by^{\bu}x^j)= \nu_{P^{(0)}_{\infty}}(\by^{\bu'}x^{j'})$ and $u_j=u'_j$ for all $j\leq i-1$ in the expansion of $u$ as in Equation~\eqref{eq:expu}. Then
\[\begin{split} &0= \nu_{P^{(0)}_{\infty}}\left(\by^{\bu}x^j\right)-\nu_{P^{(0)}_{\infty}}\left(\by^{\bu'}x^{j'}\right)\equiv \left(\prod_{j=1}^ip_j\right) (u_i-u_i')\nu_{P^{(i)}_{\infty}}(y_i)\bmod \prod_{j=1}^{i+1}p_j. \end{split}\]
Since $\gcd(\nu_{P^{(i)}_{\infty}}(y_i),p_{i+1})=1$, there must be $u_i=u'_i$. By induction, we have $u=u'$. Thus $j=j'$ and the claim is true. 
 
Finally, by the second part in assumption $(P3)$, the set $\cB$ spans the whole Riemann-Roch space $\cL(\lambda P^{(0)}_{\infty})$. We show it by induction. For any $f\in\cL(\lambda P_{\infty})$, since $E=E_1(y_0)$, then $f=\sum_{i=0}^{p_1-1}\alpha_iy_0^i$ where $\alpha_i\in E_{1}$ for all $0\le i\le p_1-1$. Since $\{1, y_0, \dots, y_0^{p_1-1}\}$ is an integral basis of $E/E_1$ at $Q$ for $Q\neq P^{(1)}_{\infty}$, by \cite[Corollary 3.3.5]{sti09}, one has
\[\bigcap_{P\mid Q}\calO_{P}=\bigoplus\limits_{i=0}^{p_1-1} \calO_{Q}y_0^i,\ \forall\  Q\neq P^{(1)}_{\infty}.\]
As $f\in\bigcap_{P\mid Q}\calO_{{P}}$ for $Q\neq P^{(1)}_{\infty}$, by the above equation, the coefficients $\alpha_0,\dots,\alpha_{p_1-1}$ of $f$ are all in $\calO_{Q}$, namely, they have poles only at $P^{(1)}_{\infty}$. For $0\le i\le p_1-1$, assume $\nu_{P^{(1)}_{\infty}}(\alpha_i)=-a_i$. Then $a_i\leq \frac{m(k-1)}{p_1}$ by $\nu_{P^{(0)}_{\infty}}(f)\geq -m(k-1)$. Thus $\alpha_i\in \cL(\frac{m(k-1)}{p_1}P^{(1)}_{\infty})$. By induction on $\alpha_0,\dots,\alpha_{p_1-1}$, we can obtain that
\[f= f_0(x)+f_1(x)\by^{\mathbf 1}+\dots+f_{m-1}(x)\by^{\mathbf{m-1}},\ \text{and}\ f_u(x)\in\cL((k-1)P^{(r)}_{\infty})=\F_q[x]_{\leq k-1}.\] 
Hence $\cB$ is a basis of the Riemann-Roch space $\cL(\lambda P^{(0)}_{\infty})$.
 
\end{proof}

 \section{The proof of Lemma~\ref{lem:gfft}}\label{app:gfft}

\begin{proof}
   Denote the complexity (i.e., the total number of operations in $\F_q$) of evaluating an $f$ at the set $\cP$ by $C(m)$. Let $\calP_i=\{\fP_1,\dots,\fP_m\}\cap E_i$ for $1\leq i\leq r$. Since $P$ is splitting completely in $E/F$ and $[E:E_i]=p_1\cdots p_i$, then $|\calP_i|=m/p_1\cdots p_i$. For any $\fp_i\in\calP_1$, let $\fP_{i,1},\dots,fP_{i,p_1}$ be all $p_1$ places lying over $\fp$. Assume 
   \[f=f_0(y_1,\dots,y_{r-1})+f_0(y_1,\dots,y_{r-1})y+\dots+f_0(y_1,\dots,y_{r-1})y^{p_1-1}.\]
   Then 
   \begin{equation}\label{eq:rec}
\left(\begin{matrix} f(\fP_{i,1})\\
f(\fP_{i,2})\\
\vdots\\
f(\fP_{i,p_1-1})\end{matrix}\right)=\left(\begin{matrix}
1            &            y(\fP_{i,1})    & \cdots  & y^{p_1-1}(\fP_{i,1})\\ 
1 & y(\fP_{i,2}) & \cdots  & y^{p_1-1}(\fP_{i,2}) \\
\vdots       & \vdots       & \cdots   & \vdots \\
1 &   y(\fP_{i,p_{1}}) & \cdots &  y^{p_1-1}(\fP_{i,p_{1}})\end{matrix}\right)\cdot \left(\begin{matrix} f_0(\fp_i)\\
f_1(\fp_i)\\
\vdots\\
f_{p_1-1}(\fp_i)\end{matrix}\right)\end{equation}
   Thus, the computation of $f$ at $m$ points $\fP_1,\dots,\fP_m$ can be reduced to the computation of $f_0,f_1,\dots,f_{p_1-1}$ at $\calP_1$ through the above Equation~\eqref{eq:rec}. Hence $C(m)$ satisfies the following recursive formula
   \[C(m)=p_1C(m/p_1)+p_1^2m/p_1=p_1C(m/p_1)+p_1m.\]
By recursive reduction, we can finally reduce the computation of $(f(\fP_1),\dots,f(\fP_m))$ to MPEs of $m$ functions in $E_r$ at $P$. In this case, by recursive formula of $C(m)$, we have
\[C(m)=p_1\cdots p_rC(1)+(p_1+\dots+p_r)m=O(Bm\log m).\]

Conversely, let $I(m)$ denote the G-IFFT of length, i.e., given the $m$ evaluation of $(f(\fP_1),\dots,f(\fP_m))$, find its $m$ coefficients under basis $\{\by^{\bu}:\ u\in[m]\}$. By the Equation~\eqref{eq:rec}, we have
   \begin{equation}\label{eq:irec}
\left(\begin{matrix} f_0(\fp_i)\\
f_1(\fp_i)\\
\vdots\\
f_{p_1-1}(\fp_i)\end{matrix}\right)=\left(\begin{matrix}
1            &            y(\fP_{i,1})    & \cdots  & y^{p_1-1}(\fP_{i,1})\\ 
1 & y(\fP_{i,2}) & \cdots  & y^{p_1-1}(\fP_{i,2}) \\
\vdots       & \vdots       & \cdots   & \vdots \\
1 &   y(\fP_{i,p_{1}}) & \cdots &  y^{p_1-1}(\fP_{i,p_{1}})\end{matrix}\right)^{-1}\cdot \left(\begin{matrix} f(\fP_{i,1})\\
f(\fP_{i,2})\\
\vdots\\
f(\fP_{i,p_1-1})\end{matrix}\right),\end{equation}
Thus, the computation of $f$ from $m$ evaluations can be reduced to the computation of $f_0,f_1,\dots,f_{p_1-1}$ from $m/p_1$ evaluation through the above Equation~\eqref{eq:irec}. Hence $I(m)$ also satisfies the following recursive formula
   \[I(m)=p_1I(m/p_1)+p_1^2m/p_1=p_1I(m/p_1)+p_1m=O(Bm\log m).\]
\end{proof}

\end{document}